\let\temp\rmdefault
\let\rmdefault\temp
\renewcommand{\familydefault}{\rmdefault}
\tikzstyle{vertex1}=[auto=left,circle,draw,fill=black!25,minimum size=10pt,inner sep=2pt]
\tikzstyle{Dot}=[auto=left,circle,fill=black!100,minimum size=2pt,inner sep=1.5pt]
\tikzstyle{dot}=[auto=left,circle,fill=black!100,minimum size=0.5pt,inner sep=0.5pt]
\tikzstyle{Dot2}=[auto=left,circle,fill=black!100,minimum size=1pt,inner sep=1pt]
\newtheorem{Theorem}{Theorem}[section]
\newtheorem{Lemma}{Lemma}[section]
\newtheorem{Definition}{Definition}[section]
\newtheorem{Example}{Example}[section]
\newcounter{cases}
\newcounter{subcases}
\newcounter{subsubcases}
\newenvironment{mycases}
{%
	\setcounter{cases}{0}%
	\def\case
	{%
		\par\noindent
		\refstepcounter{cases}%
		\textbf{Case \thecases.}
	}%
}
{%
	\par
}
\newenvironment{subcases}
{%
	\setcounter{subcases}{0}%
	\def\subcase
	{%
		\par\noindent
		\refstepcounter{subcases}%
		\textit{\textbf{Subcase (\thesubcases):}}
	}%
}
{%
}
\newenvironment{subsubcases}
{%
	\setcounter{subsubcases}{0}%
	\def\subsubcase
	{%
		\par\noindent
		\refstepcounter{subsubcases}%
		\textit{\textbf{\thesubcases-(\thesubsubcases):}}
	}%
}
{%
}
\renewcommand*\thecases{\arabic{cases}}
\renewcommand*\thesubcases{\Roman{subcases}}
\renewcommand*\thesubsubcases{\roman{subsubcases}}
\newcommand*{\mybox}[1]{\framebox{#1}}
\newcounter{boxlblcounter}  
\newcommand{\makeboxlabel}[1]{\fbox{#1.}\hfill}
\theoremstyle{definition}
\title[Fault Tolerant Metric Dimension of Leafless Cacti Graphs]{Fault Tolerant Metric Dimensions of Leafless Cacti Graphs with Application in Supply Chain Management}
\author[T. Asif]{Tauseef Asif}
\address{Department of Mathematics and Statistics, The University of Haripur, Pakistan}
\email{tasif@uoh.edu.pk}
\author[G. Haidar]{Ghulam Haidar}
\address{Department of Mathematics and Statistics, The University of Haripur, Pakistan}
\email{haidersehani012@gmail.com}
\author[F. Yousafzai]{Faisal Yousafzai}
\address{Department of Basic Sciences and Humanities, National University of Sciences and Technology, Islamabad, Pakistan}
\email{yousafzaimath@gmail.com}
\author[M. U. I. Khan]{Murad ul Islam Khan*}\thanks{*Corresponding author}
\address{Department of Mathematics and Statistics, The University of Haripur, Pakistan}
\email{muradulislam@uoh.edu.pk}
\author[Q. Khan]{Qaisar Khan}
\address{Department of Mathematics and Statistics, The University of Haripur}
\email{qaisar.khan@uoh.edu.pk}
\author[R. Fatima]{Rakea Fatima}
\address{Department of Mathematics and Statistics, The University of Haripur}
\email{}
\newcommand{\C}{\ensuremath{\mathcal{C}}}
\newcommand{\E}{\ensuremath{\mathcal{E}}}
\newcommand{\G}{\ensuremath{\mathcal{G}}}
\newcommand{\PP}{\ensuremath{\mathcal{P}}}
\newcommand{\A}{\ensuremath{\mathcal{A}}}
\newcommand{\B}{\ensuremath{\mathcal{B}}}
\newcommand{\SSS}{\ensuremath{\mathcal{S}}}
\newcommand{\W}{\ensuremath{\mathcal{W}}}
\newcommand{\V}{\ensuremath{\mathcal{V}}}
\newcommand{\BB}{\ensuremath{\mathcal{B}}}
\newcommand{\uu}{\ensuremath{\mathcal{u}}}
\newcommand{\vv}{\ensuremath{\mathcal{v}}}
\newcommand{\ww}{\ensuremath{\mathcal{w}}}
\newcommand{\aaa}{\ensuremath{\mathcal{a}}}
\newcommand{\bb}{\ensuremath{\mathcal{b}}}
\newcommand{\dd}{\ensuremath{\mathcal{d}}}
\newcommand{\n}{\ensuremath{\mathcal{n}}}
\newcommand{\R}{\ensuremath{\mathcal{R}}}
\newcommand{\m}{\ensuremath{\mathcal{m}}}
\newcommand{\rr}{\ensuremath{\mathcal{r}}}
\newcommand{\kk}{\ensuremath{\mathcal{k}}}
\newcommand{\el}{\ensuremath{\mathcal{l}}}
\newcommand{\ii}{\ensuremath{\mathcal{i}}}
\newcommand{\jj}{\ensuremath{\mathcal{j}}}
\newcommand{\q}{\ensuremath{\mathcal{q}}}
\newcommand{\p}{\ensuremath{\mathcal{p}}}
\newcommand{\s}{\ensuremath{\mathcal{s}}}
\begin{document}

\begin{abstract}
	A resolving set for a simple graph $\G$ is a subset of vertex set of $\G$ such that it distinguishes all vertices of $\G$ using the shortest distance from this subset. This subset is a metric basis if it is the smallest set with this property. A resolving set is a fault tolerant resolving set if the removal of any vertex from the subset still leaves it a resolving set. The smallest set satisfying this property is the fault tolerant metric basis, and the cardinality of this set is termed as fault tolerant metric dimension of $\G$, denoted by $\beta'(\G)$. In this article, we determine the fault tolerant metric dimension of bicyclic graphs of type-I and II and show that it is always $4$ for both types of graphs. We then use these results to form our basis to consider leafless cacti graphs, and calculate their fault tolerant metric dimensions in terms of \textit{inner cycles} and \textit{outer cycles}. We then consider a detailed real world example of supply and distribution center management, and discuss the application of fault tolerant metric dimension in such a scenario. We also briefly discuss some other scenarios where leafless cacti graphs can be used to model real world problems.
\end{abstract}

	\maketitle

\section{Introduction}
A vertex $\ww$ of a simple connected graph $\G=(\V,\E)$ is said to resolve the vertices $\uu,\vv$ of $\G$, if $\dd_{\G}(\uu,\ww)\neq \dd_{\G}(\vv,\ww)$, where $\dd_{\G}$ denotes the length of the shortest path between the given vertices. The distance vector of a vertex $\vv \in \V(\G)$ with respect to an ordered subset $\W=\{\ww_{1},\ww_{2}, \cdots, \ww_{{\kk}}\} \subset \G$ is defined by $\R(\vv|\W)=\{\dd(\vv,\ww_1), \dd(\vv, \ww_2), \cdots, \dd(\vv,\ww_{{\kk}})\}$. $\W$ is called a resolving set of $\G$ if all vertices of $\G$ have distinct distance vectors with respect to $\W$. If $\W$ is the smallest resolving set, it is termed as metric basis, and the cardinality of $\W$ is called the metric dimension of $\G$. These ideas were first introduced as \textit{\say{locating set}} and \textit{\say{location number}} by Slater \cite{sla1}. Hararay and Melter \cite{melter1976metric} independently presented these ideas as \textit{\say{resolving set}} and \textit{\say{metric basis}}. 

Since the inception, these concepts have been studied extensively. Exact values for metric dimensions of graph families have been considered in \cite{CHARTRAND200099,javaid2008families,shao2018metric,mohamed2023metric,math11040869, 10142411}, while the bounds on metric dimension for some other graph families are solved in \cite{caceres2007metric,CACERES20122618,GENESON2022123,sedlar2021bounds, wang2024metric}. Complexity problem for finding the metric dimensions of graphs has been discussed in \cite{diaz2011planar,diaz2017complexity}.

Many other graph invariants related to metric dimension have been defined in the literature e.g., local metric dimensions \cite{okamoto2010local}, $\mathcal{\kk}$-metric dimension \cite{estrada2013k}, and mixed metric dimensions \cite{kelenc2017mixed} etc.

Hernado et al. \cite{hernando2008fault} considered the problem for a simple connected graph $\G$ where a vertex of metric basis set $\W \subset \V(\G)$ is unavailable. Since the unique identification of all landmarks (vertices) is not possible due to this unavailability of $\ww \in \W$, an error or fault is introduced in the system. To overcome this phenomenon, they introduced the concept of fault tolerant metric basis. It is defined as the smallest resolving set $\W'$, such that the set $\W'-\{\ww : \ww \in \W'\}$ is again a resolving set for all $\ww \in \W'$. The cardinality of such a $\W'$ is called fault tolerant metric dimension, denoted by $\beta'$.

Many researchers considered this concept and studied it for different graphs. Hayat et al. \cite{9162022} studied the fault tolerant metric dimension of some families of interconnection networks. Basak et al \cite{BASAK202066} determined the exact values of the fault tolerant metric dimension for the circulant graphs $\C_{\n}(1,2,3)$ for all finite values of $\n$. Arulperumjothi et al. calculated $\beta'(\G)$ where $\G$ is a fractal cubic network. Prabhu et al. \cite{PRABHU2022126897} investigated this problem for Butterfly, Benes and silicates networks. For further studies, the articles \cite{Faheem2022OnTF, Liu2021ftmd,10.1371/journal.pone.0290411, Sharma2023, SharmaRazaBhat+2023+177+187} and the references therein may be consulted.

\section{Preliminaries}

Let $\G$ be a simple connected graph and $\W$ be a fault tolerant metric basis of $\G$. A subgraph $\G'$ of $\G$ is an induced subgraph, if $\V(\G') \subseteq \V(\G)$ and $\E(\G')$ contains all edges present in $\E(\G)$, among vertices of $\V(\G')$. We define the set $\W \cap \V(\G')$ as a fault tolerant metric generator of $\G'$ with respect to $\G$, and denote it by $\SSS_{\G'/\G}$. When the graph $\G$ is understood, we will just use the notion, fault tolerant metric generator $\SSS_{\G'}$. In other words, we say that the induced sub-graph $\G'$, contributes $|\SSS_{\G'}|$ vertices to the fault tolerant metric basis $\W$. 

A subset $\W_{\G'}$ of $\V(\G)$, is a fault tolerant distance producer for an induced subgraph $\G'$ of $\G$, if $\rr\left(\uu|\W_{\G'}-\{\ww'\}\right) \neq \rr(\vv|\W_{\G'}-\{\ww'\})$, for all $\uu, \vv \in \G'$ and $\ww' \in \W_{\G'}$. From these definitions, we can easily see that $|\W_{\G'}| \geq |\SSS_{\G'}|$.

\begin{Definition}
A simple connected graph $\G$ with $|\V(\G)|=\n$ is said to be bicyclic, if $%
|\E(\G)|=\n+1$.
\end{Definition}

There are three types of bicyclic graphs containing no leaves. We will only deal with the bicyclic graphs of type-I and II. These are defined as follows.
\begin{enumerate}[label=\Roman*.]
	\item $\C_{\n,\m}$ obtained by joining two cycles $\C_{\n}$ and $\C_{\m}$ through a single vertex. Let us consider that the cycle $\C_{\m}$ is attached to the vertex $\vv_{\n}$ of first cycle.
	
	\begin{figure}[H]
		\centering
		\tikzset{every picture/.style={line width=0.95pt}} 
		\resizebox{7.5cm}{3cm}{
		\begin{tikzpicture}[x=0.75pt,y=0.75pt,yscale=-1,xscale=1]
			
			\draw  [fill={rgb, 255:red, 0; green, 0; blue, 0 }  ,fill opacity=1 ] (310.59,117.97) .. controls (310.59,115.12) and (312.79,112.81) .. (315.51,112.81) .. controls (318.23,112.81) and (320.43,115.12) .. (320.43,117.97) .. controls (320.43,120.82) and (318.23,123.13) .. (315.51,123.13) .. controls (312.79,123.13) and (310.59,120.82) .. (310.59,117.97) -- cycle ;
			\draw   (158,117.97) .. controls (158,94.24) and (193.26,75) .. (236.75,75) .. controls (280.25,75) and (315.51,94.24) .. (315.51,117.97) .. controls (315.51,141.7) and (280.25,160.93) .. (236.75,160.93) .. controls (193.26,160.93) and (158,141.7) .. (158,117.97) -- cycle ;
			\draw   (315,117.97) .. controls (315,94.24) and (350.26,75) .. (393.75,75) .. controls (437.25,75) and (472.51,94.24) .. (472.51,117.97) .. controls (472.51,141.7) and (437.25,160.93) .. (393.75,160.93) .. controls (350.26,160.93) and (315,141.7) .. (315,117.97) -- cycle ;
			\draw  [fill={rgb, 255:red, 0; green, 0; blue, 0 }  ,fill opacity=1 ] (283.59,85.97) .. controls (283.59,83.12) and (285.79,80.81) .. (288.51,80.81) .. controls (291.23,80.81) and (293.43,83.12) .. (293.43,85.97) .. controls (293.43,88.82) and (291.23,91.13) .. (288.51,91.13) .. controls (285.79,91.13) and (283.59,88.82) .. (283.59,85.97) -- cycle ;
			\draw  [fill={rgb, 255:red, 0; green, 0; blue, 0 }  ,fill opacity=1 ] (338.59,85.97) .. controls (338.59,83.12) and (340.79,80.81) .. (343.51,80.81) .. controls (346.23,80.81) and (348.43,83.12) .. (348.43,85.97) .. controls (348.43,88.82) and (346.23,91.13) .. (343.51,91.13) .. controls (340.79,91.13) and (338.59,88.82) .. (338.59,85.97) -- cycle ;
			\draw  [fill={rgb, 255:red, 0; green, 0; blue, 0 }  ,fill opacity=1 ] (283.59,149.97) .. controls (283.59,147.12) and (285.79,144.81) .. (288.51,144.81) .. controls (291.23,144.81) and (293.43,147.12) .. (293.43,149.97) .. controls (293.43,152.82) and (291.23,155.13) .. (288.51,155.13) .. controls (285.79,155.13) and (283.59,152.82) .. (283.59,149.97) -- cycle ;
			\draw  [fill={rgb, 255:red, 0; green, 0; blue, 0 }  ,fill opacity=1 ] (338.59,149.97) .. controls (338.59,147.12) and (340.79,144.81) .. (343.51,144.81) .. controls (346.23,144.81) and (348.43,147.12) .. (348.43,149.97) .. controls (348.43,152.82) and (346.23,155.13) .. (343.51,155.13) .. controls (340.79,155.13) and (338.59,152.82) .. (338.59,149.97) -- cycle ;
			\draw  [fill={rgb, 255:red, 0; green, 0; blue, 0 }  ,fill opacity=1 ] (233.59,74.97) .. controls (233.59,72.12) and (235.79,69.81) .. (238.51,69.81) .. controls (241.23,69.81) and (243.43,72.12) .. (243.43,74.97) .. controls (243.43,77.82) and (241.23,80.13) .. (238.51,80.13) .. controls (235.79,80.13) and (233.59,77.82) .. (233.59,74.97) -- cycle ;
			\draw  [fill={rgb, 255:red, 0; green, 0; blue, 0 }  ,fill opacity=1 ] (390.59,75.97) .. controls (390.59,73.12) and (392.79,70.81) .. (395.51,70.81) .. controls (398.23,70.81) and (400.43,73.12) .. (400.43,75.97) .. controls (400.43,78.82) and (398.23,81.13) .. (395.51,81.13) .. controls (392.79,81.13) and (390.59,78.82) .. (390.59,75.97) -- cycle ;
			\draw  [draw opacity=0][dash pattern={on 4.5pt off 4.5pt}] (262.05,145.03) .. controls (254.3,146.6) and (245.75,147.47) .. (236.75,147.47) .. controls (201.68,147.47) and (173.25,134.26) .. (173.25,117.97) .. controls (173.25,102.74) and (198.1,90.2) .. (229.98,88.63) -- (236.75,117.97) -- cycle ; \draw [dash pattern={on 4.5pt off 4.5pt}] [dash pattern={on 4.5pt off 4.5pt}]  (258.99,145.61) .. controls (252.07,146.81) and (244.58,147.47) .. (236.75,147.47) .. controls (201.68,147.47) and (173.25,134.26) .. (173.25,117.97) .. controls (173.25,102.74) and (198.1,90.2) .. (229.98,88.63) ;  \draw [shift={(262.05,145.03)}, rotate = 171.56] [fill={rgb, 255:red, 0; green, 0; blue, 0 }  ][dash pattern={on 3.49pt off 4.5pt}][line width=0.08]  [draw opacity=0] (8.93,-4.29) -- (0,0) -- (8.93,4.29) -- cycle    ;
			\draw  [draw opacity=0][dash pattern={on 4.5pt off 4.5pt}] (412.02,89.71) .. controls (438.19,93.35) and (457.25,104.62) .. (457.25,117.97) .. controls (457.25,134.26) and (428.82,147.47) .. (393.75,147.47) .. controls (384.96,147.47) and (376.59,146.64) .. (368.97,145.14) -- (393.75,117.97) -- cycle ; \draw [dash pattern={on 4.5pt off 4.5pt}] [dash pattern={on 4.5pt off 4.5pt}]  (412.02,89.71) .. controls (438.19,93.35) and (457.25,104.62) .. (457.25,117.97) .. controls (457.25,134.26) and (428.82,147.47) .. (393.75,147.47) .. controls (386.06,147.47) and (378.69,146.83) .. (371.86,145.67) ; \draw [shift={(368.97,145.14)}, rotate = 8.42] [fill={rgb, 255:red, 0; green, 0; blue, 0 }  ][dash pattern={on 3.49pt off 4.5pt}][line width=0.08]  [draw opacity=0] (8.93,-4.29) -- (0,0) -- (8.93,4.29) -- cycle    ; 
			
			\draw (283,53.4) node [anchor=north west][inner sep=0.75pt]  [font=\footnotesize]  {$\vv_{1}$};
			\draw (233,43.4) node [anchor=north west][inner sep=0.75pt]  [font=\footnotesize]  {$\vv_{2}$};
			\draw (282,173.4) node [anchor=north west][inner sep=0.75pt]  [font=\footnotesize]  {$\vv_{\n-1}$};
			\draw (283,112.4) node [anchor=north west][inner sep=0.75pt]  [font=\footnotesize]  {$\vv_{\n}$};
			\draw (338,53.4) node [anchor=north west][inner sep=0.75pt]  [font=\footnotesize]  {$\vv_{\n+1}$};
			\draw (390,43.4) node [anchor=north west][inner sep=0.75pt]  [font=\footnotesize]  {$\vv_{\n+2}$};
			\draw (337,173.4) node [anchor=north west][inner sep=0.75pt]  [font=\footnotesize]  {$\vv_{\n+\m-1}$};
			\draw (182,164.4) node [anchor=north west][inner sep=0.75pt]  [font=\Large]  {$\C_{\n}$};
			\draw (412,165.4) node [anchor=north west][inner sep=0.75pt]  [font=\Large]  {$\C_{\m}$};

		\end{tikzpicture}
		
		}
		\caption{bicyclic graph of type-I}
		\label{bicyclictypei}
	\end{figure}
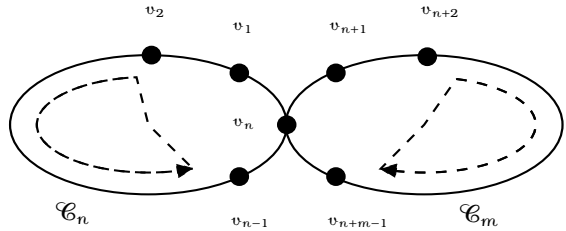
	Note that the total number of vertices is $\n+\m -1$, while total number of edges in $\n+\m$. Also, it is worthwile to mention that the vertices of $\C_{\n}$ are indexed in an anti-clockwise direction, while vertices of $\C_{\m}$ are indexed clockwise. These graphs are also called \say{\textit{infinity-graphs}} \cite{you2014maximal}.
	\item $\C_{\n,\rr,\m}$ obtained after joining two disjoint cycles, $\C_{\n}$ and $\C_{\m}$, by a path $\PP_{\rr}( \rr \geq 1)$ of length $\rr$. Let us consider the following figure. 
	
	\begin{figure}[H]
		\centering
		\tikzset{every picture/.style={line width=0.95pt}} 
		\resizebox{7.5cm}{3.3cm}{%
				\begin{tikzpicture}[x=0.75pt,y=0.75pt,yscale=-1,xscale=1]
				
				\draw  [fill={rgb, 255:red, 0; green, 0; blue, 0 }  ,fill opacity=1 ] (170.59,67.97) .. controls (170.59,65.12) and (172.79,62.81) .. (175.51,62.81) .. controls (178.23,62.81) and (180.43,65.12) .. (180.43,67.97) .. controls (180.43,70.82) and (178.23,73.13) .. (175.51,73.13) .. controls (172.79,73.13) and (170.59,70.82) .. (170.59,67.97) -- cycle ;
				\draw  [fill={rgb, 255:red, 0; green, 0; blue, 0 }  ,fill opacity=1 ] (225.91,123.92) .. controls (225.91,121.07) and (228.11,118.76) .. (230.83,118.76) .. controls (233.55,118.76) and (235.76,121.07) .. (235.76,123.92) .. controls (235.76,126.77) and (233.55,129.08) .. (230.83,129.08) .. controls (228.11,129.08) and (225.91,126.77) .. (225.91,123.92) -- cycle ;
				\draw  [fill={rgb, 255:red, 0; green, 0; blue, 0 }  ,fill opacity=1 ] (213.59,88.97) .. controls (213.59,86.12) and (215.79,83.81) .. (218.51,83.81) .. controls (221.23,83.81) and (223.43,86.12) .. (223.43,88.97) .. controls (223.43,91.82) and (221.23,94.13) .. (218.51,94.13) .. controls (215.79,94.13) and (213.59,91.82) .. (213.59,88.97) -- cycle ;
				\draw   (119,123.92) .. controls (119,93.03) and (144.03,68) .. (174.92,68) .. controls (205.8,68) and (230.83,93.03) .. (230.83,123.92) .. controls (230.83,154.8) and (205.8,179.83) .. (174.92,179.83) .. controls (144.03,179.83) and (119,154.8) .. (119,123.92) -- cycle ;
				\draw  [fill={rgb, 255:red, 0; green, 0; blue, 0 }  ,fill opacity=1 ] (208.59,163.97) .. controls (208.59,161.12) and (210.79,158.81) .. (213.51,158.81) .. controls (216.23,158.81) and (218.43,161.12) .. (218.43,163.97) .. controls (218.43,166.82) and (216.23,169.13) .. (213.51,169.13) .. controls (210.79,169.13) and (208.59,166.82) .. (208.59,163.97) -- cycle ;
				\draw  [fill={rgb, 255:red, 0; green, 0; blue, 0 }  ,fill opacity=1 ] (468.17,68.67) .. controls (468.17,71.52) and (465.97,73.83) .. (463.25,73.83) .. controls (460.53,73.83) and (458.33,71.52) .. (458.33,68.67) .. controls (458.33,65.83) and (460.53,63.52) .. (463.25,63.52) .. controls (465.97,63.52) and (468.17,65.83) .. (468.17,68.67) -- cycle ;
				\draw  [fill={rgb, 255:red, 0; green, 0; blue, 0 }  ,fill opacity=1 ] (407.84,123.72) .. controls (407.84,126.57) and (405.64,128.88) .. (402.92,128.88) .. controls (400.2,128.88) and (398,126.57) .. (398,123.72) .. controls (398,120.88) and (400.2,118.57) .. (402.92,118.57) .. controls (405.64,118.57) and (407.84,120.88) .. (407.84,123.72) -- cycle ;
				\draw  [fill={rgb, 255:red, 0; green, 0; blue, 0 }  ,fill opacity=1 ] (420.17,158.67) .. controls (420.17,161.52) and (417.97,163.83) .. (415.25,163.83) .. controls (412.53,163.83) and (410.33,161.52) .. (410.33,158.67) .. controls (410.33,155.83) and (412.53,153.52) .. (415.25,153.52) .. controls (417.97,153.52) and (420.17,155.83) .. (420.17,158.67) -- cycle ;
				\draw   (514.76,123.72) .. controls (514.76,154.61) and (489.72,179.64) .. (458.84,179.64) .. controls (427.96,179.64) and (402.92,154.61) .. (402.92,123.72) .. controls (402.92,92.84) and (427.96,67.81) .. (458.84,67.81) .. controls (489.72,67.81) and (514.76,92.84) .. (514.76,123.72) -- cycle ;
				\draw  [fill={rgb, 255:red, 0; green, 0; blue, 0 }  ,fill opacity=1 ] (425.17,83.67) .. controls (425.17,86.52) and (422.97,88.83) .. (420.25,88.83) .. controls (417.53,88.83) and (415.33,86.52) .. (415.33,83.67) .. controls (415.33,80.83) and (417.53,78.52) .. (420.25,78.52) .. controls (422.97,78.52) and (425.17,80.83) .. (425.17,83.67) -- cycle ;
				\draw  [draw opacity=0][dash pattern={on 4.5pt off 4.5pt}] (171.39,165.19) .. controls (150.17,163.4) and (133.5,145.6) .. (133.5,123.92) .. controls (133.5,106.83) and (143.85,92.16) .. (158.62,85.83) -- (174.92,123.92) -- cycle ; \draw [dash pattern={on 4.5pt off 4.5pt}] [dash pattern={on 4.5pt off 4.5pt}]  (168.25,164.8) .. controls (148.54,161.61) and (133.5,144.52) .. (133.5,123.92) .. controls (133.5,106.83) and (143.85,92.16) .. (158.62,85.83) ;  \draw [shift={(171.39,165.19)}, rotate = 193.55] [fill={rgb, 255:red, 0; green, 0; blue, 0 }  ][dash pattern={on 3.49pt off 4.5pt}][line width=0.08]  [draw opacity=0] (8.93,-4.29) -- (0,0) -- (8.93,4.29) -- cycle    ;
				\draw  [draw opacity=0][dash pattern={on 4.5pt off 4.5pt}] (486.52,93.04) .. controls (495,100.63) and (500.33,111.65) .. (500.33,123.92) .. controls (500.33,146.79) and (481.79,165.33) .. (458.92,165.33) .. controls (457.93,165.33) and (456.96,165.3) .. (455.99,165.23) -- (458.92,123.92) -- cycle ; \draw [dash pattern={on 4.5pt off 4.5pt}] [dash pattern={on 4.5pt off 4.5pt}]  (486.52,93.04) .. controls (495,100.63) and (500.33,111.65) .. (500.33,123.92) .. controls (500.33,146.79) and (481.79,165.33) .. (458.92,165.33) ; \draw [shift={(455.99,165.23)}, rotate = 355.75] [fill={rgb, 255:red, 0; green, 0; blue, 0 }  ][dash pattern={on 3.49pt off 4.5pt}][line width=0.08]  [draw opacity=0] (8.93,-4.29) -- (0,0) -- (8.93,4.29) -- cycle    ; 
				\draw  [fill={rgb, 255:red, 0; green, 0; blue, 0 }  ,fill opacity=1 ] (270.91,116) .. controls (270.91,113.15) and (273.11,110.84) .. (275.83,110.84) .. controls (278.55,110.84) and (280.76,113.15) .. (280.76,116) .. controls (280.76,118.85) and (278.55,121.16) .. (275.83,121.16) .. controls (273.11,121.16) and (270.91,118.85) .. (270.91,116) -- cycle ;
				\draw    (230.83,123.92) .. controls (260.83,104) and (243.83,137) .. (275.83,116) ;
				\draw  [fill={rgb, 255:red, 0; green, 0; blue, 0 }  ,fill opacity=1 ] (350.91,131) .. controls (350.91,128.15) and (353.11,125.84) .. (355.83,125.84) .. controls (358.55,125.84) and (360.76,128.15) .. (360.76,131) .. controls (360.76,133.85) and (358.55,136.16) .. (355.83,136.16) .. controls (353.11,136.16) and (350.91,133.85) .. (350.91,131) -- cycle ;
				\draw    (355.83,131) .. controls (388.83,112) and (368.83,144) .. (402.92,123.72) ;
				\draw  [dash pattern={on 2.25pt off 2.25pt}]  (275.83,116) .. controls (315.83,86) and (315.83,161) .. (355.83,131) ;
				
				\draw (223,69.4) node [anchor=north west][inner sep=0.75pt]  [font=\footnotesize]  {$\vv_{1}$};
				\draw (179,46.4) node [anchor=north west][inner sep=0.75pt]  [font=\footnotesize]  {$\vv_{2}$};
				\draw (219,170.4) node [anchor=north west][inner sep=0.75pt]  [font=\footnotesize]  {$\vv_{\n-1}$};
				\draw (205,117.4) node [anchor=north west][inner sep=0.75pt]  [font=\footnotesize]  {$\vv_{\n}$};
				\draw (264,88.4) node [anchor=north west][inner sep=0.75pt]  [font=\footnotesize]  {$\vv_{\n+1}$};
				\draw (340,103.4) node [anchor=north west][inner sep=0.75pt]  [font=\footnotesize]  {$\vv_{\n+\rr-1}$};
				\draw (413,117.4) node [anchor=north west][inner sep=0.75pt]  [font=\footnotesize]  {$\vv_{\n+\rr}$};
				\draw (378,64.4) node [anchor=north west][inner sep=0.75pt]  [font=\footnotesize]  {$\vv_{\n+\rr+1}$};
				\draw (425,46.4) node [anchor=north west][inner sep=0.75pt]  [font=\footnotesize]  {$\vv_{\n+\rr+2}$};
				\draw (380,167.4) node [anchor=north west][inner sep=0.75pt]  [font=\footnotesize]  {$\vv_{\n+\rr+\m-1}$};
				\draw (165,211.4) node [anchor=north west][inner sep=0.75pt]  [font=\Large]  {$\C_{\n}$};
				\draw (453,212.4) node [anchor=north west][inner sep=0.75pt]  [font=\Large]  {$\C_{\m}$};

			\end{tikzpicture}
			
		}
		\caption{bicyclic graph of type-II}
		\label{bicyclicgraphtypeii}
	\end{figure}
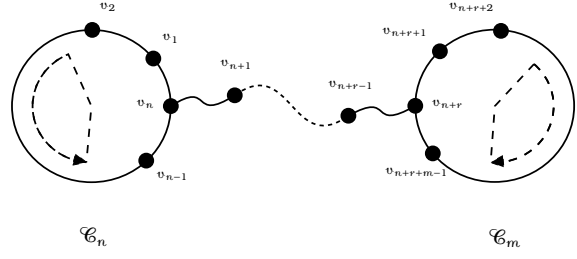
	We see that the path $\PP_{\rr}$ attaches the vertex $\vv_{\n}$ of $\C_{\n}$ to $\vv_{\n+\rr}$ of $\C_{\m}$. Note that usually a path of length $\rr$ has $\rr+1$ vertices but here, the number of vertices on the path are $\rr-1$ due to the inclusion of vertices $\vv_{\n}$ and $\vv_{\n+\rr}$. Moreover, the vertices of $\C_{\n}$ are indexed anticlockwise, while that of $\C_{\m}$ are indexed clockwise. These graphs are also known as \say{\textit{Kayak-Paddle graph}} \cite{ahmad2020computing}.
\end{enumerate}

In fact, Infinity-graph and Kayak Paddle Graph are special cases of \say{leafless cactus graph} defined as follows.

\begin{Definition}
	A graph $\G$ is a cactus graph if no two cycles of $\G$ share a common edge. It is a leafless cactus graph if it does not contain a vertex of degree one.
\end{Definition}

Metric dimension of bicyclic graphs of type-I and type-II have already been discussed in the literature.

\begin{Theorem} \cite{math11040869} \label{ownthm1}
\label{odd}Let $\C_{\n,\m}$ be a base bicyclic graph of type 1, $\n,\m\geq 3$.
Then,
\begin{equation*}
\beta(\C_{\n,\m})=
\begin{cases}
3 & \text{ when }\n,\m\text{ are even} \\
2 & \text{ otherwise}
\end{cases}
\end{equation*}
\end{Theorem}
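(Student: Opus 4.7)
My plan is to establish matching upper and lower bounds on $\beta(\C_{\n,\m})$, exploiting a bottleneck structure at the shared cut vertex $\vv_{\n}$: for any landmark $\ww$ in $\C_{\n}$ and any vertex $\uu$ in $\C_{\m}\setminus\{\vv_{\n}\}$, every shortest $\uu$--$\ww$ path must pass through $\vv_{\n}$, so $\dd(\uu,\ww)=\dd(\uu,\vv_{\n})+\dd(\vv_{\n},\ww)$. This reduces the resolving problem on $\C_{\n,\m}$ to resolving each cycle separately, using $\vv_{\n}$ as an auxiliary landmark, together with a short finite list of cross-cycle coincidences to rule out.

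For the lower bound, $\beta\ge 2$ is immediate. When both $\n$ and $\m$ are even I would show $\beta\ge 3$ by contradiction: suppose $\W=\{\ww_1,\ww_2\}$ resolves $\C_{\n,\m}$. If both $\ww_i$ lie in $\C_{\n}$, the bottleneck identity gives $\dd(\vv_{\n+1},\ww_i)=1+\dd(\vv_{\n},\ww_i)=\dd(\vv_{\n+\m-1},\ww_i)$ for $i=1,2$, so the two neighbours of $\vv_{\n}$ in $\C_{\m}$ have identical distance vectors; the case of both $\ww_i\in\C_{\m}$ is symmetric. Otherwise, with $\ww_1\in\C_{\n}\setminus\{\vv_{\n}\}$ and $\ww_2\in\C_{\m}\setminus\{\vv_{\n}\}$, the bottleneck reduction shows that $\W$ resolves a pair of vertices in $\C_{\n}$ if and only if $\{\ww_1,\vv_{\n}\}$ does, which for the even cycle $\C_{\n}$ forces $\ww_1\ne\vv_{\n/2}$, and symmetrically $\ww_2\ne\vv_{\n+\m/2}$. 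The remaining step is to produce, for every such admissible placement, a cross-cycle pair $\uu\in\C_{\n}$, $\vv\in\C_{\m}$ with $\dd(\uu,\vv_{\n})=\dd(\vv,\vv_{\n})$ and $\dd(\uu,\ww_1)=\dd(\vv,\ww_1)$. Here the even parity of both cycles is essential, as it provides the symmetric arc positions around $\vv_{\n}$ that make both coordinates coincide simultaneously. I expect this cross-cycle twin analysis to be the main technical obstacle and to require a careful arc-by-arc parametrisation of positions relative to $\vv_{\n}$.

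For the upper bound, when at least one cycle (say $\C_{\n}$) is odd, I would exhibit a two-element resolving set such as $\W=\{\vv_1,\vv_{\n+1}\}$, checking via the bottleneck reduction that $\{\vv_1,\vv_{\n}\}$ resolves the odd cycle $\C_{\n}$ automatically and that $\{\vv_{\n+1},\vv_{\n}\}$ resolves $\C_{\m}$ (since $\vv_{\n+1}$ is adjacent, hence not antipodal, to $\vv_{\n}$), and then ruling out the short finite list of possible cross-cycle twins directly; for mixed parities where the naive choice produces a coincidence (as happens for small even sides), shifting $\ww_2$ farther along $\C_{\m}$, for example to $\vv_{\n+\lceil\m/2\rceil}$, repairs it. When both $\n$ and $\m$ are even, augmenting a suitable two-set by a third landmark chosen to break the residual antipodal symmetries simultaneously (for instance $\vv_{\n/2+1}$) yields a resolving set of size $3$, matching the lower bound and completing the proof.
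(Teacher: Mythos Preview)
The paper does not supply its own proof of this theorem: it is quoted from \cite{math11040869} as a preliminary result and then used throughout. So there is no in-paper argument to compare against directly.

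Your strategic outline is sound and would succeed. The bottleneck decomposition at the cut vertex $\vv_{\n}$ is exactly the right tool, and your case split for the even--even lower bound is correct. The step you flag as the main obstacle---producing a cross-cycle twin when $\ww_1\in\C_{\n}\setminus\{\vv_{\n},\vv_{\n/2}\}$ and $\ww_2\in\C_{\m}\setminus\{\vv_{\n},\vv_{\n+\m/2}\}$---is in fact short: take $\uu$ to be the neighbour of $\vv_{\n}$ in $\C_{\n}$ on the arc \emph{not} containing $\ww_1$, and $\vv$ the neighbour of $\vv_{\n}$ in $\C_{\m}$ on the arc not containing $\ww_2$. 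Since $1\le \dd(\vv_{\n},\ww_1)\le \n/2-1$, the shortest $\uu$--$\ww_1$ path goes through $\vv_{\n}$, so $\dd(\uu,\ww_1)=1+\dd(\vv_{\n},\ww_1)=\dd(\vv,\ww_1)$; the $\ww_2$-coordinate matches symmetrically. This is precisely the construction the present paper deploys later, in Case~3 of Lemma~\ref{lem1}, to show that $\vv_{\n}$ belongs to no metric basis. Your upper-bound sketches are also workable, though the cross-cycle coincidence checks for specific two-element sets do need to be written out, and you are right that small even sides can force a shifted second landmark.
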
 
It is also worthwhile to mention that a metric basis set of $\C_{\n,\m}$ must contain a vertex from each cycle, since otherwise, we can easily find vertices which are not resolved by the metric basis set $\W$.
\begin{Theorem} \cite{math11040869}\label{ownthm2}
\label{oldthm2}Let $\C_{\n,\rr,\m}$ be a base bicyclic graph of type-II, $\n,\m\geq 3$
and $\rr\geq 1$. Then, $\beta \left( \C_{\n,\rr,\m}\right) =2.$
\end{Theorem}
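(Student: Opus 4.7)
The plan is to establish the bound in two directions. The lower bound $\beta(\C_{\n,\rr,\m}) \geq 2$ is immediate, since the graph contains a cycle and is therefore not a path, while paths are the only connected graphs of metric dimension one.

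For the upper bound I would exhibit an explicit resolving set of size two, namely $\W = \{\ww_1, \ww_2\}$ with $\ww_1 = \vv_1 \in \C_\n$ and $\ww_2 = \vv_{\n+\rr+1} \in \C_\m$, so that each landmark is a cycle-neighbour of the corresponding junction vertex ($\vv_\n$ for $\C_\n$, $\vv_{\n+\rr}$ for $\C_\m$). The central observation is that the junctions are cut-vertices of $\C_{\n,\rr,\m}$: every shortest path from a vertex $\uu \notin \C_\n$ to a vertex of $\C_\n$ factors through $\vv_\n$, and symmetrically for $\C_\m$ through $\vv_{\n+\rr}$. Hence, for $\vv_i \in \C_\n$ one has $\dd(\vv_i, \ww_2) = \dd(\vv_i, \vv_\n) + \rr + 1$, and analogously on the $\C_\m$ side.

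With this in hand, I would split the vertex set into four blocks — $\C_\n \setminus \{\vv_\n\}$, the internal path vertices $\vv_{\n+1}, \dots, \vv_{\n+\rr-1}$, $\C_\m \setminus \{\vv_{\n+\rr}\}$, and the two junctions — and verify distinctness within and across blocks. Within each cycle, any two vertices with equal $\ww_1$-distance form a mirror pair $(\vv_i, \vv_{\n-i+2})$, and a short calculation using $\dd(\vv_i, \vv_\n) = \min(i, \n-i)$ shows that their $\vv_\n$-distances differ, so the $\ww_2$-coordinate separates them. The internal path vertices carry the strictly monotone distance vector $(\kk+1,\, \rr - \kk + 1)$. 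Path-versus-cycle separation follows from the coarse bounds $\dd(\vv_i, \ww_2) \geq \rr + 1$ for $\vv_i \in \C_\n$ versus $\dd(\vv_{\n+\kk}, \ww_2) \leq \rr$, together with the symmetric bound involving $\ww_1$.

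The main subtlety is the $\C_\n$-versus-$\C_\m$ cross-comparison. Writing $a = \dd(\vv_i, \ww_1)$, $b = \dd(\vv_i, \vv_\n)$ for a putative vertex of $\C_\n$ and $c = \dd(\vv_{\n+\rr+\jj}, \ww_2)$, $d = \dd(\vv_{\n+\rr+\jj}, \vv_{\n+\rr})$ for its $\C_\m$ counterpart, equality of distance vectors forces $a = d + \rr + 1$ and $c = b + \rr + 1$, and summing these gives $(a - b) + (c - d) = 2(\rr + 1)$. Because $\vv_1$ is adjacent to $\vv_\n$ in $\C_\n$ and $\ww_2$ is adjacent to $\vv_{\n+\rr}$ in $\C_\m$, the triangle inequality forces $|a - b| \leq 1$ and $|c - d| \leq 1$, so the left-hand side lies in $[-2, 2]$, contradicting $2(\rr + 1) \geq 4$ for $\rr \geq 1$. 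This is the key step; the remaining checks involving the two junction vertices and the parity subcases within each cycle are routine.
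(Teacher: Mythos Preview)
Your argument is correct. Note that in the paper this theorem is simply quoted from \cite{math11040869} without proof; the closest the paper comes to a self-contained proof is Lemma~\ref{type2metricbasis}, which in fact establishes the stronger statement that \emph{every} pair $\{\vv_{\ii},\vv_{\jj}\}$ with $\vv_{\ii}\in\C_{\n}\setminus\{\vv_{\n},\vv_{\n/2}\}$ and $\vv_{\jj}\in\C_{\m}\setminus\{\vv_{\n+\rr},\vv_{\n+\rr+\m/2}\}$ is a metric basis.

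Both your approach and that of Lemma~\ref{type2metricbasis} proceed by partitioning the vertex set into the two cycles and the connecting path and then checking resolution block by block. The main difference is that you fix the specific landmarks $\vv_{1}$ and $\vv_{\n+\rr+1}$ adjacent to the junctions, which buys you the clean $\C_{\n}$-versus-$\C_{\m}$ separation: the adjacency forces $|a-b|\le 1$ and $|c-d|\le 1$, so the identity $(a-b)+(c-d)=2(\rr+1)$ yields an immediate contradiction. The paper's Lemma~\ref{type2metricbasis}, by contrast, treats general landmark positions and therefore falls back on a more explicit geometric case analysis (the three configurations in Figure~\ref{Cnrmallbasis}). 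Your argument is tighter for the stated theorem; the paper's version is what is actually needed downstream for the fault-tolerant construction in Theorem~\ref{FTMDBi2}, where one must pick two \emph{disjoint} metric bases.
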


\begin{Definition}
A resolving set $\SSS$ for a graph $\G$ is fault-tolerant resolving set if $\SSS \backslash \{\vv\}$ is also a resolving set for each $\vv$ in $\SSS$. The fault-tolerant metric dimension of $\G$ is the minimum cardinality of a fault-tolerant resolving set, and it will be denoted by $\beta ^{\prime }(\G)$. A fault-tolerant resolving set of order $\beta^{\prime }(\G)$ is called a fault-tolerant metric basis.
\end{Definition}


From the definition, we see that $\beta'(\G) \geq \beta(\G)+1$. Hernado et. al \cite{hernando2008fault} gave an upper bound for $\beta'(\G)$. More specifically:
\begin{Theorem}\label{oldthm3}
	Fault-tolerant metric dimension is bounded by a function of the metric dimension (independent of the graph). In particular, $\beta'(\G) \leq \beta(\G)\big(1+2 \cdot 5^{\beta(\G)-1} \big)$ for every graph $\G$.
\end{Theorem}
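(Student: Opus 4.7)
The plan is to start from a metric basis $\W=\{\ww_{1},\ldots,\ww_{\kk}\}$ of $\G$, with $\kk=\beta(\G)$, and augment it to a fault-tolerant resolving set $\W'$ of the claimed cardinality.

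First, I would invoke the standard reformulation of fault-tolerance: a superset $\W'\supseteq\W$ is fault-tolerant if and only if, for every two distinct vertices $\uu,\vv\in\V(\G)$, the vectors $\R(\uu|\W')$ and $\R(\vv|\W')$ disagree in at least two coordinates. Since $\W$ alone is a resolving set, the only obstructions are pairs $\{\uu,\vv\}$ whose $\W$-vectors disagree in exactly one coordinate. For each $\ii\in\{1,\ldots,\kk\}$, define $\uu\sim_{\ii}\vv$ to mean $\dd(\uu,\ww_{\jj})=\dd(\vv,\ww_{\jj})$ for every $\jj\neq \ii$; the obstructing pairs at index $\ii$ are exactly the pairs lying inside non-singleton $\sim_{\ii}$-classes, so it suffices to add, for each $\ii$, a small set of vertices that distinguishes all such intra-class pairs.

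The crux is a counting lemma: for each fixed $\ii$, the number of non-singleton $\sim_{\ii}$-classes is at most $5^{\kk-1}$. Such a class is determined by its common tuple $(\dd(\cdot,\ww_{\jj}))_{\jj\neq \ii}$, and the triangle inequalities $|\dd(\uu,\ww_{\jj})-\dd(\uu,\ww_{\el})|\leq \dd(\ww_{\jj},\ww_{\el})$ together with integrality of graph distances confine the admissible tuples to a lattice window whose size can be controlled coordinate by coordinate, producing the $5^{\kk-1}$ estimate. One then verifies that each non-singleton class can be resolved by appending at most two further vertices from $\V(\G)$, giving at most $2\cdot 5^{\kk-1}$ added vertices per coordinate. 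Combining across all $\kk$ coordinates yields $|\W'|\leq \kk+\kk\cdot 2\cdot 5^{\kk-1}=\kk(1+2\cdot 5^{\kk-1})$.

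The principal obstacle is making the $5^{\kk-1}$ counting bound rigorous: it requires a combined use of the triangle inequalities and integrality to trap admissible distance-tuples inside a small lattice region, and some care is needed to propagate the factor of $5$ cleanly across the $\kk-1$ free coordinates. Once the counting lemma is in hand, the per-class two-vertex witness construction is a standard argument, and the final tally is routine arithmetic.
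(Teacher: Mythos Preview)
The paper does not prove this theorem; it is quoted from Hernando et al.\ as a background result, so there is no in-paper argument to compare against. The question is therefore simply whether your sketch is sound.

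It is not, and the gap is in the step you yourself flag as ``the crux''. Your counting lemma --- that for each index $\ii$ the number of non-singleton $\sim_{\ii}$-classes is at most $5^{\kk-1}$ --- is false. Take $\G=\C_{\n}$ with $\n$ odd and metric basis $\W=\{\vv_{1},\vv_{2}\}$, so $\kk=2$. Deleting $\ww_{2}=\vv_{2}$, two vertices lie in the same $\sim_{2}$-class exactly when they are equidistant from $\vv_{1}$; there are $(\n-1)/2$ such non-singleton pairs, which exceeds $5^{\kk-1}=5$ as soon as $\n\geq 13$ and grows without bound. The triangle-inequality reasoning you invoke only constrains \emph{differences} $\dd(\cdot,\ww_{\jj})-\dd(\cdot,\ww_{\el})$, not the individual distances; since the diameter of $\G$ is not bounded in terms of $\kk$, the admissible $(\kk-1)$-tuples are not confined to any window of width five per coordinate.

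The actual Hernando et al.\ argument does not attempt to count unresolved classes. The factor $5^{\kk-1}$ arises instead from bounding the number of vertices in a small-radius neighbourhood of each $\ww_{\ii}$: if $\dd(\uu,\ww_{\ii})\leq 2$ then for every $\jj\neq\ii$ one has $\dd(\uu,\ww_{\jj})\in\{\dd(\ww_{\ii},\ww_{\jj})-2,\ldots,\dd(\ww_{\ii},\ww_{\jj})+2\}$, five values per remaining coordinate, and distinctness of representations then caps the neighbourhood size. One shows that adjoining such a local set at each $\ww_{\ii}$ makes $\W$ fault-tolerant. So the ``$5$'' is legitimate, but it bounds the size of a replacement neighbourhood, not the number of $\sim_{\ii}$-classes; your plan conflates the two and would need to be reworked from that point.
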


Using Theorem \ref{oldthm3} with Theorems \ref{ownthm1} and \ref{ownthm2}, we see that

\begin{equation*}
	\beta'(\G) \leq
	\begin{cases}
		153 & \text{ when }\G \simeq \C_{\n,\m} \text{ and }\n,\m\text{ are even} \\
		22 & \text{ when }\G \simeq \C_{\n,\m} \text{ and }\n,\m\text{ are any other values} \\
		22 & \text{ when }\G \simeq \C_{\n,\rr,\m} \text{ for all values of }\n,\rr, \m. 
	\end{cases}
\end{equation*}
These values are in fact highly inflated because of the generalized nature of the result. The actual values are much smaller and are discussed in the coming sections.

Javaid et. al. \cite{javaid2009fault} proved that $\beta'(\C_{\n})=3$. They also gave the condition under which the three vertices from $\C_{\n}$ form a fault tolerant metric basis set. Using this result we are ensured that $\beta'(\G) \geq 3$ when $\G$ is a bicyclic graph of type-I and II.

\section{Results On Metric Basis of Bicyclic Graphs}
Before proceeding further, we look at some results for metric basis of bicyclic graphs. These results will be helpful later on.
\begin{Lemma}\label{lem1}
	For a base bicyclic graph  $\C_{\n,\m}$ of type I, 
the common vertex, $\vv_{\n}\simeq \vv_{\n+\m}$, does not belong to any
metric basis set.
\end{Lemma}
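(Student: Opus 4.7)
The plan is to argue by contradiction. Suppose $\vv_{\n}$ belongs to some metric basis $\W$ of $\C_{\n,\m}$. Since $\vv_{\n}$ is a cut vertex of the graph, every geodesic between a vertex of $\V(\C_{\n})$ and a vertex of $\V(\C_{\m}) \setminus \{\vv_{\n}\}$ must pass through $\vv_{\n}$; equivalently, $\dd(w,y) = \dd(w,\vv_{\n}) + \dd(\vv_{\n},y)$ whenever $w$ and $y$ lie in different cycles. Applying this to the two neighbours $\vv_{\n+1}, \vv_{\n+\m-1}$ of $\vv_{\n}$ in $\C_{\m}$ shows that every $w \in \V(\C_{\n})$ assigns them the common value $\dd(w,\vv_{\n})+1$, so in order to separate them $\W$ must contain a vertex of $\V(\C_{\m}) \setminus \{\vv_{\n}\}$. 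Symmetrically $\W$ must also contain a vertex of $\V(\C_{\n}) \setminus \{\vv_{\n}\}$, so $|\W| \ge 3$. Combined with Theorem \ref{ownthm1} this forces $|\W| = 3$ with both $\n$ and $\m$ even, and allows me to write $\W = \{\vv_{\n}, a, b\}$ with $a \in \V(\C_{\n}) \setminus \{\vv_{\n}\}$ and $b \in \V(\C_{\m}) \setminus \{\vv_{\n}\}$.

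Next I would split into subcases according to whether $a$ or $b$ is antipodal to $\vv_{\n}$ in its own cycle. If $\dd(\vv_{\n}, a) = \n/2$, then in $\C_{\n}$ the two vertices $\vv_1$ and $\vv_{\n-1}$ are mirror images of each other through the diameter joining $\vv_{\n}$ to $a$, so they share the same distance to both $\vv_{\n}$ and $a$. The vertex $b$ lies in $\C_{\m}$, so by the cut-vertex formula above it assigns them the common value $\dd(b,\vv_{\n})+1$. Hence $\W$ fails to resolve the pair $(\vv_1, \vv_{\n-1})$, a contradiction; the case $\dd(\vv_{\n}, b) = \m/2$ is handled identically in the cycle $\C_{\m}$.

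It remains to treat the case in which neither $a$ nor $b$ is antipodal. Let $x$ be the neighbour of $\vv_{\n}$ in $\C_{\n}$ lying on the longer arc from $\vv_{\n}$ to $a$, namely $x = \vv_{\n-1}$ if $a = \vv_{\jj}$ with $1 \le \jj < \n/2$ and $x = \vv_1$ otherwise; pick $y \in \{\vv_{\n+1}, \vv_{\n+\m-1}\}$ analogously with respect to $b$. A short arc-length comparison in $\C_{\n}$, which works precisely because $a$ is not antipodal, shows that $\vv_{\n}$ lies on a shortest $x$-to-$a$ path, so $\dd(x,a) = 1 + \dd(\vv_{\n},a)$; likewise $\dd(y,b) = 1 + \dd(\vv_{\n},b)$. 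Applying the cut-vertex formula once more to evaluate the cross-cycle distances yields
\[
\R(x \mid \W) = \bigl(1,\; 1 + \dd(\vv_{\n},a),\; 1 + \dd(\vv_{\n},b)\bigr) = \R(y \mid \W),
\]
so $\W$ does not resolve $x$ and $y$, the final contradiction.

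The main obstacle is this last subcase: one has to correctly identify which neighbour of $\vv_{\n}$ in each cycle has $\vv_{\n}$ on its geodesic to the basis vertex sitting in the same cycle, and then verify that pairing those two neighbours across cycles produces coinciding distance vectors. Both tasks reduce to straightforward arc-length bookkeeping in a cycle, but the case analysis has to be carried out carefully so that no configuration of $(a,b)$ is missed.
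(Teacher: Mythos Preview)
Your argument is correct and follows essentially the same route as the paper's proof: both exploit that $\vv_{\n}$ is a cut vertex, force a basis element into each cycle, invoke Theorem~\ref{ownthm1} to reduce to the even--even case with $|\W|=3$, split off the antipodal subcase, and in the remaining subcase exhibit an unresolved pair consisting of one vertex from each cycle, each lying in the ``opposite half'' from the basis vertex in its cycle. Your version is slightly more streamlined---you dispose of the odd cases in one stroke via the bound $|\W|\ge 3$, and in the final subcase you work with the explicit neighbours of $\vv_{\n}$ rather than generic equidistant vertices---but the structure and key ideas are the same.
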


\begin{proof}
We suppose on the contrary that $\vv_{\n}$ belongs to a metric basis set $\W$ of $%
\C_{\n,\m}.$ We discuss the following three cases.

\begin{mycases}
\case When $\m,\n$ are both odd.
	
By Theorem \ref{odd}, $\beta(\C_{\n,\m})=2$, i.e., $\W$ contains two vertices. Since $\vv_{\n} \in \W$, we need one other vertex to complete it. Without loss of generality, let $\vv_{\ii}\in C_{\n}$ be the other vertex of $\W$, i.e., $\W=\{\vv_{\ii},\vv_{\n}\}$. Let us consider two
vertices $\vv_{\aaa},\vv_{\bb}\in \C_{\m}$ which are equidistant from vertex $\vv_{\n}$, i. e., $\dd(\vv_{\aaa},\vv_{\n})=\dd(\vv_{\bb},\vv_{\n})$, see figure \ref{proof1figure}(A). Now
\begin{eqnarray*}
d(\vv_{\ii},\vv_{\aaa}) &=&d(\vv_{\ii},\vv_{\n})+\dd(\vv_{\n},\vv_{\aaa})\text{ and }d(\vv_{\ii},\vv_{\bb})=\dd(\vv_{\ii},\vv_{\n})+\dd(\vv_{\n},\vv_{\bb}) \\
&\Longrightarrow &\text{ }d(\vv_{\ii},\vv_{\aaa})=\dd(\vv_{\ii},\vv_{\bb} ).
\end{eqnarray*}

Therefore $r\left(\vv_{\aaa}|\W\right) =r\left( \vv_{\bb}|\W\right) $, a contradiction to the fact that $\W$ is a resolving set. Hence, $\vv_{\n} \notin \W$ for metric basis set $\W$ of $\C_{\n,\m}$.

\case When one of $\m,\n$ is odd and the other is even.

The proof is similar to the case above.

\case When $\m,\n$ are both even.

By Theorem \ref{odd}, $\beta(\C_{\n,\m})=3$. Since $\vv_{\n} \in \W$, we need only two more vertices for $\W$. Let $\vv_{\ii},\vv_{\jj}$ be the other two vertices of $\W$, then both $\vv_{\ii}, \vv_{\jj}$ can not be in $\C_{\n}$ or $\C_{\m}$, since we can easily find vertices in the other cycle which are not resolved in this case.

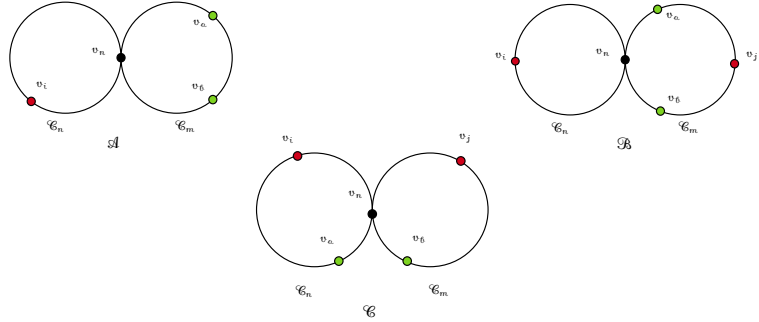
\begin{figure}[H]
	\begin{center}
			\tikzset{every picture/.style={line width=0.75pt}} 
				\resizebox{10cm}{4.2cm}{
			\begin{tikzpicture}[x=0.75pt,y=0.75pt,yscale=-1,xscale=1]
				
				\draw  [fill={rgb, 255:red, 0; green, 0; blue, 0 }  ,fill opacity=1 ] (99.41,60.11) .. controls (99.41,58.1) and (100.96,56.48) .. (102.88,56.48) .. controls (104.8,56.48) and (106.35,58.1) .. (106.35,60.11) .. controls (106.35,62.12) and (104.8,63.75) .. (102.88,63.75) .. controls (100.96,63.75) and (99.41,62.12) .. (99.41,60.11) -- cycle ;
				\draw   (102.88,60.11) .. controls (102.88,33.48) and (124.27,11.89) .. (150.66,11.89) .. controls (177.04,11.89) and (198.43,33.48) .. (198.43,60.11) .. controls (198.43,86.75) and (177.04,108.34) .. (150.66,108.34) .. controls (124.27,108.34) and (102.88,86.75) .. (102.88,60.11) -- cycle ;
				\draw   (7.33,60.11) .. controls (7.33,33.48) and (28.72,11.89) .. (55.11,11.89) .. controls (81.49,11.89) and (102.88,33.48) .. (102.88,60.11) .. controls (102.88,86.75) and (81.49,108.34) .. (55.11,108.34) .. controls (28.72,108.34) and (7.33,86.75) .. (7.33,60.11) -- cycle ;
				\draw  [fill={rgb, 255:red, 208; green, 2; blue, 27 }  ,fill opacity=1 ] (22.23,98.05) .. controls (22.23,96.04) and (23.79,94.41) .. (25.71,94.41) .. controls (27.63,94.41) and (29.18,96.04) .. (29.18,98.05) .. controls (29.18,100.06) and (27.63,101.68) .. (25.71,101.68) .. controls (23.79,101.68) and (22.23,100.06) .. (22.23,98.05) -- cycle ;
				\draw  [fill={rgb, 255:red, 126; green, 211; blue, 33 }  ,fill opacity=1 ] (178.36,23.58) .. controls (178.36,21.57) and (179.91,19.94) .. (181.83,19.94) .. controls (183.75,19.94) and (185.3,21.57) .. (185.3,23.58) .. controls (185.3,25.58) and (183.75,27.21) .. (181.83,27.21) .. controls (179.91,27.21) and (178.36,25.58) .. (178.36,23.58) -- cycle ;
				\draw  [fill={rgb, 255:red, 126; green, 211; blue, 33 }  ,fill opacity=1 ] (178.32,96.5) .. controls (178.32,94.49) and (179.87,92.86) .. (181.79,92.86) .. controls (183.71,92.86) and (185.26,94.49) .. (185.26,96.5) .. controls (185.26,98.51) and (183.71,100.14) .. (181.79,100.14) .. controls (179.87,100.14) and (178.32,98.51) .. (178.32,96.5) -- cycle ;
				\draw  [fill={rgb, 255:red, 0; green, 0; blue, 0 }  ,fill opacity=1 ] (532.72,61.86) .. controls (532.72,59.86) and (534.26,58.24) .. (536.16,58.24) .. controls (538.06,58.24) and (539.6,59.86) .. (539.6,61.86) .. controls (539.6,63.86) and (538.06,65.48) .. (536.16,65.48) .. controls (534.26,65.48) and (532.72,63.86) .. (532.72,61.86) -- cycle ;
				\draw   (536.16,61.86) .. controls (536.16,35.37) and (557.35,13.89) .. (583.49,13.89) .. controls (609.62,13.89) and (630.81,35.37) .. (630.81,61.86) .. controls (630.81,88.35) and (609.62,109.83) .. (583.49,109.83) .. controls (557.35,109.83) and (536.16,88.35) .. (536.16,61.86) -- cycle ;
				\draw   (441.51,61.86) .. controls (441.51,35.37) and (462.7,13.89) .. (488.84,13.89) .. controls (514.97,13.89) and (536.16,35.37) .. (536.16,61.86) .. controls (536.16,88.35) and (514.97,109.83) .. (488.84,109.83) .. controls (462.7,109.83) and (441.51,88.35) .. (441.51,61.86) -- cycle ;
				\draw  [fill={rgb, 255:red, 208; green, 2; blue, 27 }  ,fill opacity=1 ] (438.45,63.16) .. controls (438.45,61.34) and (439.93,59.87) .. (441.75,59.87) .. controls (443.56,59.87) and (445.04,61.34) .. (445.04,63.16) .. controls (445.04,64.98) and (443.56,66.45) .. (441.75,66.45) .. controls (439.93,66.45) and (438.45,64.98) .. (438.45,63.16) -- cycle ;
				\draw  [fill={rgb, 255:red, 208; green, 2; blue, 27 }  ,fill opacity=1 ] (626.72,65.36) .. controls (626.72,63.37) and (628.26,61.75) .. (630.16,61.75) .. controls (632.06,61.75) and (633.6,63.37) .. (633.6,65.36) .. controls (633.6,67.36) and (632.06,68.98) .. (630.16,68.98) .. controls (628.26,68.98) and (626.72,67.36) .. (626.72,65.36) -- cycle ;
				\draw  [fill={rgb, 255:red, 126; green, 211; blue, 33 }  ,fill opacity=1 ] (560.57,18.1) .. controls (560.57,16.11) and (562.11,14.49) .. (564.01,14.49) .. controls (565.91,14.49) and (567.45,16.11) .. (567.45,18.1) .. controls (567.45,20.1) and (565.91,21.72) .. (564.01,21.72) .. controls (562.11,21.72) and (560.57,20.1) .. (560.57,18.1) -- cycle ;
				\draw  [fill={rgb, 255:red, 126; green, 211; blue, 33 }  ,fill opacity=1 ] (563.04,106.48) .. controls (563.04,104.48) and (564.58,102.86) .. (566.48,102.86) .. controls (568.38,102.86) and (569.92,104.48) .. (569.92,106.48) .. controls (569.92,108.47) and (568.38,110.09) .. (566.48,110.09) .. controls (564.58,110.09) and (563.04,108.47) .. (563.04,106.48) -- cycle ;
				\draw  [fill={rgb, 255:red, 0; green, 0; blue, 0 }  ,fill opacity=1 ] (315.22,195.87) .. controls (315.22,193.82) and (316.84,192.15) .. (318.83,192.15) .. controls (320.83,192.15) and (322.45,193.82) .. (322.45,195.87) .. controls (322.45,197.93) and (320.83,199.6) .. (318.83,199.6) .. controls (316.84,199.6) and (315.22,197.93) .. (315.22,195.87) -- cycle ;
				\draw   (318.83,192.26) .. controls (318.83,164.99) and (341.11,142.88) .. (368.58,142.88) .. controls (396.06,142.88) and (418.33,164.99) .. (418.33,192.26) .. controls (418.33,219.54) and (396.06,241.64) .. (368.58,241.64) .. controls (341.11,241.64) and (318.83,219.54) .. (318.83,192.26) -- cycle ;
				\draw   (219.33,192.26) .. controls (219.33,164.99) and (241.61,142.88) .. (269.08,142.88) .. controls (296.56,142.88) and (318.83,164.99) .. (318.83,192.26) .. controls (318.83,219.54) and (296.56,241.64) .. (269.08,241.64) .. controls (241.61,241.64) and (219.33,219.54) .. (219.33,192.26) -- cycle ;
				\draw  [fill={rgb, 255:red, 208; green, 2; blue, 27 }  ,fill opacity=1 ] (250.98,145.45) .. controls (250.98,143.39) and (252.6,141.73) .. (254.6,141.73) .. controls (256.59,141.73) and (258.21,143.39) .. (258.21,145.45) .. controls (258.21,147.51) and (256.59,149.17) .. (254.6,149.17) .. controls (252.6,149.17) and (250.98,147.51) .. (250.98,145.45) -- cycle ;
				\draw  [fill={rgb, 255:red, 126; green, 211; blue, 33 }  ,fill opacity=1 ] (286.45,236.74) .. controls (286.45,234.69) and (288.07,233.02) .. (290.07,233.02) .. controls (292.07,233.02) and (293.68,234.69) .. (293.68,236.74) .. controls (293.68,238.8) and (292.07,240.47) .. (290.07,240.47) .. controls (288.07,240.47) and (286.45,238.8) .. (286.45,236.74) -- cycle ;
				\draw  [fill={rgb, 255:red, 126; green, 211; blue, 33 }  ,fill opacity=1 ] (345.23,236.74) .. controls (345.23,234.69) and (346.84,233.02) .. (348.84,233.02) .. controls (350.84,233.02) and (352.46,234.69) .. (352.46,236.74) .. controls (352.46,238.8) and (350.84,240.47) .. (348.84,240.47) .. controls (346.84,240.47) and (345.23,238.8) .. (345.23,236.74) -- cycle ;
				\draw  [fill={rgb, 255:red, 208; green, 2; blue, 27 }  ,fill opacity=1 ] (391.3,149.78) .. controls (391.3,147.72) and (392.92,146.06) .. (394.92,146.06) .. controls (396.92,146.06) and (398.54,147.72) .. (398.54,149.78) .. controls (398.54,151.84) and (396.92,153.51) .. (394.92,153.51) .. controls (392.92,153.51) and (391.3,151.84) .. (391.3,149.78) -- cycle ;
				
				\draw (76.05,49.73) node [anchor=north west][inner sep=0.75pt]  [font=\footnotesize]  {$\vv_{\n}$};
				\draw (37.71,113.45) node [anchor=north west][inner sep=0.75pt]  [font=\small]  {$\C_{\n}$};
				\draw (147.85,113.02) node [anchor=north west][inner sep=0.75pt]  [font=\small]  {$\C_{\m}$};
				\draw (162.97,25.62) node [anchor=north west][inner sep=0.75pt]  [font=\footnotesize]  {$\vv_{\aaa}$};
				\draw (162.07,81.12) node [anchor=north west][inner sep=0.75pt]  [font=\footnotesize]  {$\vv_{\bb}$};
				\draw (28.05,79.14) node [anchor=north west][inner sep=0.75pt]  [font=\footnotesize]  {$\vv_{\ii}$};
				\draw (89.79,127.06) node [anchor=north west][inner sep=0.75pt]  [font=\large]  {$\mathcal{A}$};
				\draw (509.21,50.95) node [anchor=north west][inner sep=0.75pt]  [font=\footnotesize]  {$\vv_{\n}$};
				\draw (471.32,114.66) node [anchor=north west][inner sep=0.75pt]  [font=\small]  {$\C_{\n}$};
				\draw (580.42,114.22) node [anchor=north west][inner sep=0.75pt]  [font=\small]  {$\C_{\m}$};
				\draw (638.36,53.63) node [anchor=north west][inner sep=0.75pt]  [font=\footnotesize]  {$\vv_{\jj}$};
				\draw (571.11,91.82) node [anchor=north west][inner sep=0.75pt]  [font=\footnotesize]  {$\vv_{\bb}$};
				\draw (422.65,54.54) node [anchor=north west][inner sep=0.75pt]  [font=\footnotesize]  {$\vv_{\ii}$};
				\draw (527.65,128.2) node [anchor=north west][inner sep=0.75pt]  [font=\large]  {$\mathcal{B}$};
				\draw (569.61,19.14) node [anchor=north west][inner sep=0.75pt]  [font=\footnotesize]  {$\vv_{\aaa}$};
				\draw (296.64,174.32) node [anchor=north west][inner sep=0.75pt]  [font=\footnotesize]  {$\vv_{\n}$};
				\draw (251.19,256.43) node [anchor=north west][inner sep=0.75pt]  [font=\small]  {$\C_{\n}$};
				\draw (365.96,255.98) node [anchor=north west][inner sep=0.75pt]  [font=\small]  {$\C_{\m}$};
				\draw (351.54,213.29) node [anchor=north west][inner sep=0.75pt]  [font=\footnotesize]  {$\vv_{\bb}$};
				\draw (239.58,126) node [anchor=north west][inner sep=0.75pt]  [font=\footnotesize]  {$\vv_{\ii}$};
				\draw (309.23,273.71) node [anchor=north west][inner sep=0.75pt]  [font=\large]  {$\C$};
				\draw (272.25,213.95) node [anchor=north west][inner sep=0.75pt]  [font=\footnotesize]  {$\vv_{\aaa}$};
				\draw (391.79,124.42) node [anchor=north west][inner sep=0.75pt]  [font=\footnotesize]  {$\vv_{\jj}$};

			\end{tikzpicture}

}
\caption{Relationship of common vertex $\vv_{\n} =\vv_{\n+\m}$ to metric basis of base bicyclic graph $\C_{\n,\m}$ of type-I.}
		\label{proof1figure}
	\end{center}
\end{figure}

Without loss of generality, let $\vv_{\ii} \in \C_{\n}$ and $\vv_{\jj} \in \C_{\m}$. If any of $\vv_{\ii}, \vv_{\jj}$ are middle labeled vertices of their respective cycles, i.e., $\vv_{\ii} \simeq \vv_{\frac{\n}{2}}$ or $\vv_{\jj} \simeq \vv_{\m+\frac{\n}{2}}$, we can easily find two vertices equidistant from $\vv_{\n}$, say $\vv_{\aaa},\vv_{\bb}$ in the same cycle, which are not resolved by $\W$. This argument can be easily visualized from Figure \ref{proof1figure}(B).

The above argument ensures that $\vv_{\ii}, \vv_{\jj}$ are not the middle labeled vertices of their respective cycle. Now, $\vv_{\ii} \in \C_{\n}$ and $\vv_{\ii} \not\simeq  \vv_{\frac{\n}{2}}$ ensures that $\vv_{\ii}$ lies in one half of $\C_{\n}$ and a similar argument shows that $\vv_{\jj}$ lies in one half of $\C_{\m}$. Let us consider two vertices $\vv_{\aaa},\vv_{\bb}$ equidistant from $\vv_{\n}$, such that $\vv_{\aaa}$ is in the opposite half of $\C_{\n}$ as compared to $\vv_{\ii}$, and the shortest $\vv_{\ii}\vv_{\aaa}$ path passes through $\vv_{\n}$. Similarly, we take $\vv_{\bb}$ from the opposite half of $\C_{\m}$ as compared to $\vv_{\jj}$, and the shortest $\vv_{\jj}\vv_{\bb}$ path passes through $\vv_{\n}$. This concept is illustrated in Figure \ref{proof1figure}(C). It can be easily concluded now that $\vv_{\aaa}, \vv_{\bb}$ are not resolved by $\W$. All these cases ensure that $\vv_{\n}$ can not be in any metric basis.\qedhere
\end{mycases}
\end{proof}

\begin{Lemma}\label{newlem}
	For bicyclic graph $\C_{\n,\m}$ where $\n$ is even and $\m$ is odd (resp. $\n$ is odd and $\m$ is even), the vertex $\vv_{\frac{\n}{2}} \big(\text{resp. } \vv_{\n+\frac{\m}{2}}\big)$ does not belong to any metric basis set $\W$.
\end{Lemma}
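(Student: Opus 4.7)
The plan is to mimic the strategy of Lemma \ref{lem1}: assume $\vv_{\n/2}\in\W$ for some metric basis $\W$, use Theorem \ref{odd} to conclude $|\W|=2$, and then in each of the two cases for the location of the second vertex, exhibit a pair of distinct vertices with identical distance vectors, obtaining a contradiction. By the symmetry between $\C_{\n}$ and $\C_{\m}$ it suffices to treat the case $\n$ even, $\m$ odd; the case $\n$ odd, $\m$ even, where one shows $\vv_{\n+\m/2}\notin\W$, follows by relabelling.

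Assume $\vv_{\n/2}\in\W=\{\vv_{\n/2},\vv_{\ii}\}$. The key structural observation I will use throughout is that $\vv_{\n}$ is a cut vertex separating $\C_{\n}\setminus\{\vv_{\n}\}$ from $\C_{\m}\setminus\{\vv_{\n}\}$, so every shortest path between a vertex of one cycle and a vertex of the other passes through $\vv_{\n}$, giving the additive distance formula $\dd(\uu,\vv)=\dd(\uu,\vv_{\n})+\dd(\vv_{\n},\vv)$ across the cut.

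\textbf{Case A:} $\vv_{\ii}\in\C_{\n}$. Since $\m$ is odd and $\m\geq 3$, the cycle $\C_{\m}$ contains pairs $\vv_{\aaa},\vv_{\bb}$ equidistant from $\vv_{\n}$ (for each $1\leq k\leq (\m-1)/2$, the two neighbours of $\vv_{\n}$ at cycle-distance $k$). Pick any such pair. By the additive formula, $\dd(\vv_{\ii},\vv_{\aaa})=\dd(\vv_{\ii},\vv_{\bb})$, and likewise $\dd(\vv_{\n/2},\vv_{\aaa})=\dd(\vv_{\n/2},\vv_{\bb})$, so $\rr(\vv_{\aaa}|\W)=\rr(\vv_{\bb}|\W)$, contradicting that $\W$ resolves $\C_{\n,\m}$.

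\textbf{Case B:} $\vv_{\ii}\in\C_{\m}$. Now I exploit the even parity of $\n$: inside $\C_{\n}$, for each $1\leq k\leq \n/2-1$ the vertices $\vv_{k}$ and $\vv_{\n-k}$ are simultaneously equidistant from the two antipodes $\vv_{\n}$ and $\vv_{\n/2}$ (this is the standard reflection symmetry of an even cycle across the axis joining $\vv_{\n}$ and $\vv_{\n/2}$). Setting $\vv_{\aaa}=\vv_{k}$ and $\vv_{\bb}=\vv_{\n-k}$, the additive formula through the cut vertex $\vv_{\n}$ gives $\dd(\vv_{\ii},\vv_{\aaa})=\dd(\vv_{\ii},\vv_{\bb})$, while directly $\dd(\vv_{\n/2},\vv_{\aaa})=\dd(\vv_{\n/2},\vv_{\bb})$; again the distance vectors coincide, a contradiction. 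The only step that requires any thought is verifying the equidistance claim in Case B, and this is routine from the distance formula $\dd_{\C_{\n}}(\vv_{i},\vv_{j})=\min(|i-j|,\n-|i-j|)$; everything else is a direct transcription of the template of Lemma \ref{lem1}.
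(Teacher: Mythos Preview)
Your proof is correct and follows essentially the same contradiction template as the paper's. The only difference is cosmetic: the paper invokes its earlier remark (that every metric basis of $\C_{\n,\m}$ must contain a vertex from each cycle) to go directly to your Case~B, whereas you handle Case~A explicitly; your Case~B argument---picking a reflected pair $\vv_{k},\vv_{\n-k}\in\C_{\n}$ equidistant from both antipodes $\vv_{\n}$ and $\vv_{\n/2}$ and then pushing through the cut vertex---is exactly the paper's proof.
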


\begin{proof}
	Let us suppose on the contrary that $\vv_{\frac{\n}{2}} \in \W$ for some resolving set $\W$ of $\C_{\n,\m}$, where $\n,\m$ are as given above. By Theorem \ref{odd}, $\beta(\C_{\n,\m})=2$. Let $\vv_{\jj} \in \C_{\m}$ be the second vertex in $\W$. Consider two vertices $\vv_{\aaa}$, $\vv_{\bb}$ equidistant from $\vv_{\n}$, then they are equidistant from $\vv_{\frac{\n}{2}}$, since $\n$ is even. It can now be easily shown that these vertices are not resolved by $\W$, giving us the required contradiction.
\end{proof}

In the event when both $\n,\m$ are even, the result is a bit different in the sense that both middle vertices can not be in the same metric basis set. This is proved in the following.

\begin{Lemma} \label{11}
	Let $\W$ be any metric basis for $\C_{\n,\m}$, where $\n,\m$ are even, then, $\left \{\vv_{\frac{\n}{2}}, \vv_{\n+\frac{\m}{2}}\right \} \not\subset \W $.
\end{Lemma}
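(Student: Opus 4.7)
The plan is to argue by contradiction. Assume that $\W$ is a metric basis of $\C_{\n,\m}$ that contains both $\vv_{\frac{\n}{2}}$ and $\vv_{\n+\frac{\m}{2}}$. Theorem \ref{odd} gives $\beta(\C_{\n,\m}) = 3$, so there is exactly one further element, say $\vv_{\kk}$, in $\W$, and Lemma \ref{lem1} already rules out $\vv_{\kk} = \vv_{\n}$. The strategy is to exhibit, for every remaining choice of $\vv_{\kk}$, a pair of distinct vertices of $\C_{\n,\m}$ on which all three coordinates of the distance vector relative to $\W$ coincide.

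The main step is a symmetry observation. I would first show that whenever $\vv_{\aaa},\vv_{\bb} \in \C_{\n}\setminus\{\vv_{\n}\}$ satisfy $\dd(\vv_{\aaa},\vv_{\n}) = \dd(\vv_{\bb},\vv_{\n})$, they are automatically equidistant from both $\vv_{\frac{\n}{2}}$ and $\vv_{\n+\frac{\m}{2}}$. The first equidistance uses that $\n$ is even, so $\vv_{\frac{\n}{2}}$ is the unique antipode of $\vv_{\n}$ in $\C_{\n}$; the reflection of $\C_{\n}$ across the diameter through $\vv_{\n}$ and $\vv_{\frac{\n}{2}}$ is an automorphism that swaps $\vv_{\aaa}$ with $\vv_{\bb}$ and fixes $\vv_{\frac{\n}{2}}$. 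The second equidistance uses that $\vv_{\n}$ is a cut vertex of $\C_{\n,\m}$, so every shortest path from a vertex of $\C_{\n}$ to a vertex of $\C_{\m}$ must pass through $\vv_{\n}$, yielding $\dd(\vv_{\aaa},\vv_{\n+\frac{\m}{2}}) = \dd(\vv_{\aaa},\vv_{\n}) + \tfrac{\m}{2} = \dd(\vv_{\bb},\vv_{\n+\frac{\m}{2}})$. The mirror statement holds for pairs in $\C_{\m}\setminus\{\vv_{\n}\}$ equidistant from $\vv_{\n}$.

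Consequently the task of resolving every symmetric pair in $\C_{\n}$ \emph{and} every symmetric pair in $\C_{\m}$ is thrown entirely onto $\vv_{\kk}$. If $\vv_{\kk} \in \C_{\n}\setminus\{\vv_{\n}\}$, the same cut-vertex identity gives $\dd(\vv_{\kk},\vv) = \dd(\vv_{\kk},\vv_{\n}) + \dd(\vv_{\n},\vv)$ for every $\vv \in \C_{\m}$; hence $\vv_{\kk}$ fails to resolve the symmetric pair $(\vv_{\n+1},\vv_{\n+\m-1})$, which is a genuine pair distinct from $\vv_{\n+\frac{\m}{2}}$ because $\m$ is even and $\m \geq 4$. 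Combined with the previous step, this pair is unresolved by every element of $\W$, contradicting that $\W$ is a resolving set. The case $\vv_{\kk} \in \C_{\m}\setminus\{\vv_{\n}\}$ is handled symmetrically using the pair $(\vv_{1},\vv_{\n-1})$ in $\C_{\n}$.

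The main obstacle is the first equidistance claim, because it is the one that really depends on the parity hypothesis: the antipodal automorphism exists precisely because $\n$ is even, and only then is $\vv_{\frac{\n}{2}}$ the well-defined antipode of $\vv_{\n}$. Once that identity is paired with the cut-vertex observation, the case analysis on $\vv_{\kk}$ dovetails exactly with the argument already used in Case~3 of the proof of Lemma \ref{lem1} and becomes essentially forced.
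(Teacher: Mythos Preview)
Your proposal is correct and follows essentially the same approach as the paper: assume a third basis vertex $\vv_{\kk}$ exists, note that antipodality of $\vv_{\frac{\n}{2}}$ and $\vv_{\n+\frac{\m}{2}}$ to $\vv_{\n}$ together with the cut-vertex property leaves every $\vv_{\n}$-symmetric pair in the opposite cycle unresolved by all three elements of $\W$. The paper handles only the case $\vv_{\kk}\in\C_{\n}$ and invokes ``without loss of generality'' for the other, whereas you spell out both cases and use explicit witness pairs $(\vv_{\n+1},\vv_{\n+\m-1})$ and $(\vv_{1},\vv_{\n-1})$; these are minor presentational differences, not a different argument.
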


\begin{proof}
	Let $\W$ be a metric basis for $\C_{\n,\m}$, where $\n,\m$ are even, and let us suppose on the contrary that $\left \{ \vv_{\frac{\n}{2}}, \vv_{\frac{\n+m}{2}}\right\} \subset \W$. By Theorem \ref{odd}, $\beta(\C_{\n,\m})=3$. Without loss of generality, let $\vv_{\jj}\in \C_{\n}$ be the the third element of $\W$, then $\W=\left\{\vv_{\jj}, \vv_{\frac{\n}{2}}, \vv_{\frac{\n+m}{2}}\right\}.$ Suppose $\dd(\vv_{\jj}, \vv_{\frac{\n}{2}})=k$, then $\dd(\vv_{\jj},\vv_{\n})=\frac{\n}{2}-k.$ Now consider two more vertices $\vv_{\aaa},\vv_{\bb}\in \C_{\m}$ which are equidistant from $\vv_{\n}$. Let $\dd(\vv_{\aaa},\vv_{\n})=\dd(\vv_{\bb},\vv_{\n})=l$. Then $\dd\left(\vv_{\aaa}, \vv_{\frac{\n+m}{2}}\right)=\dd\left(\vv_{\bb}, \vv_{\frac{\n+\m}{2}}\right)=\frac{\m}{2}-l$. These concepts are visually illustrated in Figure \ref{proof2figure}. Now, it is a simple argument to show that $r(\vv_{\aaa}|\W)=r(\vv_{\bb}|\W)$. This is a contradiction to the fact that $\W$ is a metric basis and hence, $\left \{ \vv_{\frac{\n}{2}}, \vv_{\frac{\n+m}{2}}\right \} \not\subset \W$ for any metric basis set $\W$ of $\C_{\n,\m}.$\qedhere

	\begin{figure}[H]
		\begin{center}
			\tikzset{every picture/.style={line width=0.75pt}} 
			\resizebox{6cm}{3.0cm}{%
\begin{tikzpicture}[x=0.75pt,y=0.75pt,yscale=-1,xscale=1]
	
	\draw  [fill={rgb, 255:red, 0; green, 0; blue, 0 }  ,fill opacity=1 ] (361.98,142.29) .. controls (361.98,139.44) and (364.18,137.13) .. (366.9,137.13) .. controls (369.62,137.13) and (371.82,139.44) .. (371.82,142.29) .. controls (371.82,145.14) and (369.62,147.45) .. (366.9,147.45) .. controls (364.18,147.45) and (361.98,145.14) .. (361.98,142.29) -- cycle ;
	\draw   (366.9,142.29) .. controls (366.9,104.51) and (397.22,73.89) .. (434.62,73.89) .. controls (472.02,73.89) and (502.33,104.51) .. (502.33,142.29) .. controls (502.33,180.06) and (472.02,210.69) .. (434.62,210.69) .. controls (397.22,210.69) and (366.9,180.06) .. (366.9,142.29) -- cycle ;
	\draw   (231.46,142.29) .. controls (231.46,104.51) and (261.78,73.89) .. (299.18,73.89) .. controls (336.58,73.89) and (366.9,104.51) .. (366.9,142.29) .. controls (366.9,180.06) and (336.58,210.69) .. (299.18,210.69) .. controls (261.78,210.69) and (231.46,180.06) .. (231.46,142.29) -- cycle ;
	\draw  [fill={rgb, 255:red, 208; green, 2; blue, 27 }  ,fill opacity=1 ] (290.59,211.09) .. controls (290.59,208.24) and (292.79,205.93) .. (295.51,205.93) .. controls (298.23,205.93) and (300.43,208.24) .. (300.43,211.09) .. controls (300.43,213.94) and (298.23,216.25) .. (295.51,216.25) .. controls (292.79,216.25) and (290.59,213.94) .. (290.59,211.09) -- cycle ;
	\draw  [fill={rgb, 255:red, 126; green, 211; blue, 33 }  ,fill opacity=1 ] (386.88,90.47) .. controls (386.88,87.62) and (389.09,85.31) .. (391.81,85.31) .. controls (394.53,85.31) and (396.73,87.62) .. (396.73,90.47) .. controls (396.73,93.31) and (394.53,95.62) .. (391.81,95.62) .. controls (389.09,95.62) and (386.88,93.31) .. (386.88,90.47) -- cycle ;
	\draw  [fill={rgb, 255:red, 126; green, 211; blue, 33 }  ,fill opacity=1 ] (385.82,193.9) .. controls (385.82,191.05) and (388.03,188.74) .. (390.74,188.74) .. controls (393.46,188.74) and (395.67,191.05) .. (395.67,193.9) .. controls (395.67,196.75) and (393.46,199.06) .. (390.74,199.06) .. controls (388.03,199.06) and (385.82,196.75) .. (385.82,193.9) -- cycle ;
	\draw  [fill={rgb, 255:red, 0; green, 0; blue, 0 }  ,fill opacity=1 ] (226.46,142.29) .. controls (226.46,139.44) and (228.67,137.13) .. (231.39,137.13) .. controls (234.1,137.13) and (236.31,139.44) .. (236.31,142.29) .. controls (236.31,145.14) and (234.1,147.45) .. (231.39,147.45) .. controls (228.67,147.45) and (226.46,145.14) .. (226.46,142.29) -- cycle ;
	\draw  [fill={rgb, 255:red, 0; green, 0; blue, 0 }  ,fill opacity=1 ] (497.46,140.29) .. controls (497.46,137.44) and (499.67,135.13) .. (502.39,135.13) .. controls (505.1,135.13) and (507.31,137.44) .. (507.31,140.29) .. controls (507.31,143.14) and (505.1,145.45) .. (502.39,145.45) .. controls (499.67,145.45) and (497.46,143.14) .. (497.46,140.29) -- cycle ;
	\draw  [draw opacity=0] (296.13,195.55) .. controls (267.76,193.98) and (245.25,170.73) .. (245.25,142.29) .. controls (245.25,142.19) and (245.25,142.1) .. (245.25,142) -- (299.18,142.29) -- cycle ; \draw    (296.13,195.55) .. controls (267.76,193.98) and (245.25,170.73) .. (245.25,142.29) .. controls (245.25,142.19) and (245.25,142.1) .. (245.25,142) ; \draw [shift={(245.25,142)}, rotate = 83.85] [color={rgb, 255:red, 0; green, 0; blue, 0 }  ][line width=0.75]    (0,5.59) -- (0,-5.59)   ; \draw [shift={(296.13,195.55)}, rotate = 9.78] [color={rgb, 255:red, 0; green, 0; blue, 0 }  ][line width=0.75]    (0,5.59) -- (0,-5.59)   ;
	\draw  [draw opacity=0] (380.76,145.14) .. controls (380.71,144.19) and (380.68,143.24) .. (380.68,142.29) .. controls (380.68,124.68) and (389.3,109.07) .. (402.6,99.35) -- (434.62,142.29) -- cycle ; \draw    (380.76,145.14) .. controls (380.71,144.19) and (380.68,143.24) .. (380.68,142.29) .. controls (380.68,124.68) and (389.3,109.07) .. (402.6,99.35) ; \draw [shift={(402.6,99.35)}, rotate = 137.84] [color={rgb, 255:red, 0; green, 0; blue, 0 }  ][line width=0.75]    (0,5.59) -- (0,-5.59)   ; \draw [shift={(380.76,145.14)}, rotate = 93.32] [color={rgb, 255:red, 0; green, 0; blue, 0 }  ][line width=0.75]    (0,5.59) -- (0,-5.59)   ;
	
	\draw (331.11,130.31) node [anchor=north west][inner sep=0.75pt]  [font=\footnotesize]  {$\vv_{\n}$};
	\draw (268.76,32.69) node [anchor=north west][inner sep=0.75pt]  [font=\large]  {$\C_{\n}$};
	\draw (425.51,32.08) node [anchor=north west][inner sep=0.75pt]  [font=\large]  {$\C_{\m}$};
	\draw (365.32,64.26) node [anchor=north west][inner sep=0.75pt]  [font=\footnotesize]  {$\vv_{\aaa}$};
	\draw (372.36,199.13) node [anchor=north west][inner sep=0.75pt]  [font=\footnotesize]  {$\vv_{\bb}$};
	\draw (302.87,210.66) node [anchor=north west][inner sep=0.75pt]  [font=\footnotesize]  {$\vv_{\ii}$};
	\draw (194.32,127.26) node [anchor=north west][inner sep=0.75pt]  [font=\footnotesize]  {$\vv_{\frac{\n}{2}}$};
	\draw (523.32,127.26) node [anchor=north west][inner sep=0.75pt]  [font=\footnotesize]  {$\vv_{\cfrac{\n+m}{2}}$};
	\draw (263,156.4) node [anchor=north west][inner sep=0.75pt]    {$\kk$};
	\draw (396,116.4) node [anchor=north west][inner sep=0.75pt]    {$\el$};

\end{tikzpicture}

			}
			\caption{Relationship of $\left \{ \vv_{\frac{\n}{2}}, \vv_{\frac{\n+m}{2}}\right\}$ and metric basis of base bicyclic graph of type-I for even $\n,\m$.}
			\label{proof2figure}
		\end{center}
	\end{figure}
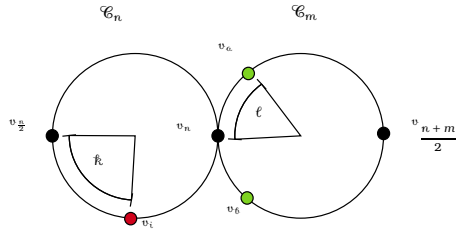
	
\end{proof}
Note that the vertex $\vv_{\frac{\n}{2}}$ or $\vv_{\n+\frac{\m}{2}}$ may individually appear in a metric basis set but they can not appear together.
\begin{Lemma} \label{oddoddmiddleall}
	Let $\C_{\n,\m}$ be a base bicyclic graph of type-I, where $\n,\m$ are odd, then $\W=\left \{ \vv_{\left \lfloor \frac{\n}{2}\right \rfloor },\vv_{\jj}\right \} $ $\left(resp. \ \W=\left \{ \vv_{\left \lfloor \frac{\n}{2}\right \rfloor+1 },\vv_{\jj}\right \} \right)$
	is always a metric basis of $\C_{\n,\m}$ for all $\vv_{\jj}\in \C_{\m}-\{\vv_{\n}\}$.
\end{Lemma}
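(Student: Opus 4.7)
The plan is to invoke Theorem~\ref{odd}, which gives $\beta(\C_{\n,\m})=2$ when both $\n$ and $\m$ are odd, so the task reduces to verifying that $\W$ is a resolving set. The two candidates $\vv_{\lfloor \n/2\rfloor}$ and $\vv_{\lfloor\n/2\rfloor+1}$ are interchanged by the automorphism of $\C_\n$ that fixes $\vv_\n$ and reverses the remaining labels, so it suffices to treat $\W=\{\vv_{\lfloor\n/2\rfloor},\vv_\jj\}$. Write $\n=2\p+1$, $\m=2\q+1$, and set $\vv_\p:=\vv_{\lfloor\n/2\rfloor}$.

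First I would use the first coordinate to separate the two cycles. Since $\vv_\n$ is a cut-vertex, every $\vv\in\C_\m-\{\vv_\n\}$ satisfies $\dd(\vv_\p,\vv)=\p+\dd(\vv_\n,\vv)\ge\p+1$, whereas $\p$ is the radius of the odd cycle $\C_\n$, so $\dd(\vv_\p,\vv_\ii)\le\p$ for every $\vv_\ii\in\C_\n$. Hence no vertex of $\C_\n$ can share a distance vector with a vertex of $\C_\m-\{\vv_\n\}$.

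Next I would resolve the pairs in $\C_\n$ that $\vv_\p$ alone does not distinguish; a routine cycle computation shows these are precisely $\{\vv_{\p-\kk},\vv_{\p+\kk}\}$ for $1\le\kk\le\p-1$ together with $\{\vv_{\n-1},\vv_\n\}$. For any $\vv_\ii\in\C_\n$, a shortest $\vv_\jj$--$\vv_\ii$ path passes through $\vv_\n$, giving $\dd(\vv_\jj,\vv_\ii)=\dd(\vv_\jj,\vv_\n)+\dd(\vv_\n,\vv_\ii)$; one then checks in $\C_\n$ that $\dd(\vv_\n,\vv_{\p-\kk})=\p-\kk\ne \p-\kk+1=\dd(\vv_\n,\vv_{\p+\kk})$ and that $\dd(\vv_\n,\vv_{\n-1})=1\ne 0=\dd(\vv_\n,\vv_\n)$, so $\vv_\jj$ distinguishes each of these problematic pairs.

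Finally I would handle pairs lying entirely in $\C_\m$. Since $\dd(\vv_\p,\vv)=\p+\dd(\vv_\n,\vv)$ for every $\vv\in\C_\m$ (including $\vv=\vv_\n$), the first coordinate on $\C_\m$ is a constant shift of the distance from $\vv_\n$, so it suffices to show that $\{\vv_\n,\vv_\jj\}$ resolves the odd cycle $\C_\m$. Labelling the vertices of $\C_\m$ cyclically, two distinct vertices are equidistant from a vertex $\uu\in\C_\m$ exactly when their labels sum to $2\uu\pmod{\m}$; if the same pair were also equidistant from $\vv_\jj$, we would obtain $2\vv_\n\equiv 2\vv_\jj\pmod{\m}$, and since $\gcd(2,\m)=1$ this forces $\vv_\n=\vv_\jj$, contradicting $\vv_\jj\in\C_\m-\{\vv_\n\}$. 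Combining the three steps, $\W$ is a resolving set of size $2=\beta(\C_{\n,\m})$, hence a metric basis. The main obstacle is this final step, where the reflection/parity argument on odd cycles is what avoids a tedious case split on the position of $\vv_\jj$ in $\C_\m$.
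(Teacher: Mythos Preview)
Your argument is correct and follows the same overall decomposition as the paper: rule out mixed pairs (one vertex in each cycle), then handle pairs inside $\C_{\n}$ via the cut-vertex $\vv_{\n}$, then pairs inside $\C_{\m}$. The paper establishes the mixed case by a one-line parity remark and your explicit radius bound $\dd(\vv_{\p},\cdot)\le \p$ on $\C_{\n}$ versus $\dd(\vv_{\p},\cdot)\ge \p+1$ on $\C_{\m}-\{\vv_{\n}\}$ is simply a more transparent version of the same observation.

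The genuine difference is in the $\C_{\m}$ case. The paper fixes $\vv_{\aaa},\vv_{\bb}\in\C_{\m}$ equidistant from $\vv_{\lfloor\n/2\rfloor}$ and runs a four-way geometric case split on whether $\vv_{\jj}$ lies on the $\vv_{\aaa}\vv_{\bb}$-arc through $\vv_{\n}$ or through $\vv_{\n+\lfloor\m/2\rfloor}$, and on which arc the shortest $\vv_{\jj}\vv_{\bb}$-path takes. Your reduction to showing that $\{\vv_{\n},\vv_{\jj}\}$ resolves the odd cycle $\C_{\m}$, followed by the congruence argument $2\vv_{\n}\equiv 2\vv_{\jj}\pmod{\m}\Rightarrow \vv_{\n}=\vv_{\jj}$, is strictly more economical: it replaces the entire case analysis by the standard fact that any two distinct vertices resolve an odd cycle. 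What the paper's version buys is that it stays purely in distance language with no relabelling, which may be marginally easier to follow for a reader unfamiliar with the modular characterization of equidistant pairs; what your version buys is brevity and a proof that makes the role of ``$\m$ odd'' completely explicit.
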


\begin{proof}
	Let $\C_{\n,\m}$ be a base bicyclic graph of type-I, where $\n,\m$ are odd. By Lemma \ref{lem1}, $\vv_{\n} \notin \W$. To prove that $\W=\left \{ \vv_{\left \lfloor \frac{\n}{2}\right \rfloor},\vv_{\jj}\right \} $, $\vv_{\jj} \in \C_{\m}-\{\vv_{\n}\}$, is a metric basis of $\C_{\n,\m}$, we need to show that it resolves all pairs of vertices of $\C_{\n,\m}$. Let $\vv_{\aaa},\vv_{\bb} \in \V(\C_{\n,\m})(\vv_{\aaa} \neq \vv_{\bb}) $. If we take $\dd\left(\vv_{\aaa}, \vv_{\left \lfloor \frac{\n}{2}\right \rfloor}\right) \neq d\left(\vv_{\bb}, \vv_{\left \lfloor \frac{\n}{2}\right \rfloor}\right)$, there is nothing to prove, so we consider $\vv_{\aaa},\vv_{\bb}$ to be equidistant from $\vv_{\left \lfloor \frac{\n}{2}\right \rfloor}$. Since $\vv_{\aaa}, \vv_{\bb}$ are equidistant from $\vv_{\left \lfloor \frac{\n}{2}\right \rfloor}$ and $\n$ is odd, $\vv_{\aaa} \in \C_{\n}$ and $\vv_{\bb} \in \C_{\m}$ or vice versa, is not possible. We discuss the other two cases in the following.
	\begin{mycases}
		\case  When $\vv_{\aaa},\vv_{\bb}\in \C_{\n}$.
		
		Let $\dd\left(\vv_{\aaa}, \vv_{\left \lfloor \frac{\n}{2}\right \rfloor}\right)=\dd\left(\vv_{\bb}, \vv_{\left \lfloor \frac{\n}{2}\right \rfloor}\right)=k$. Let $\PP_{1}$ and $\PP_{2}$ be the two distinct $\vv_{\n} \vv_{\left \lfloor \frac{\n}{2}\right \rfloor}$-paths. Since $\n$ is odd, without loss of generality, we can consider $\left|\PP_{1}\right| < \left|\PP_{2}\right|$. Also, since $\vv_{\aaa},\vv_{\bb}$ are equidistant from $\vv_{\left \lfloor \frac{\n}{2}\right \rfloor}$, we conclude that if $\vv_{\aaa}$ lies on $\PP_1$ then $\vv_{\bb}$ lies on $\PP_2$ and vice versa. Let us suppose that $\vv_{\aaa}$ lies on $\PP_1$ and $\vv_{\bb}$ lies on $\PP_2$. It is obvious that $\vv_{\aaa} \vv_{\bb}$-path through $\vv_{\left \lfloor \frac{\n}{2}\right \rfloor}$ is of even length and $\vv_{\aaa} \vv_{\bb}$-path through $\vv_{\n}$ is of odd length. This argument ensures that $\dd(\vv_{\aaa},\vv_{\n}) \neq \dd(\vv_{\bb},\vv_{\n})$. These explanations can easily be seen in Figure \ref{proof3figure}. 
		
		\begin{figure}[H]
			\begin{center}
				\tikzset{every picture/.style={line width=0.75pt}} 
				\resizebox{6.4cm}{3.7cm}{
					\begin{tikzpicture}[x=0.75pt,y=0.75pt,yscale=-1,xscale=1]
	
	\draw  [fill={rgb, 255:red, 0; green, 0; blue, 0 }  ,fill opacity=1 ] (327.98,119.29) .. controls (327.98,116.44) and (330.18,114.13) .. (332.9,114.13) .. controls (335.62,114.13) and (337.82,116.44) .. (337.82,119.29) .. controls (337.82,122.14) and (335.62,124.45) .. (332.9,124.45) .. controls (330.18,124.45) and (327.98,122.14) .. (327.98,119.29) -- cycle ;
	\draw   (332.9,119.29) .. controls (332.9,81.51) and (363.22,50.89) .. (400.62,50.89) .. controls (438.02,50.89) and (468.33,81.51) .. (468.33,119.29) .. controls (468.33,157.06) and (438.02,187.69) .. (400.62,187.69) .. controls (363.22,187.69) and (332.9,157.06) .. (332.9,119.29) -- cycle ;
	\draw   (197.46,119.29) .. controls (197.46,81.51) and (227.78,50.89) .. (265.18,50.89) .. controls (302.58,50.89) and (332.9,81.51) .. (332.9,119.29) .. controls (332.9,157.06) and (302.58,187.69) .. (265.18,187.69) .. controls (227.78,187.69) and (197.46,157.06) .. (197.46,119.29) -- cycle ;
	\draw  [fill={rgb, 255:red, 126; green, 211; blue, 33 }  ,fill opacity=1 ] (229.59,181.09) .. controls (229.59,178.24) and (231.79,175.93) .. (234.51,175.93) .. controls (237.23,175.93) and (239.43,178.24) .. (239.43,181.09) .. controls (239.43,183.94) and (237.23,186.25) .. (234.51,186.25) .. controls (231.79,186.25) and (229.59,183.94) .. (229.59,181.09) -- cycle ;
	\draw  [fill={rgb, 255:red, 126; green, 211; blue, 33 }  ,fill opacity=1 ] (284.34,54.89) .. controls (284.34,52.04) and (286.54,49.73) .. (289.26,49.73) .. controls (291.98,49.73) and (294.18,52.04) .. (294.18,54.89) .. controls (294.18,57.74) and (291.98,60.05) .. (289.26,60.05) .. controls (286.54,60.05) and (284.34,57.74) .. (284.34,54.89) -- cycle ;
	\draw  [fill={rgb, 255:red, 0; green, 0; blue, 0 }  ,fill opacity=1 ] (199.46,88.29) .. controls (199.46,85.44) and (201.67,83.13) .. (204.39,83.13) .. controls (207.1,83.13) and (209.31,85.44) .. (209.31,88.29) .. controls (209.31,91.14) and (207.1,93.45) .. (204.39,93.45) .. controls (201.67,93.45) and (199.46,91.14) .. (199.46,88.29) -- cycle ;
	\draw  [fill={rgb, 255:red, 0; green, 0; blue, 0 }  ,fill opacity=1 ] (456.46,92.29) .. controls (456.46,89.44) and (458.67,87.13) .. (461.39,87.13) .. controls (464.1,87.13) and (466.31,89.44) .. (466.31,92.29) .. controls (466.31,95.14) and (464.1,97.45) .. (461.39,97.45) .. controls (458.67,97.45) and (456.46,95.14) .. (456.46,92.29) -- cycle ;
	\draw  [fill={rgb, 255:red, 0; green, 0; blue, 0 }  ,fill opacity=1 ] (200.46,151.29) .. controls (200.46,148.44) and (202.67,146.13) .. (205.39,146.13) .. controls (208.1,146.13) and (210.31,148.44) .. (210.31,151.29) .. controls (210.31,154.14) and (208.1,156.45) .. (205.39,156.45) .. controls (202.67,156.45) and (200.46,154.14) .. (200.46,151.29) -- cycle ;
	\draw  [fill={rgb, 255:red, 208; green, 2; blue, 27 }  ,fill opacity=1 ] (395.69,50.89) .. controls (395.69,48.04) and (397.9,45.73) .. (400.62,45.73) .. controls (403.33,45.73) and (405.54,48.04) .. (405.54,50.89) .. controls (405.54,53.74) and (403.33,56.05) .. (400.62,56.05) .. controls (397.9,56.05) and (395.69,53.74) .. (395.69,50.89) -- cycle ;
	\draw  [fill={rgb, 255:red, 0; green, 0; blue, 0 }  ,fill opacity=1 ] (457.46,145.29) .. controls (457.46,142.44) and (459.67,140.13) .. (462.39,140.13) .. controls (465.1,140.13) and (467.31,142.44) .. (467.31,145.29) .. controls (467.31,148.14) and (465.1,150.45) .. (462.39,150.45) .. controls (459.67,150.45) and (457.46,148.14) .. (457.46,145.29) -- cycle ;
	
	\draw (345.11,114.31) node [anchor=north west][inner sep=0.75pt]  [font=\footnotesize]  {$\vv_{\n}$};
	\draw (245.76,231.69) node [anchor=north west][inner sep=0.75pt]  [font=\large]  {$\C_{\n}$};
	\draw (402.51,232.08) node [anchor=north west][inner sep=0.75pt]  [font=\large]  {$\C_{\m}$};
	\draw (227.18,192.09) node [anchor=north west][inner sep=0.75pt]  [font=\footnotesize]  {$\vv_{\bb}$};
	\draw (157.32,77.26) node [anchor=north west][inner sep=0.75pt]  [font=\footnotesize]  {$\vv_{\lfloor \frac{\n}{2} \rfloor }$};
	\draw (476.32,67.26) node [anchor=north west][inner sep=0.75pt]  [font=\footnotesize]  {$\vv_{\n+\lfloor \frac{\m}{2} \rfloor }$};
	\draw (238,62.4) node [anchor=north west][inner sep=0.75pt]  [font=\footnotesize]  {$\PP_{1}$};
	\draw (277,163.4) node [anchor=north west][inner sep=0.75pt]  [font=\footnotesize]  {$\PP_{2}$};
	\draw (157.32,141.26) node [anchor=north west][inner sep=0.75pt]  [font=\footnotesize]  {$\vv_{\lfloor \frac{\n}{2} \rfloor +1}$};
	\draw (287,29.4) node [anchor=north west][inner sep=0.75pt]  [font=\footnotesize]  {$\vv_{\aaa}$};
	\draw (476.32,127.26) node [anchor=north west][inner sep=0.75pt]  [font=\footnotesize]  {$\vv_{\n+\lfloor \frac{\m}{2} \rfloor +1}$};
	\draw (407,20.4) node [anchor=north west][inner sep=0.75pt]  [font=\footnotesize]  {$\vv_{\jj}$};

\end{tikzpicture}

				}
				\setlength{\belowcaptionskip}{-20pt}
				\caption{$\vv_{\aaa},\vv_{\bb} \in \C_{\n}$ for base bicyclic graph $\C_{\n,\m}$ of type-I for odd $\n,\m$}
				\label{proof3figure}
			\end{center}
		\end{figure}
		\vspace{0.5cm}
		
		Now, $\dd(\vv_{\aaa},\vv_{\jj}) =\dd(\vv_{\aaa},\vv_{\n})+\dd(\vv_{\n},\vv_{\jj})\text{ and } 	d(\vv_{\bb},\vv_{\jj})=\dd(\vv_{\bb},\vv_{\n})+\dd(\vv_{\n},\vv_{\jj})$ implies that $\dd(\vv_{\aaa},\vv_{\jj})\neq \dd(\vv_{\bb},\vv_{\jj})$. Hence $\vv_{\jj}$ resolves the vertices $\vv_{\aaa}$ and $\vv_{\bb}$.
		\case When $\vv_{\aaa},\vv_{\bb} \in \C_{\m}$.	
		
		Since $\vv_{\aaa}, \vv_{\bb} \in \C_{\m}$ and are equidistant from $\vv_{\left\lfloor \frac{\n}{2} \right\rfloor }$, we can easily conclude that $\dd(\vv_{\aaa},\vv_{\n})=\dd(\vv_{\bb},\vv_{\n})$. Also, $\vv_{\aaa}$ and $\vv_{\bb}$ can not both lie on the same $\left(v_{\n} \vv_{\n+\left\lfloor \frac{\m}{2} \right\rfloor }\right)$-path. Let $\PP$ and $\PP'$ be these two distinct $\left(v_{\n} \vv_{\n+\left\lfloor \frac{\m}{2} \right\rfloor }\right)$-paths, such that $\vv_{\aaa}$ lies on $\PP$ and $\vv_{\bb}$ lies on $\PP'$, as shown in the figure. 
		
		\begin{figure}[H]
			\begin{center}
				\tikzset{every picture/.style={line width=0.75pt}} 
				\resizebox{6.4cm}{3.6cm}{					
					\begin{tikzpicture}[x=0.75pt,y=0.75pt,yscale=-1,xscale=1]
						
						\draw  [fill={rgb, 255:red, 0; green, 0; blue, 0 }  ,fill opacity=1 ] (315.98,135.29) .. controls (315.98,132.44) and (318.18,130.13) .. (320.9,130.13) .. controls (323.62,130.13) and (325.82,132.44) .. (325.82,135.29) .. controls (325.82,138.14) and (323.62,140.45) .. (320.9,140.45) .. controls (318.18,140.45) and (315.98,138.14) .. (315.98,135.29) -- cycle ;
						\draw   (320.9,135.29) .. controls (320.9,97.51) and (351.22,66.89) .. (388.62,66.89) .. controls (426.02,66.89) and (456.33,97.51) .. (456.33,135.29) .. controls (456.33,173.06) and (426.02,203.69) .. (388.62,203.69) .. controls (351.22,203.69) and (320.9,173.06) .. (320.9,135.29) -- cycle ;
						\draw   (185.46,135.29) .. controls (185.46,97.51) and (215.78,66.89) .. (253.18,66.89) .. controls (290.58,66.89) and (320.9,97.51) .. (320.9,135.29) .. controls (320.9,173.06) and (290.58,203.69) .. (253.18,203.69) .. controls (215.78,203.69) and (185.46,173.06) .. (185.46,135.29) -- cycle ;
						\draw  [fill={rgb, 255:red, 126; green, 211; blue, 33 }  ,fill opacity=1 ] (368.59,202.09) .. controls (368.59,199.24) and (370.79,196.93) .. (373.51,196.93) .. controls (376.23,196.93) and (378.43,199.24) .. (378.43,202.09) .. controls (378.43,204.94) and (376.23,207.25) .. (373.51,207.25) .. controls (370.79,207.25) and (368.59,204.94) .. (368.59,202.09) -- cycle ;
						\draw  [fill={rgb, 255:red, 126; green, 211; blue, 33 }  ,fill opacity=1 ] (368.34,69.89) .. controls (368.34,67.04) and (370.54,64.73) .. (373.26,64.73) .. controls (375.98,64.73) and (378.18,67.04) .. (378.18,69.89) .. controls (378.18,72.74) and (375.98,75.05) .. (373.26,75.05) .. controls (370.54,75.05) and (368.34,72.74) .. (368.34,69.89) -- cycle ;
						\draw  [fill={rgb, 255:red, 0; green, 0; blue, 0 }  ,fill opacity=1 ] (180.54,130.13) .. controls (180.54,127.28) and (182.75,124.97) .. (185.46,124.97) .. controls (188.18,124.97) and (190.39,127.28) .. (190.39,130.13) .. controls (190.39,132.98) and (188.18,135.29) .. (185.46,135.29) .. controls (182.75,135.29) and (180.54,132.98) .. (180.54,130.13) -- cycle ;
						\draw  [fill={rgb, 255:red, 0; green, 0; blue, 0 }  ,fill opacity=1 ] (449.46,121.29) .. controls (449.46,118.44) and (451.67,116.13) .. (454.39,116.13) .. controls (457.1,116.13) and (459.31,118.44) .. (459.31,121.29) .. controls (459.31,124.14) and (457.1,126.45) .. (454.39,126.45) .. controls (451.67,126.45) and (449.46,124.14) .. (449.46,121.29) -- cycle ;
						\draw  [fill={rgb, 255:red, 208; green, 2; blue, 27 }  ,fill opacity=1 ] (333.69,89.89) .. controls (333.69,87.04) and (335.9,84.73) .. (338.62,84.73) .. controls (341.33,84.73) and (343.54,87.04) .. (343.54,89.89) .. controls (343.54,92.74) and (341.33,95.05) .. (338.62,95.05) .. controls (335.9,95.05) and (333.69,92.74) .. (333.69,89.89) -- cycle ;
						
						\draw (288.11,130.31) node [anchor=north west][inner sep=0.75pt]  [font=\footnotesize]  {$\vv_{\n}$};
						\draw (233.76,247.69) node [anchor=north west][inner sep=0.75pt]  [font=\large]  {$\C_{\n}$};
						\draw (390.51,248.08) node [anchor=north west][inner sep=0.75pt]  [font=\large]  {$\C_{\m}$};
						\draw (367.18,216.09) node [anchor=north west][inner sep=0.75pt]  [font=\footnotesize]  {$\vv_{\bb}$};
						\draw (141.32,119.26) node [anchor=north west][inner sep=0.75pt]  [font=\footnotesize]  {$\vv_{\lfloor \frac{\n}{2} \rfloor }$};
						\draw (473.32,110.26) node [anchor=north west][inner sep=0.75pt]  [font=\footnotesize]  {$\vv_{\n+\lfloor \frac{\m}{2} \rfloor }$};
						\draw (368,37.4) node [anchor=north west][inner sep=0.75pt]  [font=\footnotesize]  {$\vv_{\aaa}$};
						\draw (333,57.4) node [anchor=north west][inner sep=0.75pt]  [font=\footnotesize]  {$\vv_{\jj}$};
						\draw (412,83.4) node [anchor=north west][inner sep=0.75pt]  [font=\large]  {$\PP$};
						\draw (349,166.4) node [anchor=north west][inner sep=0.75pt]  [font=\large]  {$\PP'$};

					\end{tikzpicture}

				}
				\setlength{\belowcaptionskip}{-20pt}
				\caption{$\vv_{\aaa},\vv_{\bb} \in \C_{\m}$ for base bicyclic graph $\C_{\n,\m}$ of type-I for odd $\n,\m$ and $\vv_{\jj}$ on $\PP_{\uu\vv}$ through $\vv_{\n}$}
				\label{proof4figure}
			\end{center}
		\end{figure}	
		Since $\C_{\m}$ is an odd cycle, we can easily conclude that $\left|\PP\right|\neq \left|\PP'\right|$. Since $\vv_{\jj}$ can be any vertex of $\C_{\m}$, without loss of generality, we suppose that $\vv_{\jj}$ lies on $\PP$. Following two possibilities arise for the placement of $\vv_{\jj}$.
		\begin{subcases}
			\subcase When $\vv_{\jj}$ lies on the path $\vv_{\aaa} \vv_{\bb}$-path through $\vv_{\n}$. This possibility is evident in Figure \ref{proof4figure}.
			
			\begin{subsubcases}
				\subsubcase If the shortest $\vv_{\jj} \vv_{\bb}$-path passes through $\vv_{\n}$, we get \[d(\vv_{\jj},\vv_{\aaa})=\dd(\vv_{\n},\vv_{\aaa})-\dd(\vv_{\n},\vv_{\jj}) \text{ and } \dd(\vv_{\jj},\vv_{\bb})=\dd(\vv_{\n},\vv_{\bb})+\dd(\vv_{\n},\vv_{\jj}).\] Since $\dd(\vv_{\n},\vv_{\aaa})=\dd(\vv_{\n},\vv_{\bb})$, we can easily conclude that $\dd(\vv_{\jj},\vv_{\aaa}) \neq \dd(\vv_{\jj},\vv_{\bb})$. Hence, $\vv_{\jj}$ resolves $\vv_{\aaa}$ and $\vv_{\bb}$.
				\subsubcase If the shortest $\vv_{\jj} \vv_{\bb}$-path passes through $\vv_{\n+\left\lfloor \frac{\m}{2}\right\rfloor}$, we get $\dd(\vv_{\jj},\vv_{\bb})=\dd(\vv_{\jj},\vv_{\aaa})+\dd(\vv_{\aaa},\vv_{\bb})$, again ensuring that $\vv_{\jj}$ resolves $\vv_{\aaa}$ and $\vv_{\bb}$.
			\end{subsubcases}
			
			\subcase When $\vv_{\jj}$ lies on the $\vv_{\aaa} \vv_{\bb}$-path through $\vv_{\n+\left\lfloor \frac{\m}{2}\right\rfloor}$. This is shown in Figure \ref{proof5figure}.

			\begin{figure}[H]
				\begin{center}
					\tikzset{every picture/.style={line width=0.75pt}} 
					\resizebox{6.4cm}{3.6cm}{
						\begin{tikzpicture}[x=0.75pt,y=0.75pt,yscale=-1,xscale=1]
							
							\draw  [fill={rgb, 255:red, 0; green, 0; blue, 0 }  ,fill opacity=1 ] (315.98,135.29) .. controls (315.98,132.44) and (318.18,130.13) .. (320.9,130.13) .. controls (323.62,130.13) and (325.82,132.44) .. (325.82,135.29) .. controls (325.82,138.14) and (323.62,140.45) .. (320.9,140.45) .. controls (318.18,140.45) and (315.98,138.14) .. (315.98,135.29) -- cycle ;
							\draw   (320.9,135.29) .. controls (320.9,97.51) and (351.22,66.89) .. (388.62,66.89) .. controls (426.02,66.89) and (456.33,97.51) .. (456.33,135.29) .. controls (456.33,173.06) and (426.02,203.69) .. (388.62,203.69) .. controls (351.22,203.69) and (320.9,173.06) .. (320.9,135.29) -- cycle ;
							\draw   (185.46,135.29) .. controls (185.46,97.51) and (215.78,66.89) .. (253.18,66.89) .. controls (290.58,66.89) and (320.9,97.51) .. (320.9,135.29) .. controls (320.9,173.06) and (290.58,203.69) .. (253.18,203.69) .. controls (215.78,203.69) and (185.46,173.06) .. (185.46,135.29) -- cycle ;
							\draw  [fill={rgb, 255:red, 126; green, 211; blue, 33 }  ,fill opacity=1 ] (368.59,202.09) .. controls (368.59,199.24) and (370.79,196.93) .. (373.51,196.93) .. controls (376.23,196.93) and (378.43,199.24) .. (378.43,202.09) .. controls (378.43,204.94) and (376.23,207.25) .. (373.51,207.25) .. controls (370.79,207.25) and (368.59,204.94) .. (368.59,202.09) -- cycle ;
							\draw  [fill={rgb, 255:red, 126; green, 211; blue, 33 }  ,fill opacity=1 ] (368.34,69.89) .. controls (368.34,67.04) and (370.54,64.73) .. (373.26,64.73) .. controls (375.98,64.73) and (378.18,67.04) .. (378.18,69.89) .. controls (378.18,72.74) and (375.98,75.05) .. (373.26,75.05) .. controls (370.54,75.05) and (368.34,72.74) .. (368.34,69.89) -- cycle ;
							\draw  [fill={rgb, 255:red, 0; green, 0; blue, 0 }  ,fill opacity=1 ] (180.54,130.13) .. controls (180.54,127.28) and (182.75,124.97) .. (185.46,124.97) .. controls (188.18,124.97) and (190.39,127.28) .. (190.39,130.13) .. controls (190.39,132.98) and (188.18,135.29) .. (185.46,135.29) .. controls (182.75,135.29) and (180.54,132.98) .. (180.54,130.13) -- cycle ;
							\draw  [fill={rgb, 255:red, 0; green, 0; blue, 0 }  ,fill opacity=1 ] (449.46,121.29) .. controls (449.46,118.44) and (451.67,116.13) .. (454.39,116.13) .. controls (457.1,116.13) and (459.31,118.44) .. (459.31,121.29) .. controls (459.31,124.14) and (457.1,126.45) .. (454.39,126.45) .. controls (451.67,126.45) and (449.46,124.14) .. (449.46,121.29) -- cycle ;
							\draw  [fill={rgb, 255:red, 208; green, 2; blue, 27 }  ,fill opacity=1 ] (432.69,87.89) .. controls (432.69,85.04) and (434.9,82.73) .. (437.62,82.73) .. controls (440.33,82.73) and (442.54,85.04) .. (442.54,87.89) .. controls (442.54,90.74) and (440.33,93.05) .. (437.62,93.05) .. controls (434.9,93.05) and (432.69,90.74) .. (432.69,87.89) -- cycle ;
							
							\draw (288.11,130.31) node [anchor=north west][inner sep=0.75pt]  [font=\footnotesize]  {$\vv_{\n}$};
							\draw (233.76,247.69) node [anchor=north west][inner sep=0.75pt]  [font=\large]  {$\C_{\n}$};
							\draw (390.51,248.08) node [anchor=north west][inner sep=0.75pt]  [font=\large]  {$\C_{\m}$};
							\draw (367.18,216.09) node [anchor=north west][inner sep=0.75pt]  [font=\footnotesize]  {$\vv_{\bb}$};
							\draw (141.32,119.26) node [anchor=north west][inner sep=0.75pt]  [font=\footnotesize]  {$\vv_{\lfloor \frac{\n}{2} \rfloor }$};
							\draw (473.32,110.26) node [anchor=north west][inner sep=0.75pt]  [font=\footnotesize]  {$\vv_{\n+\lfloor \frac{\m}{2} \rfloor }$};
							\draw (368,37.4) node [anchor=north west][inner sep=0.75pt]  [font=\footnotesize]  {$\vv_{\aaa}$};
							\draw (433,55.4) node [anchor=north west][inner sep=0.75pt]  [font=\footnotesize]  {$\vv_{\jj}$};
							\draw (347,100.4) node [anchor=north west][inner sep=0.75pt]  [font=\large]  {$\PP$};
							\draw (405,158.4) node [anchor=north west][inner sep=0.75pt]  [font=\large]  {$\PP'$};

						\end{tikzpicture}

					}
					\setlength{\belowcaptionskip}{-20pt}
					\caption{$\vv_{\aaa},\vv_{\bb} \in \C_{\m}$ for base bicyclic graph $\C_{\n,\m}$ of type-I for odd $\n,\m$ and $\vv_{\jj}$ on $\vv_{\aaa} \vv_{\bb}$-path through $\vv_{\n+\lfloor \frac{\m}{2} \rfloor }$.}
					\label{proof5figure}
				\end{center}
			\end{figure}
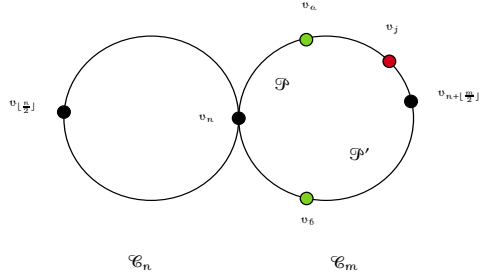
			If $\vv_{\jj}=\vv_{\n+\lfloor \frac{\m}{2} \rfloor }$, we can easily see that $\vv_{\jj}$ resolves $\vv_{\aaa}$ and $\vv_{\bb}$ by using the facts $\left|\PP\right|\neq \left|\PP'\right|$ and $\dd(\vv_{\n},\vv_{\aaa})=\dd(\vv_{\n},\vv_{\bb})$. Next, we consider the cases when $\vv_{\jj}\neq \vv_{\n+\lfloor \frac{\m}{2} \rfloor }$.
			\begin{subsubcases}
				\subsubcase If the shortest $\vv_{\jj} \vv_{\bb}$-paths passes through $\vv_{\n+\lfloor \frac{\m}{2} \rfloor }$, we get \[\dd(\vv_{\jj},\vv_{\bb})=\dd\left(\vv_{\jj},\vv_{\n+\lfloor \frac{\m}{2} \rfloor }\right)+\dd\left(\vv_{\bb}, \vv_{\n+\lfloor \frac{\m}{2} \rfloor }\right).\] On the other hand, \[\dd(\vv_{\jj},\vv_{\aaa})=\dd\left(\vv_{\aaa},\vv_{\n+\lfloor \frac{\m}{2} \rfloor }\right)-d\left(\vv_{\jj},\vv_{\n+\lfloor \frac{\m}{2} \rfloor }\right).\] 
				
				Let us suppose that $\vv_{\jj}$ does not resolve $\vv_{\aaa}$ and $\vv_{\bb}$, then $\dd(\vv_{\jj},\vv_{\aaa})=\dd(\vv_{\jj},\vv_{\bb})$ gives us,
				\begin{equation*}
					2\dd\left(\vv_{\jj},\vv_{\n+\lfloor \frac{\m}{2} \rfloor }\right)=\dd\left(\vv_{\aaa},\vv_{\n+\lfloor \frac{\m}{2} \rfloor }\right)-d\left(\vv_{\bb}, \vv_{\n+\lfloor \frac{\m}{2} \rfloor }\right).
				\end{equation*}
				Since $\dd\left(\vv_{\aaa},\vv_{\n+\lfloor \frac{\m}{2} \rfloor }\right) \neq d\left(\vv_{\bb}, \vv_{\n+\lfloor \frac{\m}{2} \rfloor }\right)$, and distance can not be a fraction, the above equation is a contradiction. Hence, $\vv_{\jj}$ resolves $\vv_{\aaa}$ and $\vv_{\bb}$.
				
				\subsubcase If the shortest $\vv_{\jj} \vv_{\bb}$-path passes through $\vv_{\n}$, we get $\dd(\vv_{\jj},\vv_{\bb})=\dd(\vv_{\jj},\vv_{\n})+\dd(\vv_{\n}, \vv_{\bb})$ and $\dd(\vv_{\jj},\vv_{\aaa})=\dd(\vv_{\jj},\vv_{\n})-\dd(\vv_{\n},\vv_{\aaa})$. Since $\dd(\vv_{\n},\vv_{\aaa})=\dd(\vv_{\n}, \vv_{\bb})$, we easily conclude that $\vv_{\jj}$ resolves $\vv_{\aaa}$ and $\vv_{\bb}$.
			\end{subsubcases}
			
		\end{subcases}
		
	\end{mycases}
	The above argument ensures that whenever $\vv_{\aaa},\vv_{\bb}$ are equidistant from $\vv_{\left\lfloor\frac{\n}{2}\right\rfloor}$, they are always resolved by $\vv_{\jj}$ for all $\vv_{\jj} \in \C_{\m}$, $\vv_{\jj} \neq \vv_{\n}$. Similarly, it can be shown that $\vv_{\left\lfloor\frac{\n}{2}\right\rfloor}$ resolves all $\vv_{\aaa},\vv_{\bb}$ equidistant from $\vv_{\jj}$. Hence, $\left \{ \vv_{\left \lfloor \frac{\n}{2}\right \rfloor },\vv_{\jj}\right \} $ is a metric basis for $\C_{\n,\m}$, $\n,\m$ odd, $\vv_{\jj} \neq \vv_{\n}$.
\end{proof}


\begin{Lemma}\label{lem2}
	Let $\C_{\n,\m}$ be a base bicyclic graph of type-I, where $\n,\m$ are odd, then $\ \W=\left \{ \vv_{\n+\left \lfloor \frac{\m}{2}\right \rfloor },\vv_{\jj}\right \} $ $\left(resp. \W=\left \{ \vv_{\n+\left \lfloor \frac{\m}{2}\right \rfloor+1 },\vv_{\jj}\right \} \right)$
	is always a metric basis of $\C_{\n,\m}$, for all $\vv_{\jj}\in \C_{\n}-\{\vv_{\n}\}$.
\end{Lemma}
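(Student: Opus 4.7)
The plan is to exploit the structural symmetry of $\C_{\n,\m}$ in its two cycles. Since $\C_{\n}$ and $\C_{\m}$ are attached at a single common vertex $\vv_{\n} \simeq \vv_{\n+\m}$, interchanging the two cycles gives an isomorphic graph. Formally, there is a graph isomorphism $\varphi: \C_{\n,\m} \to \C_{\m,\n}$ that fixes the common vertex, maps the cycle $\C_{\n}$ bijectively onto the second cycle of $\C_{\m,\n}$, and maps $\C_{\m}$ bijectively onto the first cycle of $\C_{\m,\n}$, respecting the indexing conventions (anticlockwise on the first cycle, clockwise on the second). Under $\varphi$, the vertex $\vv_{\n+\left\lfloor \frac{\m}{2}\right\rfloor}$ of $\C_{\n,\m}$ corresponds to the middle-labeled vertex $\vv_{\left\lfloor \frac{\m}{2}\right\rfloor}$ of the first cycle of $\C_{\m,\n}$, and any $\vv_{\jj} \in \C_{\n} - \{\vv_{\n}\}$ is sent into the second cycle of $\C_{\m,\n}$ with image distinct from the common vertex.

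With this in hand, the proof reduces to a direct application of Lemma \ref{oddoddmiddleall} to the graph $\C_{\m,\n}$ (whose hypothesis is satisfied since $\m,\n$ are both odd). That lemma yields that $\left\{\vv_{\left\lfloor \frac{\m}{2}\right\rfloor}, \varphi(\vv_{\jj})\right\}$ is a metric basis of $\C_{\m,\n}$. Since $\varphi$ is an isometry, pulling back through $\varphi^{-1}$ preserves the resolving property, so $\left\{\vv_{\n+\left\lfloor \frac{\m}{2}\right\rfloor}, \vv_{\jj}\right\}$ is a metric basis of $\C_{\n,\m}$. Cardinality matches $\beta(\C_{\n,\m}) = 2$ from Theorem \ref{odd}, confirming minimality. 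The parenthetical variant with $\vv_{\n+\left\lfloor \frac{\m}{2}\right\rfloor+1}$ follows by the same argument, using the second middle-labeled vertex of the first cycle of $\C_{\m,\n}$ in place of $\vv_{\left\lfloor \frac{\m}{2}\right\rfloor}$.

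The main (minor) obstacle will be the bookkeeping needed to verify that $\varphi$ really does carry $\vv_{\n+\left\lfloor \frac{\m}{2}\right\rfloor}$ to the correct middle-labeled vertex of the first cycle of $\C_{\m,\n}$, given that the original indexing is anticlockwise on $\C_{\n}$ and clockwise on $\C_{\m}$. Once this labeling correspondence is set up, all distances are preserved automatically and no new case analysis is required; the lemma is essentially Lemma \ref{oddoddmiddleall} with the roles of the two cycles swapped. Alternatively, one could repeat the case analysis of the proof of Lemma \ref{oddoddmiddleall} verbatim with $\C_{\n}$ and $\C_{\m}$ interchanged, but the symmetry argument is cleaner and avoids duplication.
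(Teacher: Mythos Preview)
Your proposal is correct and takes essentially the same approach as the paper, which simply states that the proof ``is a straightforward task along the same lines as Lemma \ref{oddoddmiddleall}.'' You make the underlying symmetry explicit via the isomorphism $\varphi:\C_{\n,\m}\to\C_{\m,\n}$, whereas the paper leaves it implicit by inviting the reader to rerun the case analysis with the cycles swapped; the content is the same.
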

\begin{proof}
	This proof is a straightforward task along the same lines as Lemma \ref{oddoddmiddleall}.
\end{proof}

%
%

As for bicyclic graphs of type-I where exactly one of $\n,\m$ is even, we have the following results.

\begin{Lemma} \label{lemm1}
	Let $\C_{\n,\m}$ be a base bicyclic graph of type-I where $\n$ is odd and $\m$ is even, then $\W=\left\{\vv_{\left \lfloor \frac{\n}{2}\right \rfloor },\vv_{\jj}\right\}$ $\left(resp. \W=\left \{ \vv_{\left \lfloor \frac{\n}{2}\right \rfloor+1 },\vv_{\jj}\right \} \right)$ , $\vv_{\jj} \in \C_{\m}-\left\{\vv_{\n} ,\vv_{\n+\frac{\m}{2}}\right\}$, is always a metric basis for $\C_{\n,\m}$.
\end{Lemma}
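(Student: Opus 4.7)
The plan is to mirror the structure of Lemma \ref{oddoddmiddleall}, verifying that $\W=\{\vv_{\lfloor \n/2\rfloor},\vv_{\jj}\}$ resolves every pair of distinct vertices $\vv_{\aaa},\vv_{\bb}\in\V(\C_{\n,\m})$. By Theorem \ref{odd}, $\beta(\C_{\n,\m})=2$ when exactly one of $\n,\m$ is even, so once $\W$ is shown to be a resolving set we automatically have a metric basis. Lemmas \ref{lem1} and \ref{newlem} explain the exclusions: the common vertex $\vv_{\n}$ cannot lie in any metric basis, and neither can the antipodal vertex $\vv_{\n+\m/2}$ of the even cycle $\C_{\m}$, which is exactly why $\vv_{\jj}$ must be drawn from $\C_{\m}-\{\vv_{\n},\vv_{\n+\m/2}\}$.

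The verification splits according to where $\vv_{\aaa}$ and $\vv_{\bb}$ lie. If both sit in $\C_{\n}$, the argument from Case 1 of Lemma \ref{oddoddmiddleall} transfers verbatim: since $\n$ is odd, equidistance from $\vv_{\lfloor\n/2\rfloor}$ forces $\vv_{\aaa}$ and $\vv_{\bb}$ onto the two distinct $\vv_{\n}\vv_{\lfloor\n/2\rfloor}$-paths of different lengths, giving $\dd(\vv_{\aaa},\vv_{\n})\neq \dd(\vv_{\bb},\vv_{\n})$; since any shortest path from $\vv_{\jj}\in\C_{\m}-\{\vv_{\n}\}$ into $\C_{\n}$ passes through $\vv_{\n}$, this discrepancy is inherited by $\vv_{\jj}$. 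The mixed case $\vv_{\aaa}\in\C_{\n}$, $\vv_{\bb}\in\C_{\m}\setminus\{\vv_{\n}\}$ is immediate from the triangle-type identity $\dd(\vv_{\bb},\vv_{\lfloor\n/2\rfloor})=\dd(\vv_{\bb},\vv_{\n})+\lfloor\n/2\rfloor \geq \lfloor\n/2\rfloor +1 > \dd(\vv_{\aaa},\vv_{\lfloor\n/2\rfloor})$, so $\vv_{\lfloor\n/2\rfloor}$ alone resolves the pair.

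The main obstacle is the remaining case $\vv_{\aaa},\vv_{\bb}\in\C_{\m}$, where the even parity of $\m$ prevents the trick used for the odd-odd lemma. Assume $\vv_{\aaa},\vv_{\bb}$ are equidistant from $\vv_{\lfloor\n/2\rfloor}$; since every shortest path from $\C_{\m}$ to $\vv_{\lfloor\n/2\rfloor}$ factors through $\vv_{\n}$, this forces $\dd(\vv_{\aaa},\vv_{\n})=\dd(\vv_{\bb},\vv_{\n})$. As $\m$ is even, this places $\vv_{\aaa}$ and $\vv_{\bb}$ symmetrically across the diameter of $\C_{\m}$ joining $\vv_{\n}$ to $\vv_{\n+\m/2}$, and they lie in opposite open arcs of $\C_{\m}$ cut by that diameter. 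The hypothesis $\vv_{\jj}\notin\{\vv_{\n},\vv_{\n+\m/2}\}$ is precisely what prevents $\vv_{\jj}$ from sitting on this diameter, so $\vv_{\jj}$ lies strictly in one of the two arcs. A short case analysis, paralleling Subcases (I) and (II) of Case 2 in Lemma \ref{oddoddmiddleall} but now anchored at the even antipode $\vv_{\n+\m/2}$, shows that in each configuration (shortest $\vv_{\jj}\vv_{\aaa}$- and $\vv_{\jj}\vv_{\bb}$-paths passing through $\vv_{\n}$ or through $\vv_{\n+\m/2}$) one obtains $\dd(\vv_{\jj},\vv_{\aaa})\neq \dd(\vv_{\jj},\vv_{\bb})$; the parity equation $2\,\dd(\vv_{\jj},\vv_{\n+\m/2})=\dd(\vv_{\aaa},\vv_{\n+\m/2})-\dd(\vv_{\bb},\vv_{\n+\m/2})$ is the only delicate subcase and is contradicted exactly as in Lemma \ref{oddoddmiddleall} because the right-hand side cannot be even and nonzero simultaneously unless $\vv_{\jj}$ is the antipode, which is excluded.

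Finally, the parallel conclusion for $\W=\{\vv_{\lfloor\n/2\rfloor+1},\vv_{\jj}\}$ follows from the mirror symmetry of the odd cycle $\C_{\n}$ about $\vv_{\n}$, since $\vv_{\lfloor\n/2\rfloor}$ and $\vv_{\lfloor\n/2\rfloor+1}$ are the two vertices of $\C_{\n}$ farthest from $\vv_{\n}$ and play interchangeable roles in every distance argument above.
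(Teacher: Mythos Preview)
Your proof is correct and follows the same approach as the paper, which simply invokes Lemmas \ref{lem1} and \ref{newlem} to justify the two exclusions and then defers entirely to the case analysis of Lemma \ref{oddoddmiddleall}; you have in fact supplied more detail than the paper does, particularly in adapting Case 2 to even $\m$. One small wording fix: in your ``delicate subcase'' the right-hand side $\dd(\vv_{\aaa},\vv_{\n+\m/2})-\dd(\vv_{\bb},\vv_{\n+\m/2})$ is actually equal to zero (since $\vv_{\aaa},\vv_{\bb}$ are symmetric about the diameter of the even cycle), so the contradiction is not a parity obstruction as in Lemma \ref{oddoddmiddleall} but simply that $2\,\dd(\vv_{\jj},\vv_{\n+\m/2})=0$ forces $\vv_{\jj}=\vv_{\n+\m/2}$, which is excluded by hypothesis.
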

\begin{proof}
	By Lemma \ref{lem1}, $\vv_{\n} \notin \W$, while by Lemma \ref{newlem}, $\vv_{\n+\frac{\m}{2}} \notin \W$. The rest of the proof follows the same argument as given in Lemma \ref{oddoddmiddleall}.
\end{proof}

We now turn our attention to bicyclic graphs of type-I where $\n,\m$ are even.

Before proceeding further, let us define some partitions of $\V(\C_{\n,\m})$ which we will reference in the upcoming results. Let

\begin{equation}\label{eq1}
	V_{\ii}=\begin{cases}
		\left\{\vv_1 ,\vv_2, \cdots, \vv_{\frac{\n}{2}-1}\right\} \text{ for } \ii =1 \\
		\left\{\vv_{\frac{\n}{2}}\right\} \text{ for } \ii =2 \\ 
		\left\{\vv_{\frac{\n}{2}+1}, \vv_{\frac{\n}{2}+2} \cdots \vv_{\n-1}\right\} \text{ for } \ii =3 \\
		\left\{\vv_{\n}\right\} \text{ for } \ii =4 \\
		\left\{\vv_{\n+1}, \vv_{\n+2}, \cdots, \vv_{\n+\frac{\m}{2}-1}\right\} \text{ for } \ii =5\\
		\left\{\vv_{\n+\frac{\m}{2}}\right\} \text{ for } \ii =6 \\
		\left\{\vv_{\n+\frac{\m}{2}+1}, \vv_{\n+\frac{\m}{2}+2} \cdots \vv_{\n+m-1}\right\} \text{ for } \ii =7.
	\end{cases}
\end{equation}
For any metric basis $\W$ of $\C_{\n,\m}$, from Lemma \ref{lem1}, we know that $\V_4 \not \subset  \W$, and, from Lemma \ref{11}, we have $\V_2 \cup \V_6 \not \subset \W$. In fact, we can state and prove a much stronger result as follows.

\begin{Lemma} \label{evenevenallbasis}
	Any set of the form $\W=\{\vv_{\ii},\vv_{\jj}, \vv_{\kk}\}$, where, $\vv_{\ii} \in \V_1$ (resp. $\vv_{\ii} \in \V_5$), $\vv_{\jj} \in \V_3$ (resp. $\vv_{\jj} \in \V_7$), and $\vv_{\kk} \in \V_5 \cup \V_7$ (resp. $\vv_{\kk} \in \V_1 \cup \V_3$), is a metric basis for $\C_{\n,\m}$ for even $\n,\m$.
\end{Lemma}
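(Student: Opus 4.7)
The plan is to exploit Theorem \ref{odd}, which gives $\beta(\C_{\n,\m})=3$ for even $\n,\m$: since $|\W|=3$, it suffices to verify that $\W$ is a resolving set. By the symmetry interchanging $\C_\n$ and $\C_\m$, I may assume $\vv_{\ii} \in \V_1$, $\vv_{\jj} \in \V_3$, and $\vv_{\kk} \in \V_5 \cup \V_7$; write $\vv_{\ii}=\vv_t$ with $1\le t\le \n/2-1$ and $\vv_{\jj}=\vv_s$ with $\n/2+1\le s\le \n-1$. The key structural observation is that $\vv_{\n}$ is a cut-vertex of $\C_{\n,\m}$, so any shortest path between $\C_\n\setminus\{\vv_{\n}\}$ and $\C_\m\setminus\{\vv_{\n}\}$ must pass through $\vv_{\n}$; consequently $\dd(\vv_{\ii},\vv_x)=t+\dd(\vv_{\n},\vv_x)$ and $\dd(\vv_{\jj},\vv_x)=(\n-s)+\dd(\vv_{\n},\vv_x)$ for every $\vv_x\in\C_\m$, and symmetrically $\dd(\vv_{\kk},\vv_y)=\dd(\vv_{\kk},\vv_{\n})+\dd(\vv_{\n},\vv_y)$ for every $\vv_y\in\C_\n$.

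For distinct $\vv_{\aaa},\vv_{\bb}$ I would argue by cases on their location. Case 1 (both in $\C_\n$): the identity above gives $\dd(\vv_{\kk},\vv_{\aaa})-\dd(\vv_{\kk},\vv_{\bb})=\dd(\vv_{\n},\vv_{\aaa})-\dd(\vv_{\n},\vv_{\bb})$, so $\vv_{\kk}$ resolves the pair unless $\vv_{\aaa},\vv_{\bb}$ are equidistant from $\vv_{\n}$; in that residual subcase they form a reflection pair $\vv_{\aaa}=\vv_x,\vv_{\bb}=\vv_{\n-x}$ with $1\le x\le \n/2-1$, and comparing $\dd(\vv_t,\vv_x)=|t-x|$ with $\dd(\vv_t,\vv_{\n-x})=\min(t+x,\n-t-x)$ shows that an equality would force one of $t,x,\n/2-t,\n/2-x$ to vanish, which is impossible; hence $\vv_{\ii}$ resolves. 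Case 2 (both in $\C_\m$) is symmetric: $\vv_{\ii}$ resolves via the same identity unless $\vv_{\aaa},\vv_{\bb}$ form a reflection pair in $\C_\m$, in which case $\vv_{\kk}$ resolves precisely because $\vv_{\kk}\notin \V_6=\{\vv_{\n+\m/2}\}$.

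Case 3 (one vertex in $\C_\n\setminus\{\vv_{\n}\}$, the other in $\C_\m\setminus\{\vv_{\n}\}$) is the technical heart: WLOG $\vv_{\aaa}\in\C_\n$, $\vv_{\bb}\in\C_\m$, and suppose both $\vv_{\ii}$ and $\vv_{\jj}$ fail to resolve. Setting $D:=\dd(\vv_{\n},\vv_{\bb})\ge 1$, the failure conditions become $\dd_{\C_\n}(\vv_t,\vv_{\aaa})=t+D$ and $\dd_{\C_\n}(\vv_s,\vv_{\aaa})=(\n-s)+D$. In $\C_\n$ only two vertices sit at distance $t+D$ from $\vv_t$, namely $\vv_{2t+D}$ and $\vv_{\n-D}$; testing each against the second equation and using the bounds $d\le \n/2$ together with $\vv_{\aaa}\ne\vv_{\n}$, the unique admissible configuration is $\vv_{\aaa}=\vv_{t+s-\n/2}$ with $D=s-t-\n/2$ (which forces $s>t+\n/2$; if no such $s$ exists the failure case is vacuous). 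Writing $u:=t+s-\n/2$, a direct calculation yields $\dd(\vv_{\n},\vv_{\aaa})-D=2t$ when $u\le \n/2$ and $=2(\n-s)$ when $u>\n/2$, both strictly positive. Combining with $\dd(\vv_{\kk},\vv_{\bb})\le \dd(\vv_{\kk},\vv_{\n})+D$ (triangle inequality in $\C_\m$), I obtain $\dd(\vv_{\kk},\vv_{\aaa})-\dd(\vv_{\kk},\vv_{\bb})\ge \dd(\vv_{\n},\vv_{\aaa})-D>0$, so $\vv_{\kk}$ resolves.

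The main obstacle is Case 3: the simultaneous failure of $\vv_{\ii}$ and $\vv_{\jj}$ is not a priori impossible (it occurs precisely when $\vv_{\jj}$ lies past the antipode of $\vv_{\ii}$ in $\C_\n$), and the real work is in isolating the unique forced configuration $(\vv_{\aaa},\vv_{\bb})$ and then verifying that the resulting gap $\dd(\vv_{\n},\vv_{\aaa})>D$ is always wide enough for $\vv_{\kk}$, which reaches $\C_\n$ only through $\vv_{\n}$, to close the argument.
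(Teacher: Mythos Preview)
Your proof is correct and, in fact, more complete than the paper's own argument. Both proofs reduce to showing that $\W$ is a resolving set (since $|\W|=3=\beta(\C_{\n,\m})$ by Theorem~\ref{odd}) and both exploit that $\vv_{\n}$ is a cut-vertex, so cross-cycle distances factor through $\vv_{\n}$. The paper fixes $\vv_{\ii}$ and sketches, via three pictorial sub-cases, that any pair equidistant from $\vv_{\ii}$ is resolved by $\vv_{\kk}$; however those pictures only cover the situations where $\vv_{\aaa},\vv_{\bb}$ lie in the \emph{same} cycle, and the mixed case (one vertex in each cycle) is never treated. Your Case~3 shows this omission is not harmless: a mixed pair equidistant from $\vv_{\ii}$ need not be resolved by $\vv_{\kk}$ alone (e.g.\ in $\C_{8,8}$ with $\vv_{\ii}=\vv_1$, $\vv_{\kk}=\vv_9$, the pair $\vv_6,\vv_{14}$ is equidistant from both), so one genuinely needs $\vv_{\jj}$ as well.

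Your decomposition by the location of the pair (both in $\C_\n$, both in $\C_\m$, or mixed) is cleaner, and the explicit isolation of the unique failure configuration $\vv_{\aaa}=\vv_{t+s-\n/2}$, $D=s-t-\n/2$ in Case~3, followed by the inequality $\dd(\vv_{\n},\vv_{\aaa})-D\in\{2t,2(\n-s)\}>0$, is exactly the missing ingredient. The paper's approach is lighter to read but relies on ``an argument along the same lines'' for the remaining cases; yours makes every step explicit and closes the gap.
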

\begin{proof}
	Let $\W=\{\vv_{\ii},\vv_{\jj}, \vv_{\kk}\}$ be as given above. Without loss of generality, let $\vv_{\ii} \in \V_1$, $\vv_{\jj} \in \V_3$ and $\vv_{\kk} \in \V_5$. We have to show that all vertices of $\C_{\n,\m}$ have distinct representation with respect to $\W$, that is, any two vertices which are not resolved by a particular element of $\W$, are resolved by at least one of the remaining two elements. 
	
	To this end, let $\vv_{\aaa}, \vv_{\bb} \in \V(\G)$ be equidistant from $\vv_{\ii}$. Different possibilities of such $\vv_{\aaa},\vv_{\bb}$ are given in the following figures.
	
		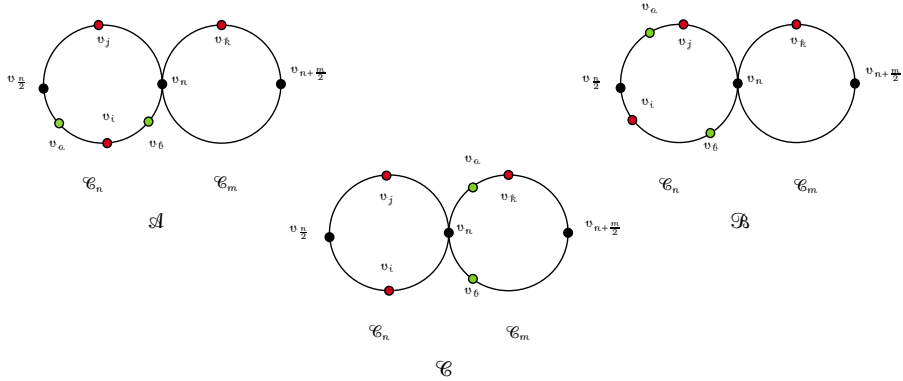
\begin{figure}[H]
		\begin{center}

			\tikzset{every picture/.style={line width=0.75pt}} 
			\resizebox{12cm}{5cm}
			{
			\begin{tikzpicture}[x=0.75pt,y=0.75pt,yscale=-1,xscale=1]
				
				\draw  [fill={rgb, 255:red, 0; green, 0; blue, 0 }  ,fill opacity=1 ] (116.23,59.88) .. controls (116.23,58.11) and (117.63,56.67) .. (119.35,56.67) .. controls (121.07,56.67) and (122.46,58.11) .. (122.46,59.88) .. controls (122.46,61.66) and (121.07,63.1) .. (119.35,63.1) .. controls (117.63,63.1) and (116.23,61.66) .. (116.23,59.88) -- cycle ;
				\draw   (119.35,59.88) .. controls (119.35,36.35) and (138.52,17.27) .. (162.17,17.27) .. controls (185.82,17.27) and (204.99,36.35) .. (204.99,59.88) .. controls (204.99,83.42) and (185.82,102.5) .. (162.17,102.5) .. controls (138.52,102.5) and (119.35,83.42) .. (119.35,59.88) -- cycle ;
				\draw   (33.71,59.88) .. controls (33.71,36.35) and (52.88,17.27) .. (76.53,17.27) .. controls (100.18,17.27) and (119.35,36.35) .. (119.35,59.88) .. controls (119.35,83.42) and (100.18,102.5) .. (76.53,102.5) .. controls (52.88,102.5) and (33.71,83.42) .. (33.71,59.88) -- cycle ;
				\draw  [fill={rgb, 255:red, 208; green, 2; blue, 27 }  ,fill opacity=1 ] (76.53,102.5) .. controls (76.53,100.72) and (77.92,99.28) .. (79.64,99.28) .. controls (81.36,99.28) and (82.75,100.72) .. (82.75,102.5) .. controls (82.75,104.27) and (81.36,105.71) .. (79.64,105.71) .. controls (77.92,105.71) and (76.53,104.27) .. (76.53,102.5) -- cycle ;
				\draw  [fill={rgb, 255:red, 0; green, 0; blue, 0 }  ,fill opacity=1 ] (30.59,63.1) .. controls (30.59,61.32) and (31.99,59.88) .. (33.71,59.88) .. controls (35.42,59.88) and (36.82,61.32) .. (36.82,63.1) .. controls (36.82,64.87) and (35.42,66.31) .. (33.71,66.31) .. controls (31.99,66.31) and (30.59,64.87) .. (30.59,63.1) -- cycle ;
				\draw  [fill={rgb, 255:red, 0; green, 0; blue, 0 }  ,fill opacity=1 ] (201.88,59.88) .. controls (201.88,58.11) and (203.27,56.67) .. (204.99,56.67) .. controls (206.71,56.67) and (208.1,58.11) .. (208.1,59.88) .. controls (208.1,61.66) and (206.71,63.1) .. (204.99,63.1) .. controls (203.27,63.1) and (201.88,61.66) .. (201.88,59.88) -- cycle ;
				\draw  [fill={rgb, 255:red, 208; green, 2; blue, 27 }  ,fill opacity=1 ] (159.05,17.27) .. controls (159.05,15.49) and (160.45,14.05) .. (162.17,14.05) .. controls (163.89,14.05) and (165.28,15.49) .. (165.28,17.27) .. controls (165.28,19.04) and (163.89,20.48) .. (162.17,20.48) .. controls (160.45,20.48) and (159.05,19.04) .. (159.05,17.27) -- cycle ;
				\draw  [fill={rgb, 255:red, 143; green, 211; blue, 64 }  ,fill opacity=1 ] (41.85,88.17) .. controls (41.85,86.39) and (43.24,84.95) .. (44.96,84.95) .. controls (46.68,84.95) and (48.07,86.39) .. (48.07,88.17) .. controls (48.07,89.94) and (46.68,91.38) .. (44.96,91.38) .. controls (43.24,91.38) and (41.85,89.94) .. (41.85,88.17) -- cycle ;
				\draw  [fill={rgb, 255:red, 126; green, 211; blue, 33 }  ,fill opacity=1 ] (106.34,86.92) .. controls (106.34,85.15) and (107.74,83.71) .. (109.46,83.71) .. controls (111.18,83.71) and (112.57,85.15) .. (112.57,86.92) .. controls (112.57,88.7) and (111.18,90.14) .. (109.46,90.14) .. controls (107.74,90.14) and (106.34,88.7) .. (106.34,86.92) -- cycle ;
				\draw  [fill={rgb, 255:red, 208; green, 2; blue, 27 }  ,fill opacity=1 ] (70.3,17.27) .. controls (70.3,15.49) and (71.69,14.05) .. (73.41,14.05) .. controls (75.13,14.05) and (76.53,15.49) .. (76.53,17.27) .. controls (76.53,19.04) and (75.13,20.48) .. (73.41,20.48) .. controls (71.69,20.48) and (70.3,19.04) .. (70.3,17.27) -- cycle ;
				\draw  [fill={rgb, 255:red, 0; green, 0; blue, 0 }  ,fill opacity=1 ] (531.54,59.43) .. controls (531.54,57.65) and (532.92,56.21) .. (534.61,56.21) .. controls (536.31,56.21) and (537.69,57.65) .. (537.69,59.43) .. controls (537.69,61.2) and (536.31,62.64) .. (534.61,62.64) .. controls (532.92,62.64) and (531.54,61.2) .. (531.54,59.43) -- cycle ;
				\draw   (534.61,59.43) .. controls (534.61,35.87) and (553.54,16.78) .. (576.9,16.78) .. controls (600.25,16.78) and (619.18,35.87) .. (619.18,59.43) .. controls (619.18,82.98) and (600.25,102.07) .. (576.9,102.07) .. controls (553.54,102.07) and (534.61,82.98) .. (534.61,59.43) -- cycle ;
				\draw   (450.05,59.43) .. controls (450.05,35.87) and (468.98,16.78) .. (492.33,16.78) .. controls (515.68,16.78) and (534.61,35.87) .. (534.61,59.43) .. controls (534.61,82.98) and (515.68,102.07) .. (492.33,102.07) .. controls (468.98,102.07) and (450.05,82.98) .. (450.05,59.43) -- cycle ;
				\draw  [fill={rgb, 255:red, 208; green, 2; blue, 27 }  ,fill opacity=1 ] (455.49,85.86) .. controls (455.49,84.08) and (456.86,82.64) .. (458.56,82.64) .. controls (460.26,82.64) and (461.64,84.08) .. (461.64,85.86) .. controls (461.64,87.64) and (460.26,89.08) .. (458.56,89.08) .. controls (456.86,89.08) and (455.49,87.64) .. (455.49,85.86) -- cycle ;
				\draw  [fill={rgb, 255:red, 0; green, 0; blue, 0 }  ,fill opacity=1 ] (446.97,62.64) .. controls (446.97,60.87) and (448.35,59.43) .. (450.05,59.43) .. controls (451.74,59.43) and (453.12,60.87) .. (453.12,62.64) .. controls (453.12,64.42) and (451.74,65.86) .. (450.05,65.86) .. controls (448.35,65.86) and (446.97,64.42) .. (446.97,62.64) -- cycle ;
				\draw  [fill={rgb, 255:red, 0; green, 0; blue, 0 }  ,fill opacity=1 ] (616.11,59.43) .. controls (616.11,57.65) and (617.48,56.21) .. (619.18,56.21) .. controls (620.88,56.21) and (622.26,57.65) .. (622.26,59.43) .. controls (622.26,61.2) and (620.88,62.64) .. (619.18,62.64) .. controls (617.48,62.64) and (616.11,61.2) .. (616.11,59.43) -- cycle ;
				\draw  [fill={rgb, 255:red, 208; green, 2; blue, 27 }  ,fill opacity=1 ] (573.82,16.78) .. controls (573.82,15) and (575.2,13.56) .. (576.9,13.56) .. controls (578.6,13.56) and (579.97,15) .. (579.97,16.78) .. controls (579.97,18.56) and (578.6,20) .. (576.9,20) .. controls (575.2,20) and (573.82,18.56) .. (573.82,16.78) -- cycle ;
				\draw  [fill={rgb, 255:red, 126; green, 211; blue, 33 }  ,fill opacity=1 ] (468.07,22.89) .. controls (468.07,21.11) and (469.45,19.67) .. (471.15,19.67) .. controls (472.85,19.67) and (474.22,21.11) .. (474.22,22.89) .. controls (474.22,24.66) and (472.85,26.1) .. (471.15,26.1) .. controls (469.45,26.1) and (468.07,24.66) .. (468.07,22.89) -- cycle ;
				\draw  [fill={rgb, 255:red, 126; green, 211; blue, 33 }  ,fill opacity=1 ] (511.78,95.21) .. controls (511.78,93.44) and (513.16,92) .. (514.86,92) .. controls (516.55,92) and (517.93,93.44) .. (517.93,95.21) .. controls (517.93,96.99) and (516.55,98.43) .. (514.86,98.43) .. controls (513.16,98.43) and (511.78,96.99) .. (511.78,95.21) -- cycle ;
				\draw  [fill={rgb, 255:red, 208; green, 2; blue, 27 }  ,fill opacity=1 ] (492.33,16.78) .. controls (492.33,15) and (493.71,13.56) .. (495.4,13.56) .. controls (497.1,13.56) and (498.48,15) .. (498.48,16.78) .. controls (498.48,18.56) and (497.1,20) .. (495.4,20) .. controls (493.71,20) and (492.33,18.56) .. (492.33,16.78) -- cycle ;
				\draw  [fill={rgb, 255:red, 0; green, 0; blue, 0 }  ,fill opacity=1 ] (322.99,167.23) .. controls (322.99,165.49) and (324.39,164.08) .. (326.12,164.08) .. controls (327.85,164.08) and (329.25,165.49) .. (329.25,167.23) .. controls (329.25,168.97) and (327.85,170.38) .. (326.12,170.38) .. controls (324.39,170.38) and (322.99,168.97) .. (322.99,167.23) -- cycle ;
				\draw   (326.12,167.23) .. controls (326.12,144.16) and (345.39,125.46) .. (369.16,125.46) .. controls (392.93,125.46) and (412.2,144.16) .. (412.2,167.23) .. controls (412.2,190.31) and (392.93,209.01) .. (369.16,209.01) .. controls (345.39,209.01) and (326.12,190.31) .. (326.12,167.23) -- cycle ;
				\draw   (240.03,167.23) .. controls (240.03,144.16) and (259.31,125.46) .. (283.08,125.46) .. controls (306.85,125.46) and (326.12,144.16) .. (326.12,167.23) .. controls (326.12,190.31) and (306.85,209.01) .. (283.08,209.01) .. controls (259.31,209.01) and (240.03,190.31) .. (240.03,167.23) -- cycle ;
				\draw  [fill={rgb, 255:red, 208; green, 2; blue, 27 }  ,fill opacity=1 ] (279.9,209.01) .. controls (279.9,207.27) and (281.3,205.86) .. (283.03,205.86) .. controls (284.76,205.86) and (286.16,207.27) .. (286.16,209.01) .. controls (286.16,210.75) and (284.76,212.16) .. (283.03,212.16) .. controls (281.3,212.16) and (279.9,210.75) .. (279.9,209.01) -- cycle ;
				\draw  [fill={rgb, 255:red, 0; green, 0; blue, 0 }  ,fill opacity=1 ] (236.91,170.38) .. controls (236.91,168.64) and (238.31,167.23) .. (240.03,167.23) .. controls (241.76,167.23) and (243.16,168.64) .. (243.16,170.38) .. controls (243.16,172.12) and (241.76,173.53) .. (240.03,173.53) .. controls (238.31,173.53) and (236.91,172.12) .. (236.91,170.38) -- cycle ;
				\draw  [fill={rgb, 255:red, 0; green, 0; blue, 0 }  ,fill opacity=1 ] (409.08,167.23) .. controls (409.08,165.49) and (410.48,164.08) .. (412.2,164.08) .. controls (413.93,164.08) and (415.33,165.49) .. (415.33,167.23) .. controls (415.33,168.97) and (413.93,170.38) .. (412.2,170.38) .. controls (410.48,170.38) and (409.08,168.97) .. (409.08,167.23) -- cycle ;
				\draw  [fill={rgb, 255:red, 208; green, 2; blue, 27 }  ,fill opacity=1 ] (366.03,125.46) .. controls (366.03,123.72) and (367.43,122.31) .. (369.16,122.31) .. controls (370.89,122.31) and (372.29,123.72) .. (372.29,125.46) .. controls (372.29,127.2) and (370.89,128.61) .. (369.16,128.61) .. controls (367.43,128.61) and (366.03,127.2) .. (366.03,125.46) -- cycle ;
				\draw  [fill={rgb, 255:red, 126; green, 211; blue, 33 }  ,fill opacity=1 ] (340.38,134.49) .. controls (340.38,132.75) and (341.78,131.34) .. (343.51,131.34) .. controls (345.24,131.34) and (346.64,132.75) .. (346.64,134.49) .. controls (346.64,136.23) and (345.24,137.64) .. (343.51,137.64) .. controls (341.78,137.64) and (340.38,136.23) .. (340.38,134.49) -- cycle ;
				\draw  [fill={rgb, 255:red, 126; green, 211; blue, 33 }  ,fill opacity=1 ] (340.38,200.46) .. controls (340.38,198.72) and (341.78,197.31) .. (343.51,197.31) .. controls (345.24,197.31) and (346.64,198.72) .. (346.64,200.46) .. controls (346.64,202.2) and (345.24,203.61) .. (343.51,203.61) .. controls (341.78,203.61) and (340.38,202.2) .. (340.38,200.46) -- cycle ;
				\draw  [fill={rgb, 255:red, 208; green, 2; blue, 27 }  ,fill opacity=1 ] (277.99,125.94) .. controls (277.99,124.2) and (279.39,122.79) .. (281.12,122.79) .. controls (282.85,122.79) and (284.25,124.2) .. (284.25,125.94) .. controls (284.25,127.68) and (282.85,129.09) .. (281.12,129.09) .. controls (279.39,129.09) and (277.99,127.68) .. (277.99,125.94) -- cycle ;
				
				\draw (124.31,52.61) node [anchor=north west][inner sep=0.75pt]  [font=\footnotesize]  {$\vv_{\n}$};
				\draw (60.28,125.3) node [anchor=north west][inner sep=0.75pt]  [font=\normalsize]  {$\C_{\n}$};
				\draw (155.06,124.92) node [anchor=north west][inner sep=0.75pt]  [font=\normalsize]  {$\C_{\m}$};
				\draw (5.83,52.38) node [anchor=north west][inner sep=0.75pt]  [font=\footnotesize]  {$\vv_{\frac{\n}{2}}$};
				\draw (210.22,45.52) node [anchor=north west][inner sep=0.75pt]  [font=\footnotesize]  {$\vv_{\n+\frac{\m}{2}}$};
				\draw (74.21,79.46) node [anchor=north west][inner sep=0.75pt]  [font=\footnotesize]  {$\vv_{\ii}$};
				\draw (155.41,24.01) node [anchor=north west][inner sep=0.75pt]  [font=\footnotesize]  {$\vv_{\kk}$};
				\draw (35.9,99.4) node [anchor=north west][inner sep=0.75pt]  [font=\footnotesize]  {$\vv_{\aaa}$};
				\draw (107.98,98.78) node [anchor=north west][inner sep=0.75pt]  [font=\footnotesize]  {$\vv_{\bb}$};
				\draw (70.41,24.01) node [anchor=north west][inner sep=0.75pt]  [font=\footnotesize]  {$\vv_{\jj}$};
				\draw (539.42,52.16) node [anchor=north west][inner sep=0.75pt]  [font=\footnotesize]  {$\vv_{\n}$};
				\draw (476.26,126.27) node [anchor=north west][inner sep=0.75pt]  [font=\normalsize]  {$\C_{\n}$};
				\draw (573.58,126.51) node [anchor=north west][inner sep=0.75pt]  [font=\normalsize]  {$\C_{\m}$};
				\draw (420.54,51.92) node [anchor=north west][inner sep=0.75pt]  [font=\footnotesize]  {$\vv_{\frac{\n}{2}}$};
				\draw (625.07,46.31) node [anchor=north west][inner sep=0.75pt]  [font=\footnotesize]  {$\vv_{\n+\frac{\m}{2}}$};
				\draw (463.11,69.05) node [anchor=north west][inner sep=0.75pt]  [font=\footnotesize]  {$\vv_{\ii}$};
				\draw (570.76,25.04) node [anchor=north west][inner sep=0.75pt]  [font=\footnotesize]  {$\vv_{\kk}$};
				\draw (463.36,0.46) node [anchor=north west][inner sep=0.75pt]  [font=\footnotesize]  {$\vv_{\aaa}$};
				\draw (508.15,100.24) node [anchor=north west][inner sep=0.75pt]  [font=\footnotesize]  {$\vv_{\bb}$};
				\draw (489.96,25.41) node [anchor=north west][inner sep=0.75pt]  [font=\footnotesize]  {$\vv_{\jj}$};
				\draw (329.88,159.99) node [anchor=north west][inner sep=0.75pt]  [font=\footnotesize]  {$\vv_{\n}$};
				\draw (266.91,232.54) node [anchor=north west][inner sep=0.75pt]  [font=\normalsize]  {$\C_{\n}$};
				\draw (366,232.78) node [anchor=north west][inner sep=0.75pt]  [font=\normalsize]  {$\C_{\m}$};
				\draw (209.34,160.65) node [anchor=north west][inner sep=0.75pt]  [font=\footnotesize]  {$\vv_{\frac{\n}{2}}$};
				\draw (422.68,157.15) node [anchor=north west][inner sep=0.75pt]  [font=\footnotesize]  {$\vv_{\n+\frac{\m}{2}}$};
				\draw (275.05,187.53) node [anchor=north west][inner sep=0.75pt]  [font=\footnotesize]  {$\vv_{\ii}$};
				\draw (361.77,137.11) node [anchor=north west][inner sep=0.75pt]  [font=\footnotesize]  {$\vv_{\kk}$};
				\draw (335.71,109.96) node [anchor=north west][inner sep=0.75pt]  [font=\footnotesize]  {$\vv_{\aaa}$};
				\draw (335.53,205.26) node [anchor=north west][inner sep=0.75pt]  [font=\footnotesize]  {$\vv_{\bb}$};
				\draw (274.42,137.67) node [anchor=north west][inner sep=0.75pt]  [font=\footnotesize]  {$\vv_{\jj}$};
				\draw (108,151.4) node [anchor=north west][inner sep=0.75pt]  [font=\Large]  {$\A$};
				\draw (528,151.4) node [anchor=north west][inner sep=0.75pt]  [font=\Large]  {$\B$};
				\draw (315,258.4) node [anchor=north west][inner sep=0.75pt]  [font=\Large]  {$\C$};

			\end{tikzpicture}
		}
				
			\caption{Different occurrences of $\vv_{\aaa}$ and $\vv_{\bb}$ equidistant from $\vv_{\ii}$}
			\label{evenevenresolvingfigure1}
			\end{center}
	\end{figure}
	
	\begin{itemize}
		\item In case ${\A}$, we see that $\vv_{\aaa},\vv_{\bb}$ lie on the same shortest $\vv_{\kk} \vv_{\frac{\n}{2}}$ path and hence are resolved by $\vv_{\kk}$.
		\item In case $\B$, since $\vv_{\aaa},\vv_{\bb}$ are equidistant from $\vv_{\ii}$ and $\vv_{\ii} \neq \vv_{\frac{\n}{2}}$, it is evident that $\dd\left(\vv_{\aaa}, \vv_{\frac{\n}{2}}\right) \neq d\left(\vv_{\bb}, \vv_{\frac{\n}{2}}\right)$. Since both $\vv_{\n} \vv_{\frac{\n}{2}}$ paths have same lengths, we conclude that $\dd(\vv_{\aaa}, \vv_{\n}) \neq \dd(\vv_{\bb}, \vv_{\n})$. Consequently, $\vv_{\kk}$ resolves both $\vv_{\aaa}$ and $\vv_{\bb}$.
		\item In case $\C$, we see that $\vv_{\aaa}, \vv_{\bb}$ equidistant from $\vv_{\ii}$ ensure that they both belong to different halves of $\C_{\m}$ and $\dd(\vv_{\aaa}, \vv_{\n})=\dd(\vv_{\bb},\vv_{\n})$. Since $\m$ is even, we can also conclude that $\dd\left(\vv_{\aaa}, \vv_{\n+\frac{\m}{2}}\right) = d\left(\vv_{\bb}, \vv_{\n+\frac{\m}{2}}\right)$. Now $\vv_{\kk} \in \V_5=\left\{\vv_{\n+1}, \vv_{\n+2}, \cdots, \vv_{\n+\frac{\m}{2}-1}\right\}$ implies that $\vv_{\kk} \neq \vv_{\n}$ and $\vv_{\kk} \neq  \vv_{\n+\frac{\m}{2}}$. Consequently, $\dd(\vv_{\aaa}, \vv_{\kk}) \neq \dd(\vv_{\bb}, \vv_{\kk})$, and $\vv_{\kk}$ resolves both $\vv_{\aaa}$ and $\vv_{\bb}$.
	\end{itemize}
	
	When $\vv_{\aaa}, \vv_{\bb}$ are equidistant from $\vv_{\jj}$ or $\vv_{\kk}$, an argument along the same lines ensures that they are always resolved by at least one element of $\W$. This concludes our proof for $\W=\{\vv_{\ii},\vv_{\jj}, \vv_{\kk}\}$ where $\vv_{\ii} \in \V_1, \vv_{\jj} \in \V_3$ and $\vv_{\kk} \in \V_5$.

	All other possible constructions of $\W$ can be shown to be a metric basis following the same proof technique.
	
\end{proof}

%

Now we move on to bicyclic graphs of type-II denoted by $\C_{\n,\rr,\m}$. It should be noted that we will use vertex labeling as given in figure \ref{bicyclicgraphtypeii}.

\begin{Lemma}\label{bicycliciipathvertices}
The vertices $\V(\PP_{\rr})=\{\vv_{\n}, \vv_{\n+1}, \cdots, \vv_{\n+\rr}\}$ do not belong to any metric basis of $\C_{\n,\rr,\m}$.
\end{Lemma}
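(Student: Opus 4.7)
The plan is to argue by contradiction: assume that some path vertex $\vv_{\p} \in \V(\PP_{\rr}) = \{\vv_{\n}, \vv_{\n+1}, \ldots, \vv_{\n+\rr}\}$ lies in a metric basis $\W$ of $\C_{\n,\rr,\m}$. Since $\beta(\C_{\n,\rr,\m}) = 2$ by Theorem~\ref{oldthm2}, we may write $\W = \{\vv_{\p}, \vv_{\q}\}$ for some second vertex $\vv_{\q}$. I will exhibit two vertices of $\C_{\n,\rr,\m}$ whose distance vectors with respect to $\W$ coincide, contradicting the resolving property of $\W$.

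The crux of the argument is a separation property induced by the connecting path. Because $\C_{\n}$ and $\C_{\m}$ meet only through $\PP_{\rr}$, every shortest path from a vertex $\uu \notin \C_{\n} \setminus \{\vv_{\n}\}$ into $\C_{\n}$ must pass through $\vv_{\n}$. Consequently, for any pair $\vv_{\aaa},\vv_{\bb} \in \C_{\n}$ with $\dd(\vv_{\aaa},\vv_{\n}) = \dd(\vv_{\bb},\vv_{\n})$ we obtain $\dd(\uu,\vv_{\aaa}) = \dd(\uu,\vv_{\n}) + \dd(\vv_{\n},\vv_{\aaa}) = \dd(\uu,\vv_{\bb})$ (the case $\uu = \vv_{\n}$ being immediate). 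I shall call such a $\uu$ \emph{blind to $\C_{\n}$}, and use the completely symmetric notion of \emph{blind to $\C_{\m}$} via the attachment point $\vv_{\n+\rr}$. Since $\n,\m \geq 3$, the two neighbours of $\vv_{\n}$ in $\C_{\n}$ furnish an explicit equidistant-from-$\vv_{\n}$ pair in $\C_{\n}$, and similarly for $\vv_{\n+\rr}$ in $\C_{\m}$, so such unresolved pairs are always available.

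Observe that any $\vv_{\p} \in \V(\PP_{\rr})$ automatically lies outside $\C_{\n} \setminus \{\vv_{\n}\}$ and outside $\C_{\m} \setminus \{\vv_{\n+\rr}\}$, hence $\vv_{\p}$ is blind to both cycles. The case analysis on $\vv_{\q}$ is then immediate: if $\vv_{\q} \notin \C_{\n} \setminus \{\vv_{\n}\}$, then both elements of $\W$ are blind to $\C_{\n}$, and the equidistant pair of neighbours of $\vv_{\n}$ in $\C_{\n}$ is not resolved by $\W$; otherwise $\vv_{\q} \in \C_{\n} \setminus \{\vv_{\n}\}$, which forces $\vv_{\q} \notin \C_{\m} \setminus \{\vv_{\n+\rr}\}$, so both elements of $\W$ are blind to $\C_{\m}$, and the analogous pair in $\C_{\m}$ is unresolved. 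Either branch yields the required contradiction. The only delicate point — that $\vv_{\n}$ and $\vv_{\n+\rr}$ are simultaneously cycle vertices and path endpoints — is absorbed into the definition of \emph{blind}, since the attachment vertex trivially fails to distinguish any pair equidistant from itself, so no separate subcases for $\vv_{\p} = \vv_{\n}$ or $\vv_{\p} = \vv_{\n+\rr}$ are needed.
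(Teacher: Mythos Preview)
Your proof is correct and follows essentially the same approach as the paper: both argue that a path vertex together with any second vertex leaves one of the two cycles with an unresolved equidistant pair (the paper splits into ``both basis vertices on $\PP_{\rr}$'' versus ``one on $\PP_{\rr}$ and one on a cycle'', which your dichotomy on $\vv_{\q}$ subsumes). Your explicit ``blind to $\C_{\n}$/$\C_{\m}$'' formulation via the cut-vertex property is a clean way to package the same idea and handles the endpoint cases $\vv_{\p}\in\{\vv_{\n},\vv_{\n+\rr}\}$ uniformly.
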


\begin{proof}
	By Theorem \ref{oldthm2}, $\beta(\C_{\n,\rr,\m})=2$. Let us suppose on the contrary that the vertices from $\V(\P_{\rr})$ belong to a metric basis set $\W$. Two possibilities arise here, namely, both elements of $\W$ are from $\V(\P_{\rr})$, or only one element of $\W$ is from $\V(\P_{\rr})$. These cases are shown in figure \ref{bicycliciipathfigure}, $\A$ and $\B$.
	
	\begin{figure}[H]
		\begin{center}
			\tikzset{every picture/.style={line width=0.75pt}} 
			\resizebox{12cm}{3cm}{	
				\begin{tikzpicture}[x=0.75pt,y=0.75pt,yscale=-1,xscale=1]
					
					\draw  [fill={rgb, 255:red, 0; green, 0; blue, 0 }  ,fill opacity=1 ] (553.36,110.09) .. controls (553.36,108.31) and (554.73,106.86) .. (556.41,106.86) .. controls (558.1,106.86) and (559.46,108.31) .. (559.46,110.09) .. controls (559.46,111.87) and (558.1,113.31) .. (556.41,113.31) .. controls (554.73,113.31) and (553.36,111.87) .. (553.36,110.09) -- cycle ;
					\draw   (556.41,110.09) .. controls (556.41,86.49) and (575.2,67.35) .. (598.37,67.35) .. controls (621.55,67.35) and (640.33,86.49) .. (640.33,110.09) .. controls (640.33,133.69) and (621.55,152.82) .. (598.37,152.82) .. controls (575.2,152.82) and (556.41,133.69) .. (556.41,110.09) -- cycle ;
					\draw   (360.33,110.09) .. controls (360.33,86.49) and (379.12,67.35) .. (402.29,67.35) .. controls (425.47,67.35) and (444.26,86.49) .. (444.26,110.09) .. controls (444.26,133.69) and (425.47,152.82) .. (402.29,152.82) .. controls (379.12,152.82) and (360.33,133.69) .. (360.33,110.09) -- cycle ;
					\draw  [fill={rgb, 255:red, 126; green, 211; blue, 33 }  ,fill opacity=1 ] (570.93,75.97) .. controls (570.93,74.19) and (572.3,72.75) .. (573.98,72.75) .. controls (575.67,72.75) and (577.03,74.19) .. (577.03,75.97) .. controls (577.03,77.75) and (575.67,79.2) .. (573.98,79.2) .. controls (572.3,79.2) and (570.93,77.75) .. (570.93,75.97) -- cycle ;
					\draw  [fill={rgb, 255:red, 126; green, 211; blue, 33 }  ,fill opacity=1 ] (570.32,144.07) .. controls (570.32,142.29) and (571.68,140.85) .. (573.37,140.85) .. controls (575.05,140.85) and (576.42,142.29) .. (576.42,144.07) .. controls (576.42,145.85) and (575.05,147.3) .. (573.37,147.3) .. controls (571.68,147.3) and (570.32,145.85) .. (570.32,144.07) -- cycle ;
					\draw  [fill={rgb, 255:red, 0; green, 0; blue, 0 }  ,fill opacity=1 ] (441.21,110.09) .. controls (441.21,108.31) and (442.57,106.86) .. (444.26,106.86) .. controls (445.94,106.86) and (447.31,108.31) .. (447.31,110.09) .. controls (447.31,111.87) and (445.94,113.31) .. (444.26,113.31) .. controls (442.57,113.31) and (441.21,111.87) .. (441.21,110.09) -- cycle ;
					\draw    (444.26,110.09) -- (469.66,110.09) ;
					\draw  [fill={rgb, 255:red, 0; green, 0; blue, 0 }  ,fill opacity=1 ] (466.61,110.09) .. controls (466.61,108.31) and (467.98,106.86) .. (469.66,106.86) .. controls (471.35,106.86) and (472.71,108.31) .. (472.71,110.09) .. controls (472.71,111.87) and (471.35,113.31) .. (469.66,113.31) .. controls (467.98,113.31) and (466.61,111.87) .. (466.61,110.09) -- cycle ;
					\draw  [fill={rgb, 255:red, 0; green, 0; blue, 0 }  ,fill opacity=1 ] (525.96,110.09) .. controls (525.96,108.31) and (527.32,106.86) .. (529.01,106.86) .. controls (530.69,106.86) and (532.06,108.31) .. (532.06,110.09) .. controls (532.06,111.87) and (530.69,113.31) .. (529.01,113.31) .. controls (527.32,113.31) and (525.96,111.87) .. (525.96,110.09) -- cycle ;
					\draw    (531.01,110.09) -- (556.41,110.09) ;
					\draw  [dash pattern={on 0.84pt off 2.51pt}]  (469.66,110.09) -- (531.01,110.09) ;
					\draw  [fill={rgb, 255:red, 208; green, 2; blue, 27 }  ,fill opacity=1 ] (495.11,110.09) .. controls (495.11,108.31) and (496.48,106.86) .. (498.16,106.86) .. controls (499.85,106.86) and (501.21,108.31) .. (501.21,110.09) .. controls (501.21,111.87) and (499.85,113.31) .. (498.16,113.31) .. controls (496.48,113.31) and (495.11,111.87) .. (495.11,110.09) -- cycle ;
					\draw  [fill={rgb, 255:red, 0; green, 0; blue, 0 }  ,fill opacity=1 ] (203.05,110.76) .. controls (203.05,108.97) and (204.42,107.53) .. (206.11,107.53) .. controls (207.8,107.53) and (209.17,108.97) .. (209.17,110.76) .. controls (209.17,112.55) and (207.8,113.99) .. (206.11,113.99) .. controls (204.42,113.99) and (203.05,112.55) .. (203.05,110.76) -- cycle ;
					\draw   (206.11,110.76) .. controls (206.11,87.09) and (224.97,67.89) .. (248.22,67.89) .. controls (271.48,67.89) and (290.33,87.09) .. (290.33,110.76) .. controls (290.33,134.43) and (271.48,153.63) .. (248.22,153.63) .. controls (224.97,153.63) and (206.11,134.43) .. (206.11,110.76) -- cycle ;
					\draw   (9.33,110.76) .. controls (9.33,87.09) and (28.19,67.89) .. (51.44,67.89) .. controls (74.7,67.89) and (93.55,87.09) .. (93.55,110.76) .. controls (93.55,134.43) and (74.7,153.63) .. (51.44,153.63) .. controls (28.19,153.63) and (9.33,134.43) .. (9.33,110.76) -- cycle ;
					\draw  [fill={rgb, 255:red, 126; green, 211; blue, 33 }  ,fill opacity=1 ] (73.93,76.54) .. controls (73.93,74.76) and (75.3,73.31) .. (76.99,73.31) .. controls (78.68,73.31) and (80.05,74.76) .. (80.05,76.54) .. controls (80.05,78.33) and (78.68,79.77) .. (76.99,79.77) .. controls (75.3,79.77) and (73.93,78.33) .. (73.93,76.54) -- cycle ;
					\draw  [fill={rgb, 255:red, 126; green, 211; blue, 33 }  ,fill opacity=1 ] (73.31,144.85) .. controls (73.31,143.07) and (74.68,141.62) .. (76.37,141.62) .. controls (78.06,141.62) and (79.43,143.07) .. (79.43,144.85) .. controls (79.43,146.64) and (78.06,148.08) .. (76.37,148.08) .. controls (74.68,148.08) and (73.31,146.64) .. (73.31,144.85) -- cycle ;
					\draw  [fill={rgb, 255:red, 0; green, 0; blue, 0 }  ,fill opacity=1 ] (90.49,110.76) .. controls (90.49,108.97) and (91.86,107.53) .. (93.55,107.53) .. controls (95.25,107.53) and (96.62,108.97) .. (96.62,110.76) .. controls (96.62,112.55) and (95.25,113.99) .. (93.55,113.99) .. controls (91.86,113.99) and (90.49,112.55) .. (90.49,110.76) -- cycle ;
					\draw    (93.55,110.76) -- (119.05,110.76) ;
					\draw  [fill={rgb, 255:red, 0; green, 0; blue, 0 }  ,fill opacity=1 ] (115.99,110.76) .. controls (115.99,108.97) and (117.36,107.53) .. (119.05,107.53) .. controls (120.74,107.53) and (122.11,108.97) .. (122.11,110.76) .. controls (122.11,112.55) and (120.74,113.99) .. (119.05,113.99) .. controls (117.36,113.99) and (115.99,112.55) .. (115.99,110.76) -- cycle ;
					\draw  [fill={rgb, 255:red, 0; green, 0; blue, 0 }  ,fill opacity=1 ] (177.55,110.76) .. controls (177.55,108.97) and (178.92,107.53) .. (180.62,107.53) .. controls (182.31,107.53) and (183.68,108.97) .. (183.68,110.76) .. controls (183.68,112.55) and (182.31,113.99) .. (180.62,113.99) .. controls (178.92,113.99) and (177.55,112.55) .. (177.55,110.76) -- cycle ;
					\draw    (180.62,110.76) -- (206.11,110.76) ;
					\draw  [dash pattern={on 0.84pt off 2.51pt}]  (119.05,110.76) -- (180.62,110.76) ;
					\draw  [fill={rgb, 255:red, 208; green, 2; blue, 27 }  ,fill opacity=1 ] (137.76,110.76) .. controls (137.76,108.97) and (139.13,107.53) .. (140.82,107.53) .. controls (142.51,107.53) and (143.88,108.97) .. (143.88,110.76) .. controls (143.88,112.55) and (142.51,113.99) .. (140.82,113.99) .. controls (139.13,113.99) and (137.76,112.55) .. (137.76,110.76) -- cycle ;
					\draw  [fill={rgb, 255:red, 208; green, 2; blue, 27 }  ,fill opacity=1 ] (155.17,110.76) .. controls (155.17,108.97) and (156.54,107.53) .. (158.23,107.53) .. controls (159.92,107.53) and (161.29,108.97) .. (161.29,110.76) .. controls (161.29,112.55) and (159.92,113.99) .. (158.23,113.99) .. controls (156.54,113.99) and (155.17,112.55) .. (155.17,110.76) -- cycle ;
					
					\draw (421.52,98.69) node [anchor=north west][inner sep=0.75pt]  [font=\footnotesize]  {$\vv_{\n}$};
					\draw (396.51,169.34) node [anchor=north west][inner sep=0.75pt]  [font=\normalsize]  {$\C_{\n}$};
					\draw (581.22,172.58) node [anchor=north west][inner sep=0.75pt]  [font=\normalsize]  {$\C_{\m}$};
					\draw (568.01,44.64) node [anchor=north west][inner sep=0.75pt]  [font=\footnotesize]  {$\vv_{\aaa}$};
					\draw (551.08,137.12) node [anchor=north west][inner sep=0.75pt]  [font=\footnotesize]  {$\vv_{\bb}$};
					\draw (564.65,103.69) node [anchor=north west][inner sep=0.75pt]  [font=\footnotesize]  {$\vv_{\n+\rr}$};
					\draw (459.08,120.7) node [anchor=north west][inner sep=0.75pt]  [font=\footnotesize]  {$\vv_{\n+1}$};
					\draw (511.95,85.7) node [anchor=north west][inner sep=0.75pt]  [font=\footnotesize]  {$\vv_{\n+\rr-1}$};
					\draw (493,196.4) node [anchor=north west][inner sep=0.75pt]  [font=\Large]  {$\B$};
					\draw (69.49,102.49) node [anchor=north west][inner sep=0.75pt]  [font=\footnotesize]  {$\vv_{\n}$};
					\draw (34.64,177.25) node [anchor=north west][inner sep=0.75pt]  [font=\large]  {$\C_{\n}$};
					\draw (244.11,177.49) node [anchor=north west][inner sep=0.75pt]  [font=\large]  {$\C_{\m}$};
					\draw (69.02,51.15) node [anchor=north west][inner sep=0.75pt]  [font=\footnotesize]  {$\vv_{\aaa}$};
					\draw (68.09,153.32) node [anchor=north west][inner sep=0.75pt]  [font=\footnotesize]  {$\vv_{\bb}$};
					\draw (217.41,104.34) node [anchor=north west][inner sep=0.75pt]  [font=\footnotesize]  {$\vv_{\n+\rr}$};
					\draw (106.48,121.32) node [anchor=north west][inner sep=0.75pt]  [font=\footnotesize]  {$\vv_{\n+1}$};
					\draw (162.56,89.32) node [anchor=north west][inner sep=0.75pt]  [font=\footnotesize]  {$\vv_{\n+\rr-1}$};
					\draw (133,196.4) node [anchor=north west][inner sep=0.75pt]  [font=\Large]  {$\A$};

				\end{tikzpicture}
			}
				
			\caption{Possibilities of elements of $\W$ from $\V(\PP_{\rr})$.}
			\label{bicycliciipathfigure}
		\end{center}
	\end{figure}
	Elements of $\W$ are represented as red colored vertices in the above figure. One can see that in case $\A$, the two red vertices don't resolve $\vv_{\aaa}, \vv_{\bb}$ of $\C_{\n}$ which are equidistant from $\vv_{\n}$. In case $\B$, without loss of generality, we assume that $\vv_{\jj} \in \W$ is from $\C_{\n}$ along with a vertex from $\PP_{\rr}$. One can again see that the vertices $\vv_{\aaa},\vv_{\bb}$ equidistant from $\vv_{\n+\rr}$ can not be resolved by this $\W$. This concludes our result.
\end{proof}
It should also be noted that if $\W$ is a metric basis for $\C_{\n, \rr, \m}$, the vertices of $\W$ can not be from the same cycle, since the vertices of the other cycle are not resolved by such a $\W$. This together with the Lemma \ref{bicycliciipathvertices} ensures that $\W$ contains a vertex of $\C_{\n}$ and a vertex of $\C_{\m}$ except $\vv_{\n}$ and/or $\vv_{\n+\rr}$. Next, we try to find all such vertices.
\begin{Lemma}
	For even $\C_{\n}$ (resp. even $\C_{\m}$) in $\C_{\n, \rr, \m}$, vertex $\vv_{\frac{\n}{2}}$ (resp. $\vv_{\n+\rr+\frac{\m}{2}}$) does not belong to any metric basis set $\W$.
\end{Lemma}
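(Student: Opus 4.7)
The plan is to mimic the contradiction argument of Lemma \ref{newlem}, with the cut vertex $\vv_{\n}$ at the end of the bridge $\PP_{\rr}$ now playing the role that the shared vertex of two cycles played in the type-I case. I would suppose for contradiction that some metric basis $\W$ of $\C_{\n,\rr,\m}$ contains $\vv_{\frac{\n}{2}}$. By Theorem \ref{oldthm2} we have $|\W|=2$, and combining Lemma \ref{bicycliciipathvertices} with the observation (stated immediately after that lemma) that $\W$ cannot have both its vertices in a single cycle, the second vertex $\vv_{\jj}$ of $\W$ is forced to lie in $\V(\C_{\m})\setminus\{\vv_{\n+\rr}\}$.

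Next I would pick a pair $\vv_{\aaa},\vv_{\bb}\in \V(\C_{\n})\setminus\{\vv_{\n},\vv_{\frac{\n}{2}}\}$ equidistant from $\vv_{\n}$; such a pair exists since $\n$ is even with $\n\geq 4$. Because $\vv_{\frac{\n}{2}}$ is antipodal to $\vv_{\n}$ on the even cycle $\C_{\n}$, the pair is automatically equidistant from $\vv_{\frac{\n}{2}}$ as well, so $\vv_{\frac{\n}{2}}$ fails to resolve them and the whole burden of resolution is thrown onto $\vv_{\jj}$.

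The decisive step is the cut-vertex observation: since $\vv_{\jj}\in \C_{\m}$ sits in the opposite cycle, every shortest $\vv_{\aaa}\vv_{\jj}$-path and every shortest $\vv_{\bb}\vv_{\jj}$-path must traverse $\vv_{\n}$ (and then the entire bridge $\PP_{\rr}$). Hence
\[
\dd(\vv_{\aaa},\vv_{\jj})=\dd(\vv_{\aaa},\vv_{\n})+\dd(\vv_{\n},\vv_{\jj})=\dd(\vv_{\bb},\vv_{\n})+\dd(\vv_{\n},\vv_{\jj})=\dd(\vv_{\bb},\vv_{\jj}),
\]
where the middle equality uses $\dd(\vv_{\aaa},\vv_{\n})=\dd(\vv_{\bb},\vv_{\n})$. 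This yields $r(\vv_{\aaa}|\W)=r(\vv_{\bb}|\W)$, contradicting that $\W$ is a resolving set. The parenthetical statement about $\vv_{\n+\rr+\frac{\m}{2}}$ follows by the symmetric argument with the roles of $\C_{\n}$ and $\C_{\m}$ (and of the cut vertices $\vv_{\n}$ and $\vv_{\n+\rr}$) interchanged.

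I do not expect a real technical obstacle here; the argument boils down to a one-line cut-vertex distance computation combined with the antipodal symmetry of an even cycle. The only points needing mild care are checking that the pair $\vv_{\aaa},\vv_{\bb}$ can in fact be chosen distinct from both $\vv_{\n}$ and $\vv_{\frac{\n}{2}}$ (immediate for even $\n\geq 4$) and invoking Lemma \ref{bicycliciipathvertices} plus the same-cycle remark to pin $\vv_{\jj}$ into $\C_{\m}\setminus\{\vv_{\n+\rr}\}$, which is what validates the "forced through $\vv_{\n}$" step.
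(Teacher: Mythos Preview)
Your proposal is correct and follows essentially the same approach as the paper: assume $\vv_{\frac{\n}{2}}\in\W$, force the second basis vertex into $\C_{\m}$, pick two vertices of $\C_{\n}$ symmetric about the $\vv_{\n}$--$\vv_{\frac{\n}{2}}$ axis, and use the cut-vertex distance identity through $\vv_{\n}$ to obtain equal representations. The only cosmetic difference is that the paper starts by choosing $\vv_{\aaa},\vv_{\bb}$ equidistant from $\vv_{\frac{\n}{2}}$ and then deduces they are equidistant from $\vv_{\n}$, whereas you do the reverse; your version is in fact slightly more explicit in justifying (via Lemma~\ref{bicycliciipathvertices} and the same-cycle remark) why $\vv_{\jj}$ must lie in $\C_{\m}$.
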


\begin{proof}
	Suppose on the contrary that $\vv_{\frac{\n}{2}} \in \W$, for some metric basis set $\W$. Since $\beta(\C_{\n, \rr, \m})=2$, let $\vv_{\jj} \in \W$ be the second vertex, then $\vv_{\jj} \in \C_{\m}$. Let $\vv_{\aaa}, \vv_{\bb}$ be two vertices from $\C_{\n}$ equidistant from $\vv_{\frac{\n}{2}}$. Since $\n$ is even, we can easily conclude that $\vv_{\aaa}, \vv_{\bb}$ are equidistant from $\vv_{\n}$. Again, since the shortest $\vv_{\jj} \vv_{\aaa}$ and $\vv_{\jj} \vv_{\bb}$ paths pass through $\vv_{\n}$, we conclude that the vertex $\vv_{\jj}$ does not resolve $\vv_{\aaa}$ and $\vv_{\bb}$. This concludes that the set $\W$ is not a resolving set and hence can not be a metric basis.
\end{proof}

\begin{Lemma}\label{type2metricbasis}
	Any set of the form $\W=\{\vv_{\ii},\vv_{\jj}\}$, $\vv_{\ii} \in \C_{\n} \big(\vv_{\ii} \neq \vv_{\n}$ for all $\n$ and $\vv_{\ii} \neq \vv_{\frac{\n}{2}}$ for even $\n\big)$ and $\vv_{\jj} \in \C_{\m} \big( \vv_{\jj} \neq \vv_{\n+\rr}$ for all $\n$ and $\vv_{\jj} \neq \vv_{\n+\rr+\frac{\m}{2}}$ for even $\n \big)$ forms a metric basis for $\C_{\n, \rr, \m}$.
\end{Lemma}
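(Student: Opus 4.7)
The plan is to verify that $\W=\{\vv_{\ii},\vv_{\jj}\}$ is a resolving set; since Theorem \ref{oldthm2} gives $\beta(\C_{\n,\rr,\m})=2=|\W|$, this immediately promotes $\W$ to a metric basis. The key structural observation I would establish first is the routing principle: because $\PP_{\rr}$ is the unique bridge between the two cycles, for every $\vv\in V(\C_{\n})$ the shortest $\vv\vv_{\jj}$-path passes through $\vv_{\n}$, giving $\dd(\vv,\vv_{\jj})=\dd_{\C_{\n}}(\vv,\vv_{\n})+\rr+\dd_{\C_{\m}}(\vv_{\n+\rr},\vv_{\jj})$, and symmetrically $\dd(\vv,\vv_{\ii})=\dd_{\C_{\m}}(\vv,\vv_{\n+\rr})+\rr+\dd_{\C_{\n}}(\vv_{\n},\vv_{\ii})$ for $\vv\in V(\C_{\m})$. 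For an internal path vertex $\vv_{\n+\kk}$ ($1\le \kk\le \rr-1$), both coordinates are linear in $\kk$.

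With this in hand I would do a case analysis on the location of a distinct pair $\vv_{\aaa},\vv_{\bb}$, using the partition $V_{1}=V(\C_{\n})$, $V_{2}=\{\vv_{\n+1},\dots,\vv_{\n+\rr-1}\}$, $V_{3}=V(\C_{\m})$. When both lie in $V_{2}$, the first coordinates differ by $\kk_{\aaa}-\kk_{\bb}\neq 0$, so $\vv_{\ii}$ resolves. When $\vv_{\aaa}\in V_{1}$ and $\vv_{\bb}\in V_{2}$, the second coordinates differ by $\dd_{\C_{\n}}(\vv_{\aaa},\vv_{\n})+\kk_{\bb}\geq 1$, so $\vv_{\jj}$ resolves (symmetric for $V_{3}$ and $V_{2}$).

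The two substantive cases are (i) both $\vv_{\aaa},\vv_{\bb}\in V_{1}$ (symmetrically $V_{3}$), and (ii) $\vv_{\aaa}\in V_{1},\vv_{\bb}\in V_{3}$. For (i), I use the standard cycle-reflection calculation: two distinct vertices of $\C_{\n}$ are equidistant from a vertex $\vv_{c}$ if and only if their indices sum to $2c\pmod{\n}$; being equidistant from both $\vv_{\ii}$ and $\vv_{\n}$ therefore forces $2\ii\equiv 0\pmod{\n}$, i.e.\ $\vv_{\ii}\in\{\vv_{\n},\vv_{\n/2}\}$, both of which are excluded by hypothesis. Hence whenever $\vv_{\ii}$ fails, $\dd_{\C_{\n}}(\vv_{\aaa},\vv_{\n})\neq \dd_{\C_{\n}}(\vv_{\bb},\vv_{\n})$, so the routing formula gives $\dd(\vv_{\aaa},\vv_{\jj})\neq \dd(\vv_{\bb},\vv_{\jj})$ and $\vv_{\jj}$ resolves.

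The main obstacle is case (ii), where both coordinates route through the bridge and no single monotonicity finishes things off. Here I would assume for contradiction that \emph{both} $\vv_{\ii}$ and $\vv_{\jj}$ fail to resolve, writing the two equations
\begin{align*}
\dd_{\C_{\n}}(\vv_{\aaa},\vv_{\ii})-\dd_{\C_{\n}}(\vv_{\n},\vv_{\ii}) &= \rr+\dd_{\C_{\m}}(\vv_{\bb},\vv_{\n+\rr}),\\
\dd_{\C_{\m}}(\vv_{\bb},\vv_{\jj})-\dd_{\C_{\m}}(\vv_{\n+\rr},\vv_{\jj}) &= \rr+\dd_{\C_{\n}}(\vv_{\aaa},\vv_{\n}).
\end{align*}
Applying the triangle inequality in $\C_{\n}$ to the left side of the first and in $\C_{\m}$ to the left side of the second yields $\rr+\dd_{\C_{\m}}(\vv_{\bb},\vv_{\n+\rr})\leq\dd_{\C_{\n}}(\vv_{\aaa},\vv_{\n})$ and $\rr+\dd_{\C_{\n}}(\vv_{\aaa},\vv_{\n})\leq\dd_{\C_{\m}}(\vv_{\bb},\vv_{\n+\rr})$. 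Adding gives $2\rr\le 0$, contradicting $\rr\geq 1$. Thus at least one of $\vv_{\ii},\vv_{\jj}$ resolves the pair, completing the case analysis and the proof.
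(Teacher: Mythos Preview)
Your proof is correct and, in the cross-cycle case, actually more rigorous than the paper's. Both arguments share the same skeleton: invoke Theorem~\ref{oldthm2} to reduce to showing $\W$ resolves, then do a case split on where the pair $\vv_{\aaa},\vv_{\bb}$ lives. The paper fixes a parity configuration, assumes $\dd(\vv_{\aaa},\vv_{\ii})=\dd(\vv_{\bb},\vv_{\ii})$, and treats three pictorial cases (both in $\C_{\n}$; one in $\C_{\n}$ and one on $\PP_{\rr}$; both in $\C_{\m}$), handling the $\C_{\n}$-case by a same-half/different-half subdivision rather than your reflection congruence $a+b\equiv 2c\pmod{\n}$.

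Where you genuinely diverge is case~(ii), $\vv_{\aaa}\in\C_{\n}$ and $\vv_{\bb}\in\C_{\m}$. The paper's figure and text do not isolate this configuration explicitly; your approach---setting up the two failure equations, bounding each left side by the triangle inequality in its own cycle, and summing to get $2\rr\le 0$---is a clean self-contained contradiction that works uniformly in all parities and makes no appeal to pictures. It also makes transparent exactly where the hypothesis $\rr\ge 1$ is used. Your reflection argument in case~(i) likewise dispatches all parities at once and pinpoints precisely why the exclusions $\vv_{\ii}\notin\{\vv_{\n},\vv_{\n/2}\}$ (resp.\ $\vv_{\jj}\notin\{\vv_{\n+\rr},\vv_{\n+\rr+\m/2}\}$) are needed, whereas the paper argues one parity and defers the rest to ``a similar way.'' Overall your write-up is tighter; nothing needs to change.
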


\begin{proof}
	Let $\W=\{\vv_{\ii},\vv_{\jj}\}$ be as given above. We have to show that such a set resolves all vertices of $\C_{\n, \rr, \m}$. Since there are possibilities of both $\n,\m$ to be odd or even or one to be odd and another to be even, let us consider $\n$ to be even and $\m$ to be odd.
	
	Since $\beta(\C_{\n, \rr, \m})=2$, we only need to show that any vertex not resolved by $\vv_{\ii}$ is resolved by $\vv_{\jj}$ and vice versa.
	
	Let $\vv_{\aaa}, \vv_{\bb}$ be vertices not resolved by $\vv_{\ii}$, i.e., $\dd(\vv_{\aaa},\vv_{\ii})=\dd(\vv_{\bb},\vv_{\ii})$. Once again, different possibilities arise for the positions of $\vv_{\aaa}$ and $\vv_{\bb}$. These are shown in figure \ref{Cnrmallbasis}.
	
	\begin{figure}[H]
		\begin{center}
			\tikzset{every picture/.style={line width=0.75pt}} 
			\resizebox{12cm}{6.5cm}{
				\begin{tikzpicture}[x=0.75pt,y=0.75pt,yscale=-1,xscale=1]
					
					\draw  [fill={rgb, 255:red, 0; green, 0; blue, 0 }  ,fill opacity=1 ] (252.43,82.49) .. controls (252.43,80.55) and (253.93,78.97) .. (255.78,78.97) .. controls (257.63,78.97) and (259.13,80.55) .. (259.13,82.49) .. controls (259.13,84.43) and (257.63,86) .. (255.78,86) .. controls (253.93,86) and (252.43,84.43) .. (252.43,82.49) -- cycle ;
					\draw   (255.78,82.49) .. controls (255.78,56.74) and (276.43,35.88) .. (301.9,35.88) .. controls (327.37,35.88) and (348.02,56.74) .. (348.02,82.49) .. controls (348.02,108.23) and (327.37,129.1) .. (301.9,129.1) .. controls (276.43,129.1) and (255.78,108.23) .. (255.78,82.49) -- cycle ;
					\draw   (40.27,82.49) .. controls (40.27,56.74) and (60.91,35.88) .. (86.39,35.88) .. controls (111.86,35.88) and (132.51,56.74) .. (132.51,82.49) .. controls (132.51,108.23) and (111.86,129.1) .. (86.39,129.1) .. controls (60.91,129.1) and (40.27,108.23) .. (40.27,82.49) -- cycle ;
					\draw  [fill={rgb, 255:red, 0; green, 0; blue, 0 }  ,fill opacity=1 ] (129.15,82.49) .. controls (129.15,80.55) and (130.65,78.97) .. (132.51,78.97) .. controls (134.36,78.97) and (135.86,80.55) .. (135.86,82.49) .. controls (135.86,84.43) and (134.36,86) .. (132.51,86) .. controls (130.65,86) and (129.15,84.43) .. (129.15,82.49) -- cycle ;
					\draw    (132.51,82.49) -- (160.43,82.49) ;
					\draw  [fill={rgb, 255:red, 0; green, 0; blue, 0 }  ,fill opacity=1 ] (157.08,82.49) .. controls (157.08,80.55) and (158.58,78.97) .. (160.43,78.97) .. controls (162.28,78.97) and (163.78,80.55) .. (163.78,82.49) .. controls (163.78,84.43) and (162.28,86) .. (160.43,86) .. controls (158.58,86) and (157.08,84.43) .. (157.08,82.49) -- cycle ;
					\draw  [fill={rgb, 255:red, 0; green, 0; blue, 0 }  ,fill opacity=1 ] (224.5,82.49) .. controls (224.5,80.55) and (226,78.97) .. (227.86,78.97) .. controls (229.71,78.97) and (231.21,80.55) .. (231.21,82.49) .. controls (231.21,84.43) and (229.71,86) .. (227.86,86) .. controls (226,86) and (224.5,84.43) .. (224.5,82.49) -- cycle ;
					\draw    (227.86,82.49) -- (255.78,82.49) ;
					\draw  [dash pattern={on 0.84pt off 2.51pt}]  (160.43,82.49) -- (227.86,82.49) ;
					\draw  [fill={rgb, 255:red, 208; green, 2; blue, 27 }  ,fill opacity=1 ] (53.56,47.05) .. controls (53.56,45.11) and (55.06,43.54) .. (56.91,43.54) .. controls (58.76,43.54) and (60.26,45.11) .. (60.26,47.05) .. controls (60.26,48.99) and (58.76,50.57) .. (56.91,50.57) .. controls (55.06,50.57) and (53.56,48.99) .. (53.56,47.05) -- cycle ;
					\draw  [fill={rgb, 255:red, 126; green, 211; blue, 33 }  ,fill opacity=1 ] (108.04,42.96) .. controls (108.04,41.02) and (109.54,39.45) .. (111.39,39.45) .. controls (113.24,39.45) and (114.75,41.02) .. (114.75,42.96) .. controls (114.75,44.9) and (113.24,46.48) .. (111.39,46.48) .. controls (109.54,46.48) and (108.04,44.9) .. (108.04,42.96) -- cycle ;
					\draw  [fill={rgb, 255:red, 126; green, 211; blue, 33 }  ,fill opacity=1 ] (39.25,97.48) .. controls (39.25,95.54) and (40.75,93.96) .. (42.61,93.96) .. controls (44.46,93.96) and (45.96,95.54) .. (45.96,97.48) .. controls (45.96,99.42) and (44.46,100.99) .. (42.61,100.99) .. controls (40.75,100.99) and (39.25,99.42) .. (39.25,97.48) -- cycle ;
					\draw  [fill={rgb, 255:red, 208; green, 2; blue, 27 }  ,fill opacity=1 ] (338.92,60.68) .. controls (338.92,58.74) and (340.42,57.16) .. (342.27,57.16) .. controls (344.13,57.16) and (345.63,58.74) .. (345.63,60.68) .. controls (345.63,62.62) and (344.13,64.2) .. (342.27,64.2) .. controls (340.42,64.2) and (338.92,62.62) .. (338.92,60.68) -- cycle ;
					\draw  [fill={rgb, 255:red, 0; green, 0; blue, 0 }  ,fill opacity=1 ] (642.26,82.37) .. controls (642.26,80.43) and (643.76,78.85) .. (645.61,78.85) .. controls (647.47,78.85) and (648.97,80.43) .. (648.97,82.37) .. controls (648.97,84.31) and (647.47,85.89) .. (645.61,85.89) .. controls (643.76,85.89) and (642.26,84.31) .. (642.26,82.37) -- cycle ;
					\draw   (645.61,82.37) .. controls (645.61,56.6) and (666.29,35.71) .. (691.8,35.71) .. controls (717.32,35.71) and (738,56.6) .. (738,82.37) .. controls (738,108.14) and (717.32,129.03) .. (691.8,129.03) .. controls (666.29,129.03) and (645.61,108.14) .. (645.61,82.37) -- cycle ;
					\draw   (429.77,82.37) .. controls (429.77,56.6) and (450.45,35.71) .. (475.96,35.71) .. controls (501.47,35.71) and (522.15,56.6) .. (522.15,82.37) .. controls (522.15,108.14) and (501.47,129.03) .. (475.96,129.03) .. controls (450.45,129.03) and (429.77,108.14) .. (429.77,82.37) -- cycle ;
					\draw  [fill={rgb, 255:red, 0; green, 0; blue, 0 }  ,fill opacity=1 ] (518.8,82.37) .. controls (518.8,80.43) and (520.3,78.85) .. (522.15,78.85) .. controls (524.01,78.85) and (525.51,80.43) .. (525.51,82.37) .. controls (525.51,84.31) and (524.01,85.89) .. (522.15,85.89) .. controls (520.3,85.89) and (518.8,84.31) .. (518.8,82.37) -- cycle ;
					\draw    (522.15,82.37) -- (550.12,82.37) ;
					\draw  [fill={rgb, 255:red, 0; green, 0; blue, 0 }  ,fill opacity=1 ] (546.76,82.37) .. controls (546.76,80.43) and (548.27,78.85) .. (550.12,78.85) .. controls (551.97,78.85) and (553.48,80.43) .. (553.48,82.37) .. controls (553.48,84.31) and (551.97,85.89) .. (550.12,85.89) .. controls (548.27,85.89) and (546.76,84.31) .. (546.76,82.37) -- cycle ;
					\draw  [fill={rgb, 255:red, 0; green, 0; blue, 0 }  ,fill opacity=1 ] (614.29,82.37) .. controls (614.29,80.43) and (615.79,78.85) .. (617.65,78.85) .. controls (619.5,78.85) and (621.01,80.43) .. (621.01,82.37) .. controls (621.01,84.31) and (619.5,85.89) .. (617.65,85.89) .. controls (615.79,85.89) and (614.29,84.31) .. (614.29,82.37) -- cycle ;
					\draw    (617.65,82.37) -- (645.61,82.37) ;
					\draw  [dash pattern={on 0.84pt off 2.51pt}]  (550.12,82.37) -- (617.65,82.37) ;
					\draw  [fill={rgb, 255:red, 208; green, 2; blue, 27 }  ,fill opacity=1 ] (443.08,46.89) .. controls (443.08,44.95) and (444.59,43.37) .. (446.44,43.37) .. controls (448.29,43.37) and (449.8,44.95) .. (449.8,46.89) .. controls (449.8,48.84) and (448.29,50.41) .. (446.44,50.41) .. controls (444.59,50.41) and (443.08,48.84) .. (443.08,46.89) -- cycle ;
					\draw  [fill={rgb, 255:red, 126; green, 211; blue, 33 }  ,fill opacity=1 ] (497.65,121.22) .. controls (497.65,119.27) and (499.15,117.7) .. (501.01,117.7) .. controls (502.86,117.7) and (504.37,119.27) .. (504.37,121.22) .. controls (504.37,123.16) and (502.86,124.74) .. (501.01,124.74) .. controls (499.15,124.74) and (497.65,123.16) .. (497.65,121.22) -- cycle ;
					\draw  [fill={rgb, 255:red, 126; green, 211; blue, 33 }  ,fill opacity=1 ] (579.97,81.93) .. controls (579.97,79.99) and (581.48,78.41) .. (583.33,78.41) .. controls (585.19,78.41) and (586.69,79.99) .. (586.69,81.93) .. controls (586.69,83.88) and (585.19,85.45) .. (583.33,85.45) .. controls (581.48,85.45) and (579.97,83.88) .. (579.97,81.93) -- cycle ;
					\draw  [fill={rgb, 255:red, 208; green, 2; blue, 27 }  ,fill opacity=1 ] (728.88,60.54) .. controls (728.88,58.59) and (730.39,57.02) .. (732.24,57.02) .. controls (734.1,57.02) and (735.6,58.59) .. (735.6,60.54) .. controls (735.6,62.48) and (734.1,64.06) .. (732.24,64.06) .. controls (730.39,64.06) and (728.88,62.48) .. (728.88,60.54) -- cycle ;
					\draw  [fill={rgb, 255:red, 0; green, 0; blue, 0 }  ,fill opacity=1 ] (451.07,340.92) .. controls (451.07,338.96) and (452.57,337.37) .. (454.43,337.37) .. controls (456.28,337.37) and (457.78,338.96) .. (457.78,340.92) .. controls (457.78,342.89) and (456.28,344.48) .. (454.43,344.48) .. controls (452.57,344.48) and (451.07,342.89) .. (451.07,340.92) -- cycle ;
					\draw   (454.43,340.92) .. controls (454.43,314.89) and (475.1,293.8) .. (500.6,293.8) .. controls (526.1,293.8) and (546.77,314.89) .. (546.77,340.92) .. controls (546.77,366.95) and (526.1,388.05) .. (500.6,388.05) .. controls (475.1,388.05) and (454.43,366.95) .. (454.43,340.92) -- cycle ;
					\draw   (238.69,340.92) .. controls (238.69,314.89) and (259.36,293.8) .. (284.86,293.8) .. controls (310.35,293.8) and (331.02,314.89) .. (331.02,340.92) .. controls (331.02,366.95) and (310.35,388.05) .. (284.86,388.05) .. controls (259.36,388.05) and (238.69,366.95) .. (238.69,340.92) -- cycle ;
					\draw  [fill={rgb, 255:red, 0; green, 0; blue, 0 }  ,fill opacity=1 ] (327.67,340.92) .. controls (327.67,338.96) and (329.17,337.37) .. (331.02,337.37) .. controls (332.88,337.37) and (334.38,338.96) .. (334.38,340.92) .. controls (334.38,342.89) and (332.88,344.48) .. (331.02,344.48) .. controls (329.17,344.48) and (327.67,342.89) .. (327.67,340.92) -- cycle ;
					\draw    (331.02,340.92) -- (358.98,340.92) ;
					\draw  [fill={rgb, 255:red, 0; green, 0; blue, 0 }  ,fill opacity=1 ] (355.62,340.92) .. controls (355.62,338.96) and (357.12,337.37) .. (358.98,337.37) .. controls (360.83,337.37) and (362.33,338.96) .. (362.33,340.92) .. controls (362.33,342.89) and (360.83,344.48) .. (358.98,344.48) .. controls (357.12,344.48) and (355.62,342.89) .. (355.62,340.92) -- cycle ;
					\draw  [fill={rgb, 255:red, 0; green, 0; blue, 0 }  ,fill opacity=1 ] (423.12,340.92) .. controls (423.12,338.96) and (424.62,337.37) .. (426.47,337.37) .. controls (428.33,337.37) and (429.83,338.96) .. (429.83,340.92) .. controls (429.83,342.89) and (428.33,344.48) .. (426.47,344.48) .. controls (424.62,344.48) and (423.12,342.89) .. (423.12,340.92) -- cycle ;
					\draw    (426.47,340.92) -- (454.43,340.92) ;
					\draw  [dash pattern={on 0.84pt off 2.51pt}]  (358.98,340.92) -- (426.47,340.92) ;
					\draw  [fill={rgb, 255:red, 208; green, 2; blue, 27 }  ,fill opacity=1 ] (251.99,305.09) .. controls (251.99,303.13) and (253.49,301.54) .. (255.35,301.54) .. controls (257.2,301.54) and (258.7,303.13) .. (258.7,305.09) .. controls (258.7,307.06) and (257.2,308.65) .. (255.35,308.65) .. controls (253.49,308.65) and (251.99,307.06) .. (251.99,305.09) -- cycle ;
					\draw  [fill={rgb, 255:red, 126; green, 211; blue, 33 }  ,fill opacity=1 ] (483.8,385.02) .. controls (483.8,383.06) and (485.3,381.47) .. (487.15,381.47) .. controls (489.01,381.47) and (490.51,383.06) .. (490.51,385.02) .. controls (490.51,386.98) and (489.01,388.57) .. (487.15,388.57) .. controls (485.3,388.57) and (483.8,386.98) .. (483.8,385.02) -- cycle ;
					\draw  [fill={rgb, 255:red, 126; green, 211; blue, 33 }  ,fill opacity=1 ] (482.43,296.14) .. controls (482.43,294.17) and (483.94,292.58) .. (485.79,292.58) .. controls (487.64,292.58) and (489.15,294.17) .. (489.15,296.14) .. controls (489.15,298.1) and (487.64,299.69) .. (485.79,299.69) .. controls (483.94,299.69) and (482.43,298.1) .. (482.43,296.14) -- cycle ;
					\draw  [fill={rgb, 255:red, 208; green, 2; blue, 27 }  ,fill opacity=1 ] (537.66,318.87) .. controls (537.66,316.91) and (539.16,315.32) .. (541.02,315.32) .. controls (542.87,315.32) and (544.37,316.91) .. (544.37,318.87) .. controls (544.37,320.84) and (542.87,322.43) .. (541.02,322.43) .. controls (539.16,322.43) and (537.66,320.84) .. (537.66,318.87) -- cycle ;
					
					\draw (108.25,74.6) node [anchor=north west][inner sep=0.75pt]  [font=\footnotesize]  {$\vv_{\n}$};
					\draw (69.17,155.71) node [anchor=north west][inner sep=0.75pt]  [font=\large]  {$\C_{\n}$};
					\draw (298.73,155.97) node [anchor=north west][inner sep=0.75pt]  [font=\large]  {$\C_{\m}$};
					\draw (121.28,33.78) node [anchor=north west][inner sep=0.75pt]  [font=\footnotesize]  {$\vv_{\aaa}$};
					\draw (20.29,88.99) node [anchor=north west][inner sep=0.75pt]  [font=\footnotesize]  {$\vv_{\bb}$};
					\draw (266.09,73.24) node [anchor=north west][inner sep=0.75pt]  [font=\footnotesize]  {$\vv_{\n+\rr}$};
					\draw (148.99,97.59) node [anchor=north west][inner sep=0.75pt]  [font=\footnotesize]  {$\vv_{\n+1}$};
					\draw (210.84,56.05) node [anchor=north west][inner sep=0.75pt]  [font=\footnotesize]  {$\vv_{\n+\rr-1}$};
					\draw (48.12,24.18) node [anchor=north west][inner sep=0.75pt]  [font=\footnotesize]  {$\vv_{\ii}$};
					\draw (349.62,52.8) node [anchor=north west][inner sep=0.75pt]  [font=\footnotesize]  {$\vv_{\jj}$};
					\draw (497.88,74.49) node [anchor=north west][inner sep=0.75pt]  [font=\footnotesize]  {$\vv_{\n}$};
					\draw (458.74,155.69) node [anchor=north west][inner sep=0.75pt]  [font=\large]  {$\C_{\n}$};
					\draw (688.65,155.95) node [anchor=north west][inner sep=0.75pt]  [font=\large]  {$\C_{\m}$};
					\draw (510.92,33.61) node [anchor=north west][inner sep=0.75pt]  [font=\footnotesize]  {$\vv_{\aaa}$};
					\draw (577.55,55.62) node [anchor=north west][inner sep=0.75pt]  [font=\footnotesize]  {$\vv_{\bb}$};
					\draw (655.96,73.12) node [anchor=north west][inner sep=0.75pt]  [font=\footnotesize]  {$\vv_{\n+\rr}$};
					\draw (538.68,97.5) node [anchor=north west][inner sep=0.75pt]  [font=\footnotesize]  {$\vv_{\n+1}$};
					\draw (600.63,97.5) node [anchor=north west][inner sep=0.75pt]  [font=\footnotesize]  {$\vv_{\n+\rr-1}$};
					\draw (437.65,24) node [anchor=north west][inner sep=0.75pt]  [font=\footnotesize]  {$\vv_{\ii}$};
					\draw (739.61,52.65) node [anchor=north west][inner sep=0.75pt]  [font=\footnotesize]  {$\vv_{\jj}$};
					\draw (306.81,336.22) node [anchor=north west][inner sep=0.75pt]  [font=\footnotesize]  {$\vv_{\n}$};
					\draw (267.64,415.08) node [anchor=north west][inner sep=0.75pt]  [font=\large]  {$\C_{\n}$};
					\draw (497.44,415.34) node [anchor=north west][inner sep=0.75pt]  [font=\large]  {$\C_{\m}$};
					\draw (483.47,391.85) node [anchor=north west][inner sep=0.75pt]  [font=\footnotesize]  {$\vv_{\aaa}$};
					\draw (478.51,274.04) node [anchor=north west][inner sep=0.75pt]  [font=\footnotesize]  {$\vv_{\bb}$};
					\draw (464.3,335.84) node [anchor=north west][inner sep=0.75pt]  [font=\footnotesize]  {$\vv_{\n+\rr}$};
					\draw (345.65,314.62) node [anchor=north west][inner sep=0.75pt]  [font=\footnotesize]  {$\vv_{\n+1}$};
					\draw (408.14,315.99) node [anchor=north west][inner sep=0.75pt]  [font=\footnotesize]  {$\vv_{\n+\rr-1}$};
					\draw (246.6,282.23) node [anchor=north west][inner sep=0.75pt]  [font=\footnotesize]  {$\vv_{\ii}$};
					\draw (548.48,311.17) node [anchor=north west][inner sep=0.75pt]  [font=\footnotesize]  {$\vv_{\jj}$};
					\draw (179,163.4) node [anchor=north west][inner sep=0.75pt]  [font=\LARGE]  {$\A$};
					\draw (579,162.4) node [anchor=north west][inner sep=0.75pt]  [font=\LARGE]  {$\B$};
					\draw (379,432.4) node [anchor=north west][inner sep=0.75pt]  [font=\LARGE]  {$\C$};

				\end{tikzpicture}

			}
			\caption{Possibilities of Positions of $\vv_{\aaa},\vv_{\bb}$ in $\C_{\n, \rr, \m}$.}
			\label{Cnrmallbasis}
		\end{center}
	\end{figure}
	In case $B$, since $\vv_{\bb}$ lies on the smallest $\vv_{\jj} \vv_{\aaa}$ path, we can easily conclude that $\dd(\vv_{\jj},\vv_{\aaa}) \neq \dd(\vv_{\jj},\vv_{\bb})$. In case $\C$, since $\vv_{\aaa},\vv_{\bb}$ are equidistant from $\vv_{\ii}$, they are equidistant from $\vv_{\n+\rr}$. Since $\m$ is odd, no matter which $\vv_{\jj} \in \C_{\m}$ we consider, we get $\dd(\vv_{\jj},\vv_{\aaa}) \neq \dd(\vv_{\jj},\vv_{\bb})$.
	
	Case $\A$ can be further divided into two cases. $\A-1$ where $\vv_{\aaa},\vv_{\bb}$ both lie in the same half of $\C_{\n}$ and case $\A-2$, where they lie in different halves of $\C_{\n}$.
	
	When both $\vv_{\aaa}, \vv_{\bb}$ lie in the same half of $\C_{\n}$, since they are equidistant from $\vv_{\ii}$, $\vv_{\ii}$ must be in the same half. From here we can easily conclude that $\dd(\vv_{\jj},\vv_{\aaa}) \neq \dd(\vv_{\jj},\vv_{\bb})$.
	
	In the case of $\vv_{\aaa},\vv_{\bb}$ from different halves of $\C_{\n}$, since $\n$ is even and, $\vv_{\ii} \neq \vv_{\n}$ and $\vv_{\ii} \neq \vv_{\frac{\n}{2}}$, we easily conclude that $\dd(\vv_{\aaa}, \vv_{\frac{\n}{2}}) \neq \dd(\vv_{\bb}, \vv_{\frac{\n}{2}})$. This once again ensures that $\dd(\vv_{\jj},\vv_{\aaa}) \neq \dd(\vv_{\jj},\vv_{\bb})$.
	
	When we consider $\vv_{\aaa},\vv_{\bb}$ to be equidistant from $\vv_{\jj}$, an argument along the same lines ensures that $\vv_{\ii}$ resolves these vertices.
	
	Both other cases where $\n,\m$ are both odd or even can be solved in a similar way. This concludes our result.
\end{proof}
	
\section{Fault Tolerant Metric Dimension of Bicyclic Generated Graphs}

Armed with the knowledge from above Lemmas, we are now ready to state and prove the results for fault tolerant metric basis and fault tolerant metric dimension of bicyclic graphs of type-I and type-II. 

\begin{Lemma}\label{lemm3}
	For any bicyclic graph of type-I and II, $\beta'(\G)>3$.
\end{Lemma}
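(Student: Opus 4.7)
Since $\beta(\G) \ge 2$ for every bicyclic graph of type-I and II, the standard bound $\beta'(\G) \ge \beta(\G) + 1$ already yields $\beta'(\G) \ge 3$, so the task will be to rule out equality. I will first dispose of the easy case $\G \simeq \C_{\n,\m}$ of type-I with both $\n,\m$ even: here Theorem \ref{odd} gives $\beta(\G) = 3$, and the same bound immediately delivers $\beta'(\G) \ge 4$.

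In every remaining case $\beta(\G) = 2$, and I will argue by contradiction. Assuming $\W = \{\ww_1, \ww_2, \ww_3\}$ is a fault tolerant resolving set, each of the three $2$-subsets $\W \setminus \{\ww_i\}$ is a resolving set of cardinality $2 = \beta(\G)$, hence a metric basis of $\G$. The idea is to feed this back into the structural characterisations of metric bases established in the preceding section and extract a contradiction.

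The key structural facts I will invoke are: for type-I, Lemma \ref{lem1} excludes the cut vertex $\vv_\n$ from every metric basis, while the remark after Theorem \ref{odd} forces every metric basis of $\C_{\n,\m}$ to contain exactly one vertex of $\C_\n$ and one of $\C_\m$; for type-II, Lemma \ref{bicycliciipathvertices} excludes every vertex of $\V(\PP_\rr)$ from every metric basis, and the note following it likewise forces every metric basis of $\C_{\n,\rr,\m}$ to have one vertex in each cycle. With these in hand each $\ww_i$ lies in exactly one of $\C_\n$ or $\C_\m$ (and is not an excluded vertex), so the pigeonhole principle places two of $\ww_1, \ww_2, \ww_3$ in the same cycle; the corresponding $2$-subset then contains no vertex of the opposite cycle and so cannot be a metric basis — contradicting the fault tolerance of $\W$. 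This forces $\beta'(\G) > 3$.

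There is essentially no hard step here: the lemmas of the previous section have already done the combinatorial heavy lifting, and the proof reduces to a one-line pigeonhole argument on top of them. The only subtlety worth flagging is remembering that in type-II the path endpoints $\vv_\n$ and $\vv_{\n+\rr}$ are also excluded (they lie in $\V(\PP_\rr)$), so that when assigning each $\ww_i$ to a single cycle the ``cut-like'' vertices really are off the table.
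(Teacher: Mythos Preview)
Your argument is correct and rests on the same pigeonhole idea as the paper's proof: with three vertices distributed between two cycles, two of them lie in the same cycle, and deleting the remaining vertex leaves a $2$-set that cannot resolve the other cycle. The paper, however, reaches this conclusion more directly. It does not split off the even--even type-I case (it never needs $\beta(\G)=2$, only $\beta'(\G)\ge 3$), and it does not invoke Lemmas~\ref{lem1} or~\ref{bicycliciipathvertices} to first locate each $\ww_i$ in a cycle. Instead it observes immediately that if $\W\setminus\{\vv_\kk\}$ contains no vertex of $\C_\m$ then it fails to resolve the two neighbours of the cut vertex in $\C_\m$, which is enough for the contradiction regardless of parity or of whether some $\ww_i$ sits on the connecting path. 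Your route through ``each $2$-subset is a metric basis, hence satisfies the structural constraints of Section~3'' is valid but imports more machinery than the claim actually needs.
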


\begin{proof}
	Since $\beta'(\C_{\n})=3$, we can easily infer that $\beta'(\G)\geq 3$ when $\G$ is a bicyclic graph of type-I or II. Since in both types of bicyclic graphs, we only have two cycles, let us denote them by $\C_{\n}$ and $\C_{\m}$. If we take $\beta'(\G)= 3$, then by pigeonhole principal, at least one of the cycles contributes two vertices to the fault tolerant metric basis. Let $\W=\{\vv_{\ii},\vv_{\jj}, \vv_{\kk}\}$ be the fault tolerant metric basis. Without loss of generality, we assume that $\vv_{\ii},\vv_{\jj} \in \C_{\n}$ and $\vv_{\kk} \in \C_{\m}$. Now, $\W$ can not be a fault tolerant resolving set since $\W-\{\vv_{\kk}\}$ does not resolve any vertex of $\C_{\m}$. This concludes our argument.
\end{proof}

\begin{Theorem}
	Let $\C_{\n,\m}$ be a bicyclic graph of type-I, then $\beta'(\C_{\n,\m})=4$.
\end{Theorem}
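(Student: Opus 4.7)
The plan is to establish the two inequalities $\beta'(\C_{\n,\m})\ge 4$ and $\beta'(\C_{\n,\m})\le 4$. The lower bound is immediate from Lemma~\ref{lemm3}, so the real content is to exhibit, for every parity combination of $\n$ and $\m$, an explicit four‐vertex set $\W$ with the property that \emph{every} three‐element subset of $\W$ is a resolving set of $\C_{\n,\m}$. The philosophy is that we have already catalogued many two‐ and three‐element metric bases in Lemmas~\ref{oddoddmiddleall}, \ref{lem2}, \ref{lemm1} and \ref{evenevenallbasis}; we only need to pick $\W$ so that removing any one of its vertices leaves a subset that still contains one of those catalogued bases.

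Concretely, I would split into three cases. When $\n,\m$ are both odd, take
\[
\W=\bigl\{\vv_{\lfloor \n/2\rfloor},\ \vv_{\lfloor \n/2\rfloor+1},\ \vv_{\n+\lfloor \m/2\rfloor},\ \vv_{\n+\lfloor \m/2\rfloor+1}\bigr\}.
\]
For each single removal, the remaining triple contains at least one of the four ``middle'' vertices of $\C_{\n}$ together with at least one of the four ``middle'' vertices of $\C_{\m}$; by Lemmas~\ref{oddoddmiddleall} and~\ref{lem2}, that pair is already a metric basis, so the triple is a resolving set. When exactly one of $\n,\m$ is even (say $\n$ odd, $\m$ even), I would choose
\[
\W=\bigl\{\vv_{\lfloor \n/2\rfloor},\ \vv_{\lfloor \n/2\rfloor+1},\ \vv_{\jj_{1}},\ \vv_{\jj_{2}}\bigr\},\qquad \vv_{\jj_{1}},\vv_{\jj_{2}}\in \C_{\m}\setminus\bigl\{\vv_{\n},\vv_{\n+\m/2}\bigr\},
\]
and use Lemma~\ref{lemm1} to see that any triple obtained by dropping one vertex of $\W$ contains a pair of the form $(\vv_{\lfloor \n/2\rfloor},\vv_{\jj})$ or $(\vv_{\lfloor \n/2\rfloor+1},\vv_{\jj})$, which is a metric basis; the other mixed parity is symmetric.

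For the remaining case, $\n,\m$ both even, I would use the partition $\V_{1},\dots,\V_{7}$ from~\eqref{eq1} and set
\[
\W=\{\ww_{1},\ww_{3},\ww_{5},\ww_{7}\},\qquad \ww_{s}\in \V_{s}\quad(s=1,3,5,7).
\]
Removing $\ww_{1}$ leaves $\{\ww_{3},\ww_{5},\ww_{7}\}$, which is of the form $\{\vv_{\ii}\in\V_{5},\vv_{\jj}\in\V_{7},\vv_{\kk}\in\V_{1}\cup\V_{3}\}$ and is therefore a metric basis by Lemma~\ref{evenevenallbasis}; the other three removals are handled analogously, each time producing a triple that lies in one of the two templates allowed by Lemma~\ref{evenevenallbasis}.

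Combining the three constructions with Lemma~\ref{lemm3} yields $\beta'(\C_{\n,\m})=4$. The main obstacle is not conceptual but bookkeeping: one must verify, in the mixed‐parity and both‐even cases, that the chosen vertices avoid the forbidden positions identified in Lemmas~\ref{lem1}, \ref{newlem} and~\ref{11}, and that \emph{every} single‐vertex deletion genuinely leaves a triple matching one of the templates of Lemma~\ref{evenevenallbasis} (or a pair matching one of the templates of the earlier lemmas). Once these checks are carried out, there are no further subtleties.
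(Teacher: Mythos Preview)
Your proposal is correct and follows essentially the same approach as the paper: Lemma~\ref{lemm3} for the lower bound, and an explicit four-vertex set built from the metric bases catalogued in Lemmas~\ref{oddoddmiddleall}, \ref{lem2}, \ref{lemm1} and~\ref{evenevenallbasis} for the upper bound. The paper collapses your first two parity cases into a single ``at least one of $\n,\m$ odd'' case and, in the both-even case, phrases the construction as a union $\W=\W_{1}\cup\W_{2}$ of two metric bases sharing two vertices; your explicit check of all four single-vertex deletions against the templates of Lemma~\ref{evenevenallbasis} is in fact a bit more transparent than the paper's union argument, since $\W_{1}\cap\W_{2}\neq\emptyset$ there.
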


\begin{proof}
	We prove this assertion using two cases. 
	
	First, when at least one of $\n,\m$ is odd. Without loss of generality, let $\n$ be odd and $\m$ be even, then by Lemma \ref{lemm1}, $\W_1=\left\{\vv_{\left \lfloor \frac{\n}{2}\right \rfloor },\vv_{\jj}\right\}$, $\vv_{\jj} \in \C_{\m}-\left\{\vv_{\n}, \vv_{\n+\frac{\m}{2}}\right\}$, is a metric basis for $\C_{\n,\m}$. Since $\m > 3$, we can choose $\vv_{\jj}$ to be $\vv_{\n+\left \lfloor \frac{\m}{2}\right \rfloor-1 }$. Again, by Lemma \ref{lemm1}, $\W_2=\left\{\vv_{\left \lfloor \frac{\n}{2}\right \rfloor +1}, \vv_{\kk}\right\}$, $\vv_{\kk} \in \C_{\n}-\left\{\vv_{\n}, \vv_{\n+\frac{\m}{2}}\right\}$ is a metric basis for $\C_{\n,\m}$. Let $\vv_{\kk} =v_{\n+\left \lfloor \frac{\m}{2}\right \rfloor+1 }$. It is evident from the construction of $\W_1$ and $\W_2$ that $\W_1 \cap \W_2= \Phi$. We can also conclude that the set $\W=\W_1 \cup \W_2=\left\{\vv_{\left \lfloor \frac{\n}{2}\right \rfloor }, \vv_{\n+\left \lfloor \frac{\m}{2}\right \rfloor-1 }, \vv_{\left \lfloor \frac{\n}{2}\right \rfloor +1}, \vv_{\n+\left \lfloor \frac{\m}{2}\right \rfloor+1 }\right\}$ is a fault tolerant resolving set. Now, by Lemma \ref{lemm3}, $\beta'(\C_{\n,\m}) \geq 4$. This ensures that $\W$ is the smallest fault tolerant resolving set and hence $\beta'(\C_{\n,\m})=4$.
	
	Next, we consider the case when $\n,\m$ are both even. Let us consider the partitioning of $\V(\C_{\n,\m})$ as given in Equation (\ref{eq1}). Using Lemma \ref{evenevenallbasis}, let us construct a metric basis $\W_1=\{\vv_{\ii},\vv_{\jj},\vv_{\kk}\}$ where $\vv_{\ii} \in \V_1, \vv_{\jj} \in \V_3$ and $\vv_{\kk} \in \V_5$. Using the same lemma, let us construct another metric basis $\W_2=\{\vv_l, \vv_{\kk}, \vv_{\jj}\}$ where $\vv_{\el}\in \V_7, \vv_{\kk} \in \V_5$ and $\vv_{\jj} \in \V_3$. Now $\W=\W_1 \cup \W_2=\{\vv_{\ii},\vv_{\jj},\vv_{\kk},\vv_{\el}\}$ is a fault tolerant resolving set. Again, by Lemma \ref{lemm3}, $\beta'(\C_{\n,\m}) \geq 4$, and hence $\beta'(\C_{\n,\m})=4$.
\end{proof}

\begin{Theorem}\label{FTMDBi2}
	For bicyclic graph of type II, $\beta'(\C_{\n, \rr, \m})=4$.
\end{Theorem}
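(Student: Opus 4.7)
The plan is to mirror the strategy used in the previous theorem: combine Lemma \ref{lemm3} for the lower bound with a carefully chosen fault tolerant resolving set of size $4$ for the upper bound, built as the union of two disjoint metric bases produced by Lemma \ref{type2metricbasis}.

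First, Lemma \ref{lemm3} immediately gives $\beta'(\C_{\n,\rr,\m}) \geq 4$, so it only remains to exhibit a fault tolerant resolving set with exactly four elements. To do so, I will pick two vertices $\vv_{\ii_{1}}, \vv_{\ii_{2}} \in \C_{\n}$, both different from $\vv_{\n}$ and, when $\n$ is even, both different from $\vv_{\frac{\n}{2}}$; and similarly two vertices $\vv_{\jj_{1}}, \vv_{\jj_{2}} \in \C_{\m}$, both different from $\vv_{\n+\rr}$ and, when $\m$ is even, both different from $\vv_{\n+\rr+\frac{\m}{2}}$. I then set
\[
\W = \{\vv_{\ii_{1}}, \vv_{\ii_{2}}, \vv_{\jj_{1}}, \vv_{\jj_{2}}\}.
\]
A quick feasibility check is needed: when $\n = 3$ the only forbidden vertex in $\C_{\n}$ is $\vv_{\n}$, leaving two available; when $\n = 4$ the forbidden vertices are $\vv_{\n}$ and $\vv_{\frac{\n}{2}}$, again leaving exactly two; for larger $\n$ there are more choices. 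The same count applies to $\C_{\m}$, so a valid choice of four distinct vertices is always possible.

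Next, I need to show that for every $\vv \in \W$, the set $\W \setminus \{\vv\}$ is a resolving set of $\C_{\n,\rr,\m}$. Removing any one vertex leaves at least one valid vertex from $\C_{\n}$ (avoiding $\vv_{\n}$ and, if $\n$ is even, $\vv_{\frac{\n}{2}}$) and at least one valid vertex from $\C_{\m}$ (avoiding $\vv_{\n+\rr}$ and, if $\m$ is even, $\vv_{\n+\rr+\frac{\m}{2}}$). By Lemma \ref{type2metricbasis}, these two vertices already form a metric basis of $\C_{\n,\rr,\m}$, hence a resolving set; adding the remaining third vertex keeps the resolving property intact. Thus $\W$ is a fault tolerant resolving set of cardinality $4$, and combined with $\beta'(\C_{\n,\rr,\m}) \geq 4$ this yields $\beta'(\C_{\n,\rr,\m}) = 4$.

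There is essentially no hard obstacle here: the previous lemmas have done all the heavy lifting. The only subtlety is making sure the pool of admissible vertices in each cycle is large enough to pick two distinct ones, which is handled by the minimum values $\n, \m \geq 3$ and the at-most-two exclusions per cycle. Everything else is a direct application of Lemmas \ref{type2metricbasis} and \ref{lemm3}.
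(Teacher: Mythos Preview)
Your proposal is correct and follows essentially the same approach as the paper: use Lemma~\ref{lemm3} for the lower bound, then pick two admissible vertices from each cycle so that any three of the four still contain a pair $\{\vv_{\ii},\vv_{\jj}\}$ forming a metric basis via Lemma~\ref{type2metricbasis}. Your version is actually slightly more careful than the paper's, since you explicitly verify feasibility for the extremal cases $\n,\m\in\{3,4\}$ and spell out why removing any single vertex preserves the resolving property.
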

\begin{proof}
	Let $\vv_{\ii}, \vv_{\kk} \in \C_{\n}$ such that $\vv_{\ii}\neq \vv_{\kk} \neq \vv_{\n}$ for all $\n$ and $\vv_{\ii}\neq \vv_{\kk} \neq \vv_{\frac{\n}{2}}$ for even $\n$. Similarly, let $\vv_{\jj}, \vv_l \in \C_{\m}$ such that $\vv_{\jj}\neq \vv_l  \neq \vv_{\n+\rr}$ for all $\n, \rr$, and $\vv_{\jj}\neq \vv_l  \neq \vv_{\n+\rr+\frac{\m}{2}}$ for all $\n, \rr $ and even $\m$. By Lemma \ref{type2metricbasis}, $\W_1=\{\vv_{\ii},\vv_{\jj}\}$ and $\W_2=\{\vv_{\kk} , \vv_l \}$ are metric basis for $\C_{\n, \rr, \m}$. Now $\W=\W_1 \cup \W_2$ is a fault tolerant resolving set and since by Lemma \ref{lemm3}, $\beta'(\C_{\n, \rr, \m}) \geq 4$, we obtain $\beta'(\C_{\n, \rr, \m})=4$.
\end{proof}

\section{Fault Tolerant Metric Dimension of Cacti Graph without Leaves}

From the above results for bicyclic graphs, we see that $\left|\SSS_{\C_{\ii}}\right|=2$ for any cycle of bicyclic graphs of type-I or II. Before extending this concept to cacti graphs without leaves, we define some new terminology for such graphs.

Let $\G$ be a cactus graph without leaves and let $\A(\G) \subset \V(\G)$ be the set of all vertices with degree $3$ or higher. Let $\A(\C_{\ii})=\V(\C_{\ii}) \cap \A(\G)$, then we say that a cycle $\C_{\ii}$ of $\G$ is an outer cycle, if $\left|\A(\C_{\ii})\right|=1$. Similarly, a cycle is called an inner cycle if $\left|\A(\C_{\ii})\right|\geq 2$. Vertices in $\A(\C_{\ii})$ are termed as common vertices.

It should be noted that if $\vv$ is a common vertex of an inner cycle $\C_{\m}$, then $\vv$ is also a vertex of another cycle or path. If $\vv$ is a vertex of another cycle $\C_{\n}$, where $\left|\A(\C_{\n})\right|=1$, then we say that the cycle $\C_{\n}$ is a corresponding outer cycle of $\vv$. If $\left|\A(\C_{\n})\right|\geq 2$, then we continue along any of the other common vertices of $\C_{\n}$ until we come across an outer cycle, and call it, the outer cycle corresponding to the vertex $\vv$. On the other hand, if $\vv$ is vertex of a path, and since we are working with cacti graphs without leaves, that path must be attached to another cycle. Now, applying the same logic as above, we can easily find an outer cycle corresponding to the vertex $\vv$.

It is worthwhile to mention that a common vertex $\vv$ of an inner cycle may have more than one corresponding outer cycles. These concepts are presented in the following example.

\begin{Example}
In the figure \ref{example}, we see that the cycles $\C_1$, $\C_3$, $\C_5$ and $\C_6$ are external cycles, since they all have only one common vertex. On the other hand, cycles $\C_2$ and $\C_4$ are internal cycles, both having three common vertices. Considering the common vertices of $\C_2$, we see that there are $4$ corresponding outer cycles, $1$ each for vertices on the left and right and $2$ for the remaining common vertex.

\begin{figure}[H]
	\begin{center}
		\tikzset{every picture/.style={line width=0.75pt}} 
		\resizebox{6.5cm}{4cm}{%

\begin{tikzpicture}[x=0.75pt,y=0.75pt,yscale=-1,xscale=1]
	
	\draw  [fill={rgb, 255:red, 0; green, 0; blue, 0 }  ,fill opacity=1 ] (145.41,30) .. controls (145.41,27.15) and (147.61,24.84) .. (150.33,24.84) .. controls (153.05,24.84) and (155.26,27.15) .. (155.26,30) .. controls (155.26,32.85) and (153.05,35.16) .. (150.33,35.16) .. controls (147.61,35.16) and (145.41,32.85) .. (145.41,30) -- cycle ;
	\draw  [fill={rgb, 255:red, 0; green, 0; blue, 0 }  ,fill opacity=1 ] (185.41,140) .. controls (185.41,137.15) and (187.61,134.84) .. (190.33,134.84) .. controls (193.05,134.84) and (195.26,137.15) .. (195.26,140) .. controls (195.26,142.85) and (193.05,145.16) .. (190.33,145.16) .. controls (187.61,145.16) and (185.41,142.85) .. (185.41,140) -- cycle ;
	\draw  [fill={rgb, 255:red, 0; green, 0; blue, 0 }  ,fill opacity=1 ] (245.41,79) .. controls (245.41,76.15) and (247.61,73.84) .. (250.33,73.84) .. controls (253.05,73.84) and (255.26,76.15) .. (255.26,79) .. controls (255.26,81.85) and (253.05,84.16) .. (250.33,84.16) .. controls (247.61,84.16) and (245.41,81.85) .. (245.41,79) -- cycle ;
	\draw  [fill={rgb, 255:red, 0; green, 0; blue, 0 }  ,fill opacity=1 ] (345.41,30) .. controls (345.41,27.15) and (347.61,24.84) .. (350.33,24.84) .. controls (353.05,24.84) and (355.26,27.15) .. (355.26,30) .. controls (355.26,32.85) and (353.05,35.16) .. (350.33,35.16) .. controls (347.61,35.16) and (345.41,32.85) .. (345.41,30) -- cycle ;
	\draw    (150.33,30) -- (200.33,79) ;
	\draw    (350.33,190) -- (400.33,240) ;
	\draw    (350.33,30) -- (400.33,79) ;
	\draw    (300.33,79) -- (350.33,130) ;
	\draw    (400.33,79) -- (450.33,130) ;
	\draw    (450.33,29) -- (400.33,79) ;
	\draw    (350.33,30) -- (300.33,79) ;
	\draw    (400.33,79) -- (350.33,130) ;
	\draw    (150.33,30) -- (100.33,79) ;
	\draw    (350.33,190) -- (300.33,240) ;
	\draw    (350.33,130) -- (350.33,190) ;
	\draw    (200.33,79) -- (250.33,79) ;
	\draw    (250.33,79) -- (300.33,79) ;
	\draw    (300.33,240) -- (400.33,240) ;
	\draw    (110.33,140) -- (190.33,140) ;
	\draw    (190.33,140) -- (200.33,79) ;
	\draw    (110.33,140) -- (100.33,79) ;
	\draw  [fill={rgb, 255:red, 0; green, 0; blue, 0 }  ,fill opacity=1 ] (495.41,30) .. controls (495.41,27.15) and (497.61,24.84) .. (500.33,24.84) .. controls (503.05,24.84) and (505.26,27.15) .. (505.26,30) .. controls (505.26,32.85) and (503.05,35.16) .. (500.33,35.16) .. controls (497.61,35.16) and (495.41,32.85) .. (495.41,30) -- cycle ;
	\draw  [fill={rgb, 255:red, 0; green, 0; blue, 0 }  ,fill opacity=1 ] (496.41,130) .. controls (496.41,127.15) and (498.61,124.84) .. (501.33,124.84) .. controls (504.05,124.84) and (506.26,127.15) .. (506.26,130) .. controls (506.26,132.85) and (504.05,135.16) .. (501.33,135.16) .. controls (498.61,135.16) and (496.41,132.85) .. (496.41,130) -- cycle ;
	\draw  [fill={rgb, 255:red, 0; green, 0; blue, 0 }  ,fill opacity=1 ] (545.41,80) .. controls (545.41,77.15) and (547.61,74.84) .. (550.33,74.84) .. controls (553.05,74.84) and (555.26,77.15) .. (555.26,80) .. controls (555.26,82.85) and (553.05,85.16) .. (550.33,85.16) .. controls (547.61,85.16) and (545.41,82.85) .. (545.41,80) -- cycle ;
	\draw    (500.33,30) -- (550.33,80) ;
	\draw    (501.33,130) -- (550.33,80) ;
	\draw    (450.33,29) -- (500.33,30) ;
	\draw    (451.33,129) -- (501.33,130) ;
	\draw    (450.33,190) -- (500.33,240) ;
	\draw    (450.33,190) -- (400.33,240) ;
	\draw    (400.33,240) -- (500.33,240) ;
	\draw  [fill={rgb, 255:red, 0; green, 0; blue, 0 }  ,fill opacity=1 ] (197.41,239) .. controls (197.41,236.15) and (199.61,233.84) .. (202.33,233.84) .. controls (205.05,233.84) and (207.26,236.15) .. (207.26,239) .. controls (207.26,241.85) and (205.05,244.16) .. (202.33,244.16) .. controls (199.61,244.16) and (197.41,241.85) .. (197.41,239) -- cycle ;
	\draw    (252.33,190) -- (302.33,239) ;
	\draw    (202.33,239) -- (252.33,290) ;
	\draw    (252.33,190) -- (202.33,239) ;
	\draw    (302.33,239) -- (252.33,290) ;
	\draw  [fill={rgb, 255:red, 0; green, 0; blue, 0 }  ,fill opacity=1 ] (105.41,140) .. controls (105.41,137.15) and (107.61,134.84) .. (110.33,134.84) .. controls (113.05,134.84) and (115.26,137.15) .. (115.26,140) .. controls (115.26,142.85) and (113.05,145.16) .. (110.33,145.16) .. controls (107.61,145.16) and (105.41,142.85) .. (105.41,140) -- cycle ;
	\draw  [fill={rgb, 255:red, 0; green, 0; blue, 0 }  ,fill opacity=1 ] (95.41,79) .. controls (95.41,76.15) and (97.61,73.84) .. (100.33,73.84) .. controls (103.05,73.84) and (105.26,76.15) .. (105.26,79) .. controls (105.26,81.85) and (103.05,84.16) .. (100.33,84.16) .. controls (97.61,84.16) and (95.41,81.85) .. (95.41,79) -- cycle ;
	\draw  [fill={rgb, 255:red, 0; green, 0; blue, 0 }  ,fill opacity=1 ] (445.41,29) .. controls (445.41,26.15) and (447.61,23.84) .. (450.33,23.84) .. controls (453.05,23.84) and (455.26,26.15) .. (455.26,29) .. controls (455.26,31.85) and (453.05,34.16) .. (450.33,34.16) .. controls (447.61,34.16) and (445.41,31.85) .. (445.41,29) -- cycle ;
	\draw  [fill={rgb, 255:red, 0; green, 0; blue, 0 }  ,fill opacity=1 ] (445.41,129) .. controls (445.41,126.15) and (447.61,123.84) .. (450.33,123.84) .. controls (453.05,123.84) and (455.26,126.15) .. (455.26,129) .. controls (455.26,131.85) and (453.05,134.16) .. (450.33,134.16) .. controls (447.61,134.16) and (445.41,131.85) .. (445.41,129) -- cycle ;
	\draw  [fill={rgb, 255:red, 0; green, 0; blue, 0 }  ,fill opacity=1 ] (445.41,190) .. controls (445.41,187.15) and (447.61,184.84) .. (450.33,184.84) .. controls (453.05,184.84) and (455.26,187.15) .. (455.26,190) .. controls (455.26,192.85) and (453.05,195.16) .. (450.33,195.16) .. controls (447.61,195.16) and (445.41,192.85) .. (445.41,190) -- cycle ;
	\draw  [fill={rgb, 255:red, 0; green, 0; blue, 0 }  ,fill opacity=1 ] (495.41,240) .. controls (495.41,237.15) and (497.61,234.84) .. (500.33,234.84) .. controls (503.05,234.84) and (505.26,237.15) .. (505.26,240) .. controls (505.26,242.85) and (503.05,245.16) .. (500.33,245.16) .. controls (497.61,245.16) and (495.41,242.85) .. (495.41,240) -- cycle ;
	\draw  [fill={rgb, 255:red, 0; green, 0; blue, 0 }  ,fill opacity=1 ] (247.41,190) .. controls (247.41,187.15) and (249.61,184.84) .. (252.33,184.84) .. controls (255.05,184.84) and (257.26,187.15) .. (257.26,190) .. controls (257.26,192.85) and (255.05,195.16) .. (252.33,195.16) .. controls (249.61,195.16) and (247.41,192.85) .. (247.41,190) -- cycle ;
	\draw  [fill={rgb, 255:red, 0; green, 0; blue, 0 }  ,fill opacity=1 ] (247.41,290) .. controls (247.41,287.15) and (249.61,284.84) .. (252.33,284.84) .. controls (255.05,284.84) and (257.26,287.15) .. (257.26,290) .. controls (257.26,292.85) and (255.05,295.16) .. (252.33,295.16) .. controls (249.61,295.16) and (247.41,292.85) .. (247.41,290) -- cycle ;
	\draw  [fill={rgb, 255:red, 65; green, 117; blue, 5 }  ,fill opacity=1 ] (195.41,79) .. controls (195.41,76.15) and (197.61,73.84) .. (200.33,73.84) .. controls (203.05,73.84) and (205.26,76.15) .. (205.26,79) .. controls (205.26,81.85) and (203.05,84.16) .. (200.33,84.16) .. controls (197.61,84.16) and (195.41,81.85) .. (195.41,79) -- cycle ;
	\draw  [fill={rgb, 255:red, 65; green, 117; blue, 5 }  ,fill opacity=1 ] (295.41,79) .. controls (295.41,76.15) and (297.61,73.84) .. (300.33,73.84) .. controls (303.05,73.84) and (305.26,76.15) .. (305.26,79) .. controls (305.26,81.85) and (303.05,84.16) .. (300.33,84.16) .. controls (297.61,84.16) and (295.41,81.85) .. (295.41,79) -- cycle ;
	\draw  [fill={rgb, 255:red, 65; green, 117; blue, 5 }  ,fill opacity=1 ] (395.41,79) .. controls (395.41,76.15) and (397.61,73.84) .. (400.33,73.84) .. controls (403.05,73.84) and (405.26,76.15) .. (405.26,79) .. controls (405.26,81.85) and (403.05,84.16) .. (400.33,84.16) .. controls (397.61,84.16) and (395.41,81.85) .. (395.41,79) -- cycle ;
	\draw  [fill={rgb, 255:red, 65; green, 117; blue, 5 }  ,fill opacity=1 ] (345.41,190) .. controls (345.41,187.15) and (347.61,184.84) .. (350.33,184.84) .. controls (353.05,184.84) and (355.26,187.15) .. (355.26,190) .. controls (355.26,192.85) and (353.05,195.16) .. (350.33,195.16) .. controls (347.61,195.16) and (345.41,192.85) .. (345.41,190) -- cycle ;
	\draw  [fill={rgb, 255:red, 65; green, 117; blue, 5 }  ,fill opacity=1 ] (395.41,240) .. controls (395.41,237.15) and (397.61,234.84) .. (400.33,234.84) .. controls (403.05,234.84) and (405.26,237.15) .. (405.26,240) .. controls (405.26,242.85) and (403.05,245.16) .. (400.33,245.16) .. controls (397.61,245.16) and (395.41,242.85) .. (395.41,240) -- cycle ;
	\draw  [fill={rgb, 255:red, 65; green, 117; blue, 5 }  ,fill opacity=1 ] (345.41,130) .. controls (345.41,127.15) and (347.61,124.84) .. (350.33,124.84) .. controls (353.05,124.84) and (355.26,127.15) .. (355.26,130) .. controls (355.26,132.85) and (353.05,135.16) .. (350.33,135.16) .. controls (347.61,135.16) and (345.41,132.85) .. (345.41,130) -- cycle ;
	\draw  [fill={rgb, 255:red, 65; green, 117; blue, 5 }  ,fill opacity=1 ] (295.41,239) .. controls (295.41,236.15) and (297.61,233.84) .. (300.33,233.84) .. controls (303.05,233.84) and (305.26,236.15) .. (305.26,239) .. controls (305.26,241.85) and (303.05,244.16) .. (300.33,244.16) .. controls (297.61,244.16) and (295.41,241.85) .. (295.41,239) -- cycle ;
	
	\draw (141,76.4) node [anchor=north west][inner sep=0.75pt]    {$\C_{1}$};
	\draw (345,75.4) node [anchor=north west][inner sep=0.75pt]    {$\C_{2}$};
	\draw (469,75.4) node [anchor=north west][inner sep=0.75pt]    {$\C_{3}$};
	\draw (343,215.4) node [anchor=north west][inner sep=0.75pt]    {$\C_{4}$};
	\draw (244,92.4) node [anchor=north west][inner sep=0.75pt]    {$\PP_{1}$};
	\draw (366,151.4) node [anchor=north west][inner sep=0.75pt]    {$\PP_{2}$};
	\draw (443,215.4) node [anchor=north west][inner sep=0.75pt]    {$\C_{5}$};
	\draw (247,235.4) node [anchor=north west][inner sep=0.75pt]    {$\C_{6}$};

\end{tikzpicture}
}
\caption{Cacti Graph without Leaves}
\label{example}
\end{center}
\end{figure}
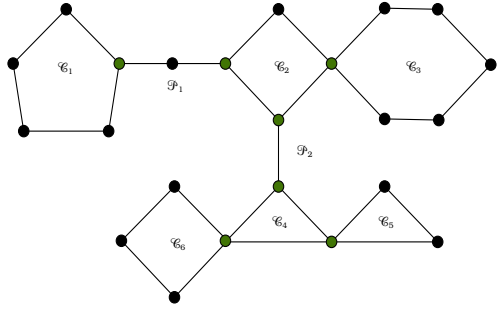

\end{Example}

Now, we are ready to extend the concept of fault tolerant metric generator to cacti graphs without leaves.

\begin{Lemma} \label{outercycle1}
	Let $\G$ be a cactus graph without leaves then $\left|\SSS_{\C_{\m}}\right|\geq 2$ for every outer cycle $\C_{\m}$.
\end{Lemma}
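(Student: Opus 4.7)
The plan is to argue by contradiction. Let $\W$ be a fault tolerant metric basis of $\G$ and assume $|\SSS_{\C_{\m}}| \leq 1$. Denote by $\vv$ the unique common vertex of $\C_{\m}$, guaranteed by the outer-cycle property $|\A(\C_{\m})| = 1$.

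I would first record the key structural consequence: $\vv$ is the only vertex of $\C_{\m}$ with any neighbor in $\V(\G)\setminus\V(\C_{\m})$. Hence, for any $\uu \in \V(\G)\setminus\V(\C_{\m})$ and any $\vv' \in \V(\C_{\m})$, every shortest $\uu$-$\vv'$ walk must pass through $\vv$, so
$$\dd(\uu, \vv') \;=\; \dd(\uu, \vv) + \dd(\vv, \vv').$$
Since $\m \geq 3$, fix the two distinct neighbors $\vv_{\aaa}, \vv_{\bb} \in \V(\C_{\m})$ of $\vv$ on the cycle; both satisfy $\dd(\vv, \vv_{\aaa}) = \dd(\vv, \vv_{\bb}) = 1$.

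If $|\SSS_{\C_{\m}}| = 0$, every $\uu \in \W$ lies outside $\C_{\m}$, and the displayed identity forces $\dd(\uu, \vv_{\aaa}) = \dd(\uu, \vv) + 1 = \dd(\uu, \vv_{\bb})$ for all $\uu \in \W$. Thus $\R(\vv_{\aaa}|\W) = \R(\vv_{\bb}|\W)$, contradicting that $\W$ is a resolving set. If instead $|\SSS_{\C_{\m}}| = 1$, write $\W\cap\V(\C_{\m}) = \{\ww\}$. Fault tolerance requires $\W\setminus\{\ww\}$ to still be a resolving set, but this set has no vertex of $\C_{\m}$, so applying the same argument to $\W\setminus\{\ww\}$ yields $\R(\vv_{\aaa}|\W\setminus\{\ww\}) = \R(\vv_{\bb}|\W\setminus\{\ww\})$, again a contradiction. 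In either case we conclude $|\SSS_{\C_{\m}}| \geq 2$.

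The entire argument hinges on the outer-cycle condition $|\A(\C_{\m})|=1$: this is what channels every path from outside $\C_{\m}$ into $\C_{\m}$ through the single vertex $\vv$, which in turn makes the two neighbors of $\vv$ indistinguishable by any vertex outside the cycle. Once that is set up, the two-neighbors-of-$\vv$ trick used repeatedly in Section 3 (cf. the proof of Lemma \ref{lem1}) carries over verbatim, and I do not anticipate any substantive obstacle. The only technical point worth noting is that in the second case, one of $\vv_{\aaa}, \vv_{\bb}$ could coincide with $\ww$; this is harmless because the representation of each is still computed relative to $\W\setminus\{\ww\}$, which never contains them.
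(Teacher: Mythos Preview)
Your proof is correct and follows essentially the same contradiction strategy as the paper, splitting into the cases $|\SSS_{\C_{\m}}|=0$ and $|\SSS_{\C_{\m}}|=1$. Your version is slightly more explicit---naming the unresolved pair (the two cycle-neighbors of the common vertex $\vv$) and justifying the distance identity via the cut-vertex property---where the paper simply asserts that the relevant vertices of $\C_{\m}$ are not resolved.
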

\begin{proof}
	Let $\G$ be a cactus graph and let us suppose on the contrary that there is an outer cycle $\C_{\m}$ contributing one or no vertices to fault tolerant metric basis $\W$.	
	If $\C_{\m}$ contributes only one vertex, say $\mathcal{c}$, then $\W-\{\mathcal{c}\}$ does not resolve vertices of $\C_{\m}$ and hence $\W$ is not a fault tolerant resolving set.	
	Similarly if $\C_{\m}$ does not contribute any vertex to $\W$, again, vertices of $\C_{\m}$ are not resolved by $\W$ in the first place. This ensures that every outer cycle must contribute two or more vertices to the fault tolerant metric basis.
\end{proof}

In fact,we can prove a much stronger result for $\left|\SSS_{\C_{\m}}\right|$ for outer cycles $\C_{\m}$.

\begin{Lemma}\label{outercycle2}
	Let $\G$ be a cactus graph without leaves then $\left|\SSS_{\C_{\m}}\right| = 2$ for every outer cycle $\C_{\m}$.
\end{Lemma}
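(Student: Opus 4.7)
The plan is to complement Lemma \ref{outercycle1}, which already gives $|\SSS_{\C_{\m}}| \geq 2$, by exhibiting a fault tolerant resolving set of $\G$ that uses exactly two vertices from each outer cycle. Since a fault tolerant metric basis is a minimum fault tolerant resolving set, the existence of such a set combined with the lower bound forces $|\SSS_{\C_{\m}}| = 2$.

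First I would record the key structural fact about outer cycles. Let $\C_{\m}$ be an outer cycle and let $\mathcal{c}$ be its unique common vertex. For every $x \in \V(\C_{\m}) \setminus \{\mathcal{c}\}$ and every $y \in \V(\G) \setminus \V(\C_{\m})$, every shortest $xy$-path is forced to pass through $\mathcal{c}$, so $\dd_{\G}(x,y) = \dd_{\G}(x,\mathcal{c}) + \dd_{\G}(\mathcal{c},y)$. In particular, any two vertices of $\C_{\m}$ that are equidistant from $\mathcal{c}$ are equidistant from every vertex of $\G$ lying outside $\C_{\m}$. Thus the burden of distinguishing such pairs falls entirely on vertices of the fault tolerant resolving set that lie inside $\C_{\m}$, and in particular the common vertex $\mathcal{c}$ itself is useless for this task (the analog of Lemma \ref{lem1} in the cactus setting).

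Next, I would pick the two vertices in $\SSS_{\C_{\m}}$ exactly as in the bicyclic analysis. If $\m$ is odd, take the two vertices of $\C_{\m}$ at distance $\lfloor \m/2 \rfloor$ from $\mathcal{c}$; if $\m$ is even, take two vertices flanking (but not equal to) the antipode of $\mathcal{c}$, so as to sidestep the obstruction that Lemma \ref{newlem} identifies for exact antipodes when the cycle length is even. Viewing $\C_{\m}$ locally as a cycle attached at $\mathcal{c}$, the arguments of Lemmas \ref{oddoddmiddleall}, \ref{lemm1}, and \ref{evenevenallbasis} carry over verbatim and show that each of these two vertices individually, together with any resolver sitting on the other side of $\mathcal{c}$, already distinguishes every pair in $\V(\C_{\m})$ that is equidistant from $\mathcal{c}$. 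Hence removing either of the two chosen vertices still leaves the other (assisted by the remaining basis vertices, which resolve external pairs via the additive distance identity above) as a resolving set.

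The main obstacle is the global bookkeeping: one has to check that forming the union of these two-vertex contributions over all outer cycles, together with whatever is eventually supplied by the inner cycles, is still a fault tolerant resolving set of $\G$. The delicate point is that deleting a vertex from one cycle might expose an unresolved pair sitting in a different cycle. The additive identity $\dd(x,y) = \dd(x,\mathcal{c}) + \dd(\mathcal{c},y)$ for an outer cycle largely decouples each outer cycle from the rest of the graph, so its only internal hazard is the equidistant-from-$\mathcal{c}$ pairs, which are handled by each of the two chosen vertices independently; pairs with one end outside are automatically distinguished by any vertex outside $\C_{\m}$ whose distance to $\mathcal{c}$ differs. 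The remaining verification reduces to the internal resolving argument imported from the bicyclic lemmas, so the construction goes through and gives the desired upper bound $|\SSS_{\C_{\m}}| \leq 2$.
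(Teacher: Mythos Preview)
Your proposal is correct and follows the same overall architecture as the paper: invoke Lemma~\ref{outercycle1} for the lower bound, then exhibit a fault tolerant resolving set that draws exactly two vertices from each outer cycle for the upper bound. The additive identity $\dd(x,y)=\dd(x,\mathcal{c})+\dd(\mathcal{c},y)$ that you isolate is also the engine behind the paper's argument.

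The genuine difference is in \emph{which} two vertices are selected. You pick vertices near the antipode of the common vertex $\mathcal{c}$ (the two farthest vertices when $\m$ is odd, the two neighbours of the antipode when $\m$ is even) and then import Lemmas~\ref{oddoddmiddleall}, \ref{lemm1}, \ref{evenevenallbasis} to certify that each of them, paired with an external resolver, handles all pairs equidistant from $\mathcal{c}$. The paper instead takes the two vertices $\vv_{\ii},\vv_{\jj}$ \emph{adjacent} to $\mathcal{c}$. This buys a much shorter local argument: after replacing $\mathcal{c}$ by the (at least two) contributors from a companion outer cycle $\C_{\n}$, one is essentially checking that $\{\vv_{\ii},\vv_{\m},\vv_{\jj}\}$ fault-tolerantly resolves $\C_{\m}$ subject to never deleting $\vv_{\m}$, and that follows from the elementary fact that any pair of adjacent vertices resolves a cycle. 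So the paper sidesteps the bicyclic machinery entirely for this lemma. Your route works, but it is heavier than necessary; the paper's choice of the neighbours of $\mathcal{c}$ makes the ``global bookkeeping'' you flag as the main obstacle almost disappear.
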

\begin{proof}
	Let $\G$ be a cactus graph without leaves having $\W$ as fault tolerant metric basis, and let $\C_{\m}$ be an outer cycle of $\G$. Let us denote the common vertex of $\C_{\m}$ by $\vv_{\m}$, then there is at least one outer cycle, say $\C_{\n}$, corresponding to this common vertex. By Lemma \ref{outercycle1}, $\C_{\n}$ contributes at least two vertices to the fault tolerant metric basis set $\W$ of $\G$. Let $\SSS_{\C_{\n}}$ be the set of vertices contributed by $\C_{\n}$ to $\W$.
	
	Next, we move to the cycle $\C_{\m}$ and consider the vertices $\vv_{\ii}$ and $\vv_{\jj}$ adjacent to $\vv_{\m}$. Obviously these vertices lie to the either side of $\vv_{\m}$. These ideas are shown in figure \ref{outercycleadds2}.

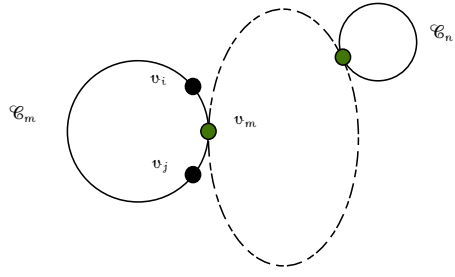
\begin{figure}[H]
	\begin{center}
		\tikzset{every picture/.style={line width=0.75pt}} 
		\resizebox{6cm}{3.5cm}{%
			
			\begin{tikzpicture}[x=0.75pt,y=0.75pt,yscale=-1,xscale=1]
		
		\draw   (160.08,135.67) .. controls (160.08,110.45) and (180.52,90) .. (205.74,90) .. controls (230.97,90) and (251.41,110.45) .. (251.41,135.67) .. controls (251.41,160.89) and (230.97,181.33) .. (205.74,181.33) .. controls (180.52,181.33) and (160.08,160.89) .. (160.08,135.67) -- cycle ;
		\draw  [dash pattern={on 3.75pt off 3pt on 7.5pt off 1.5pt}] (251.33,139.67) .. controls (251.33,93.83) and (273.05,56.67) .. (299.83,56.67) .. controls (326.62,56.67) and (348.33,93.83) .. (348.33,139.67) .. controls (348.33,185.51) and (326.62,222.67) .. (299.83,222.67) .. controls (273.05,222.67) and (251.33,185.51) .. (251.33,139.67) -- cycle ;
		\draw   (336,78) .. controls (336,64.19) and (347.19,53) .. (361,53) .. controls (374.81,53) and (386,64.19) .. (386,78) .. controls (386,91.81) and (374.81,103) .. (361,103) .. controls (347.19,103) and (336,91.81) .. (336,78) -- cycle ;
		\draw  [fill={rgb, 255:red, 65; green, 117; blue, 5 }  ,fill opacity=1 ] (246.41,135.67) .. controls (246.41,132.82) and (248.61,130.51) .. (251.33,130.51) .. controls (254.05,130.51) and (256.26,132.82) .. (256.26,135.67) .. controls (256.26,138.52) and (254.05,140.83) .. (251.33,140.83) .. controls (248.61,140.83) and (246.41,138.52) .. (246.41,135.67) -- cycle ;
		\draw  [fill={rgb, 255:red, 65; green, 117; blue, 5 }  ,fill opacity=1 ] (333.41,87.67) .. controls (333.41,84.82) and (335.61,82.51) .. (338.33,82.51) .. controls (341.05,82.51) and (343.26,84.82) .. (343.26,87.67) .. controls (343.26,90.52) and (341.05,92.83) .. (338.33,92.83) .. controls (335.61,92.83) and (333.41,90.52) .. (333.41,87.67) -- cycle ;
		\draw  [fill={rgb, 255:red, 0; green, 0; blue, 0 }  ,fill opacity=1 ] (236.41,106.67) .. controls (236.41,103.82) and (238.61,101.51) .. (241.33,101.51) .. controls (244.05,101.51) and (246.26,103.82) .. (246.26,106.67) .. controls (246.26,109.52) and (244.05,111.83) .. (241.33,111.83) .. controls (238.61,111.83) and (236.41,109.52) .. (236.41,106.67) -- cycle ;
		\draw  [fill={rgb, 255:red, 0; green, 0; blue, 0 }  ,fill opacity=1 ] (236.41,163.67) .. controls (236.41,160.82) and (238.61,158.51) .. (241.33,158.51) .. controls (244.05,158.51) and (246.26,160.82) .. (246.26,163.67) .. controls (246.26,166.52) and (244.05,168.83) .. (241.33,168.83) .. controls (238.61,168.83) and (236.41,166.52) .. (236.41,163.67) -- cycle ;
		
		\draw (121,117.4) node [anchor=north west][inner sep=0.75pt]    {$\C_{\m}$};
		\draw (393,65.4) node [anchor=north west][inner sep=0.75pt]    {$\C_{\n}$};
		\draw (266,124.4) node [anchor=north west][inner sep=0.75pt]  [font=\small]  {$\vv_{\m}$};
		\draw (212,97.4) node [anchor=north west][inner sep=0.75pt]  [font=\small]  {$\vv_{\ii}$};
		\draw (213,152.4) node [anchor=north west][inner sep=0.75pt]  [font=\small]  {$\vv_{\jj}$};

	\end{tikzpicture}
	
}
\caption{Fault Tolerant Metric Generator of an Outer Cycle $\C_{\m}$}
\label{outercycleadds2}
\end{center}
\end{figure}

We now claim that all vertices of the cycle $\C_{\m}$ are fault tolerantly resolved by the set $\W'=\SSS_{\C_{\n}} \cup \{\vv_{\ii},\vv_{\jj}\}$. Before proving this claim, we observe that the set $\W'-\{\vv_{\kk}\}, \vv_{\kk} \in \W'$, always contains at least one vertex of $\C_{\n}$, since $\left|\SSS_{\C_{\n}}\right| \geq 2$. This ensures that when we consider the distance vector of a vertex $\vv_{\aaa} \in \C_{\m}$ with respect to $\W'-\{\vv_{\kk}\}, \vv_{\kk} \in \W'$, there is at least one distance entry corresponding to a vertex of $\C_{\n}$. The shortest path for this distance always passes through the common vertex $\vv_{\m}$. 

We now consider the set $\W''=\{\vv_{\ii}, \vv_{\m},\vv_{\jj}\}$, and check it for fault tolerant distance producer of $\C_{\m}$. Note that, we won't delete $\vv_{\m}$, to check for fault tolerant metric generator, due to the argument provided above.

Now let us consider two vertices equidistant from $\vv_{\m}$, then obviously, they are at equal distances, both from $\vv_{\ii}$ and $\vv_{\jj}$. Since we can not delete $\vv_{\m}$, we proceed to deleting any one of $\vv_{\ii},\vv_{\jj}$ to check for fault tolerance. Deleting any one of these leaves us with two adjacent vertices of $\C_{\m}$, and since all vertices of a cycle are always resolved by a pair of adjacent vertices, $\W''$ is a fault tolerant distance producer in this case.

In the other two remaining cases, where vertices are equidistant from $\vv_{\ii}$ or $\vv_{\jj}$, a similar argument can easily be applied. Hence, $\W''$ containing vertex $\vv_{\m}$ is always a fault tolerant distance producer for $\C_{\m}$. 

Since all shortest paths from $\C_{\n}$ to the vertices of $\C_{\m}$ pass through $\vv_{\m}$, after replacing the vertex $\vv_{\m}$ with vertices of $\SSS_{\C_{\n}}$, the set $\W'=\SSS_{\C_{\n}} \cup \{\vv_{\ii},\vv_{\jj}\}$ is still a fault tolerant distance producer for $\C_{\m}$ and hence, $\SSS_{\C_{\m}}=\{\vv_{\ii},\vv_{\jj}\}$ and $\left|\SSS_{\C_{\m}}\right| \leq 2$.

Using the result from Lemma \ref{outercycle1}, we now conclude that $\left|\SSS_{\C_{\m}}\right| =2 $.

\end{proof}

Next, we turn our attention to inner cycles. Depending on the cycle being odd or even, we have different possibilities for $\left|\SSS_{\C_{\m}}\right|$.

\begin{Lemma} \label{oddcycleincactiresult}
	Let $\G$ be a cactus graph without leaves, then $\left|\SSS_{\C_{\m}}\right|=0$ for every inner cycle $\C_{\m}$ of odd length.
\end{Lemma}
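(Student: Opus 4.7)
The plan is to show that for an inner cycle $\C_{\m}$ of odd length, the basis vertices already supplied by the outer cycles that correspond to the common vertices of $\C_{\m}$ are enough to fault-tolerantly resolve every vertex of $\C_{\m}$, so one can choose a fault tolerant metric basis of $\G$ that avoids $\V(\C_{\m})$ entirely.

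Let $\vv_{\p_1},\ldots,\vv_{\p_{\kk}}$ be the common vertices of $\C_{\m}$, $\kk\geq 2$, and for each $\ii$ fix an outer cycle $\C_{\n_{\ii}}$ corresponding to $\vv_{\p_{\ii}}$ (which exists since $\G$ is leafless). By Lemma \ref{outercycle2}, each $\C_{\n_{\ii}}$ contributes exactly two vertices to the fault tolerant metric basis; let $\W'$ be the union of these contributions, so $|\W'|\geq 4$. Because $\G$ is a cactus, each $\vv_{\p_{\ii}}$ is a cut vertex separating the block $\C_{\m}$ from the block containing $\C_{\n_{\ii}}$, hence for every $\vv \in \C_{\m}$ and every $\uu \in \SSS_{\C_{\n_{\ii}}}$ the shortest $\vv\uu$-path is forced through $\vv_{\p_{\ii}}$, giving
\[
d(\vv,\uu)=d(\vv,\vv_{\p_{\ii}})+d(\vv_{\p_{\ii}},\uu),
\]
with the second summand independent of $\vv$. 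Consequently $\uu$ resolves a pair $\vv_{\aaa},\vv_{\bb}\in\C_{\m}$ if and only if $\vv_{\p_{\ii}}$ does.

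The next ingredient is the standard fact that any two distinct vertices of an odd cycle form a resolving set of that cycle: if $\vv_{\aaa}\neq\vv_{\bb}$ on $\C_{\m}$ were equidistant from both $\vv_{\p_{\ii}}$ and $\vv_{\p_{\jj}}$, the symmetry about each forces $\aaa+\bb\equiv 2\p_{\ii}\equiv 2\p_{\jj}\pmod{\m}$, and $\gcd(2,\m)=1$ for odd $\m$ then gives $\p_{\ii}=\p_{\jj}$, a contradiction. For fault tolerance, remove any $\ww\in\W'$; it lies in some $\SSS_{\C_{\n_{\ii_0}}}$, but that set still has a second element in $\W'-\{\ww\}$, so $\vv_{\p_{\ii_0}}$ retains a witness, and every other common vertex retains both of its witnesses. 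Thus $\W'-\{\ww\}$ still encodes distances to at least $\kk\geq 2$ distinct common vertices on $\C_{\m}$, which by the odd-cycle resolving property resolve every pair in $\C_{\m}$. Hence $\W'$ is a fault tolerant distance producer for $\C_{\m}$ without using any vertex of $\C_{\m}$, so $|\SSS_{\C_{\m}}|=0$.

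The main obstacle is the clean justification of the shortest-path decomposition $d(\vv,\uu)=d(\vv,\vv_{\p_{\ii}})+d(\vv_{\p_{\ii}},\uu)$ when $\C_{\n_{\ii}}$ is not directly attached to $\vv_{\p_{\ii}}$ but is reached through a chain of intervening blocks; this needs the block-tree structure of a cactus (every cut vertex separates blocks) rather than the one-edge picture used for bicyclic graphs. A secondary subtlety is to argue that including any vertex of $\C_{\m}$ in a fault tolerant basis is superfluous: since $\W'$ already fault-tolerantly resolves $\C_{\m}$ and the remainder of the basis handles the rest of $\G$, any vertex of $\C_{\m}$ appearing in a fault tolerant basis can be removed without harming resolution, so minimality forces $\SSS_{\C_{\m}}=\emptyset$.
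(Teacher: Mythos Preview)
Your proposal is correct and follows the same overall strategy as the paper: pick two common vertices of $\C_{\m}$, pass to their corresponding outer cycles (each contributing two basis vertices by Lemma~\ref{outercycle2}), use the cut-vertex structure so that distances from those outer-cycle vertices to $\C_{\m}$ factor through the common vertices, and then exploit the oddness of $\m$ to show that any two distinct common vertices already resolve every pair on $\C_{\m}$.

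The one substantive difference is in how that last resolving step is justified. The paper fixes the two arcs $\PP_1,\PP_2$ between $\vv_{\ii}$ and $\vv_{\jj}$, notes $|\PP_1|\neq|\PP_2|$ since $\m$ is odd, and runs a case analysis on which arc contains $\vv_{\aaa}$ and whether the shortest $\vv_{\jj}\vv_{\bb}$-path passes through $\vv_{\ii}$. Your argument is cleaner: if $\vv_{\aaa},\vv_{\bb}$ were equidistant from both $\vv_{\p_{\ii}}$ and $\vv_{\p_{\jj}}$ then $\aaa+\bb\equiv 2\p_{\ii}\equiv 2\p_{\jj}\pmod{\m}$, and invertibility of $2$ modulo the odd $\m$ forces $\p_{\ii}=\p_{\jj}$. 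This buys you the general statement ``any two distinct vertices resolve an odd cycle'' in one line and avoids the path-tracking entirely. You are also more explicit than the paper about the block-tree decomposition needed when the corresponding outer cycle is several blocks away, and about the fault-tolerance check (removing one witness from $\W'$ still leaves at least one witness for every common vertex); the paper leaves both of these implicit.
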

\begin{proof}
	Let $\G$ and $\C_{\m}$ be as given above. Since $\C_{\m}$ is an inner cycle, we have, $\left|\V(\C_{\m}) \cap \A(\G)\right|\geq 2$. Without loss of generality, we choose any two of these common vertices. Let $\vv_{\ii}, \vv_{\jj} \in \C_{\m}$ be the common vertices as shown in figure \ref{oddcycleincacti}.

	\begin{figure}[H]
		\begin{center}
			\tikzset{every picture/.style={line width=0.75pt}} 
			\resizebox{10cm}{3.7cm}{%

\begin{tikzpicture}[x=0.75pt,y=0.75pt,yscale=-1,xscale=1]
	
	\draw  [fill={rgb, 255:red, 0; green, 0; blue, 0 }  ,fill opacity=1 ] (336.98,73.29) .. controls (336.98,70.44) and (339.18,68.13) .. (341.9,68.13) .. controls (344.62,68.13) and (346.82,70.44) .. (346.82,73.29) .. controls (346.82,76.14) and (344.62,78.45) .. (341.9,78.45) .. controls (339.18,78.45) and (336.98,76.14) .. (336.98,73.29) -- cycle ;
	\draw   (224.46,120.29) .. controls (224.46,82.51) and (254.78,51.89) .. (292.18,51.89) .. controls (329.58,51.89) and (359.9,82.51) .. (359.9,120.29) .. controls (359.9,158.06) and (329.58,188.69) .. (292.18,188.69) .. controls (254.78,188.69) and (224.46,158.06) .. (224.46,120.29) -- cycle ;
	\draw  [fill={rgb, 255:red, 0; green, 0; blue, 0 }  ,fill opacity=1 ] (220.59,107.09) .. controls (220.59,104.24) and (222.79,101.93) .. (225.51,101.93) .. controls (228.23,101.93) and (230.43,104.24) .. (230.43,107.09) .. controls (230.43,109.94) and (228.23,112.25) .. (225.51,112.25) .. controls (222.79,112.25) and (220.59,109.94) .. (220.59,107.09) -- cycle ;
	\draw  [dash pattern={on 4.5pt off 4.5pt}] (91.95,53.27) .. controls (100.61,34.72) and (137.53,31.74) .. (174.41,46.61) .. controls (211.29,61.47) and (234.17,88.55) .. (225.51,107.09) .. controls (216.84,125.63) and (179.92,128.61) .. (143.04,113.75) .. controls (106.16,98.88) and (83.28,71.81) .. (91.95,53.27) -- cycle ;
	\draw    (341.9,73.29) .. controls (381.9,43.29) and (401.9,103.29) .. (441.9,73.29) ;
	\draw  [dash pattern={on 4.5pt off 4.5pt}] (441.9,73.29) .. controls (438.4,55.96) and (467.45,35.48) .. (506.79,27.53) .. controls (546.13,19.59) and (580.86,27.19) .. (584.35,44.51) .. controls (587.85,61.83) and (558.8,82.32) .. (519.46,90.27) .. controls (480.12,98.21) and (445.4,90.61) .. (441.9,73.29) -- cycle ;
	\draw  [fill={rgb, 255:red, 0; green, 0; blue, 0 }  ,fill opacity=1 ] (436.98,73.29) .. controls (436.98,70.44) and (439.18,68.13) .. (441.9,68.13) .. controls (444.62,68.13) and (446.82,70.44) .. (446.82,73.29) .. controls (446.82,76.14) and (444.62,78.45) .. (441.9,78.45) .. controls (439.18,78.45) and (436.98,76.14) .. (436.98,73.29) -- cycle ;
	\draw  [fill={rgb, 255:red, 0; green, 0; blue, 0 }  ,fill opacity=1 ] (515.54,90.42) .. controls (515.54,87.58) and (517.74,85.27) .. (520.46,85.27) .. controls (523.18,85.27) and (525.38,87.58) .. (525.38,90.42) .. controls (525.38,93.27) and (523.18,95.58) .. (520.46,95.58) .. controls (517.74,95.58) and (515.54,93.27) .. (515.54,90.42) -- cycle ;
	\draw  [dash pattern={on 4.5pt off 4.5pt}] (551.07,199.49) .. controls (533.52,204.57) and (512.21,184.24) .. (503.49,154.07) .. controls (494.76,123.91) and (501.91,95.35) .. (519.46,90.27) .. controls (537.01,85.19) and (558.32,105.52) .. (567.05,135.68) .. controls (575.78,165.84) and (568.62,194.41) .. (551.07,199.49) -- cycle ;
	\draw  [fill={rgb, 255:red, 0; green, 0; blue, 0 }  ,fill opacity=1 ] (497.56,151.07) .. controls (497.56,148.23) and (499.77,145.92) .. (502.49,145.92) .. controls (505.2,145.92) and (507.41,148.23) .. (507.41,151.07) .. controls (507.41,153.92) and (505.2,156.23) .. (502.49,156.23) .. controls (499.77,156.23) and (497.56,153.92) .. (497.56,151.07) -- cycle ;
	\draw   (419.33,169.07) .. controls (419.33,145.28) and (438.62,126) .. (462.41,126) .. controls (486.2,126) and (505.49,145.28) .. (505.49,169.07) .. controls (505.49,192.86) and (486.2,212.15) .. (462.41,212.15) .. controls (438.62,212.15) and (419.33,192.86) .. (419.33,169.07) -- cycle ;
	\draw  [fill={rgb, 255:red, 0; green, 0; blue, 0 }  ,fill opacity=1 ] (170.59,122.09) .. controls (170.59,119.24) and (172.79,116.93) .. (175.51,116.93) .. controls (178.23,116.93) and (180.43,119.24) .. (180.43,122.09) .. controls (180.43,124.94) and (178.23,127.25) .. (175.51,127.25) .. controls (172.79,127.25) and (170.59,124.94) .. (170.59,122.09) -- cycle ;
	\draw   (141.59,156) .. controls (141.59,137.28) and (156.78,122.09) .. (175.51,122.09) .. controls (194.24,122.09) and (209.42,137.28) .. (209.42,156) .. controls (209.42,174.73) and (194.24,189.92) .. (175.51,189.92) .. controls (156.78,189.92) and (141.59,174.73) .. (141.59,156) -- cycle ;
	\draw  [fill={rgb, 255:red, 65; green, 117; blue, 5 }  ,fill opacity=1 ] (252.98,62.29) .. controls (252.98,59.44) and (255.18,57.13) .. (257.9,57.13) .. controls (260.62,57.13) and (262.82,59.44) .. (262.82,62.29) .. controls (262.82,65.14) and (260.62,67.45) .. (257.9,67.45) .. controls (255.18,67.45) and (252.98,65.14) .. (252.98,62.29) -- cycle ;
	\draw  [fill={rgb, 255:red, 65; green, 117; blue, 5 }  ,fill opacity=1 ] (232.98,160.29) .. controls (232.98,157.44) and (235.18,155.13) .. (237.9,155.13) .. controls (240.62,155.13) and (242.82,157.44) .. (242.82,160.29) .. controls (242.82,163.14) and (240.62,165.45) .. (237.9,165.45) .. controls (235.18,165.45) and (232.98,163.14) .. (232.98,160.29) -- cycle ;
	
	\draw (200,87.4) node [anchor=north west][inner sep=0.75pt]    {$\vv_{\ii}$};
	\draw (325,81.4) node [anchor=north west][inner sep=0.75pt]    {$\vv_{\jj}$};
	\draw (309,125.4) node [anchor=north west][inner sep=0.75pt]    {$\C_{\m}$};
	\draw (164,160.4) node [anchor=north west][inner sep=0.75pt]    {$\C_{\p}$};
	\draw (454,180.4) node [anchor=north west][inner sep=0.75pt]    {$\C_{\q}$};
	\draw (252,78.4) node [anchor=north west][inner sep=0.75pt]    {$\vv_{\aaa}$};
	\draw (241,137.4) node [anchor=north west][inner sep=0.75pt]    {$\vv_{\bb}$};
	\draw (281,28.4) node [anchor=north west][inner sep=0.75pt]    {$\PP_{1}$};
	\draw (288,199.4) node [anchor=north west][inner sep=0.75pt]    {$\PP_{2}$};

\end{tikzpicture}

}
\caption{Odd Inner Cycle in Cacti Graph Without Leaves.}
\label{oddcycleincacti}
\end{center}
\end{figure}
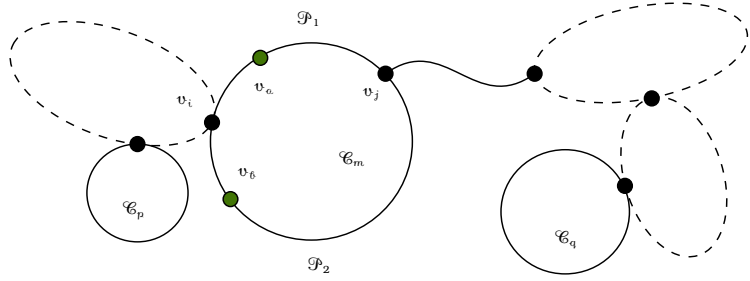
	
Let $\C_{\p}$ and $\C_{\q}$ be the corresponding outer cycles of $\vv_{\ii}$ and $\vv_{\jj}$, respectively. Since $\C_{\p}$ and $\C_{\q}$ are both outer cycles, we have $\left|\SSS_{\C_{\p}}\right|=\left|\SSS_{\C_{\q}}\right|=2$. Let $\vv_{\aaa},\vv_{\bb} \in \C_{\m}$ be such that they are not resolved by vertices of $\C_{\p}$, then they are equidistant from $\vv_{\ii}$. Since $\C_{\m}$ is odd, the two paths $\PP_1, \PP_2$ between $\vv_{\ii},\vv_{\jj}$ have distinct length, one being even and the other being odd. Let us arbitrarily assume that $\PP_1$ is odd and let $\vv_{\aaa}$ lies on this path. From here, we conclude that $\dd(\vv_{\jj},\vv_{\aaa})=\left|\PP_1\right|-\dd(\vv_{\ii},\vv_{\aaa})$.

Now, considering the distance of $\vv_{\jj}$ and $\vv_{\bb}$, we see that there are two possibilities. Either, $\vv_{\ii}$ lies on this shortest path or it does not. If $\vv_{\ii}$ lies on this path, then $\dd(\vv_{\jj},\vv_{\bb})=\left|\PP_1\right|+\dd(\vv_{\ii},\vv_{\bb})$, and since $\dd(\vv_{\ii},\vv_{\aaa})= \dd(\vv_{\ii},\vv_{\bb})$, we observe that $\vv_{\jj}$ resolves $\vv_{\aaa}$ and $\vv_{\bb}$. On the other hand, if the shortest $\vv_{\jj}\vv_{\bb}$ path does not pass through $\vv_{\ii}$, we see that $\dd(\vv_{\jj},\vv_{\bb})=\left|\PP_2\right|-\dd(\vv_{\ii},\vv_{\bb})$. Since $\dd(\vv_{\ii},\vv_{\aaa})=\dd(\vv_{\ii},\vv_{\bb})$, we get, $\dd(\vv_{\jj},\vv_{\bb})=\left|\PP_2\right|-\dd(\vv_{\ii},\vv_{\aaa})$. Once again, we conclude that $\vv_{\jj}$ resolves both $\vv_{\aaa}$ and $\vv_{\bb}$, owing to the fact that $\left|\PP_1\right|\neq \left|\PP_2\right|$.
	
Similarly, it is easy to prove that the vertices of $\C_{\m}$ not resolved by $\C_{\q}$, are resolved by $\C_{\p}$. 

Hence, all vertices of $\C_{\m}$ are resolved by $\C_{\p}$ and $\C_{\q}$. From here, we observe that all vertices of $\C_{\m}$ are FT-resolved by vertices outside $\C_{\m}$ and hence $\left|\SSS_{\C_{\m}}\right|=0$.
\end{proof}

\begin{Lemma} \label{evenacm}
	Let $\G$ be a leafless cactus graph. Let $\C_{\m}$ be an even inner cycle of $\G$ with $\left|\A({\C_{\m}})\right|= 2$, then $\left|\SSS_{\C_{\m}}\right|=0$ if the common vertices are not antipodal.
\end{Lemma}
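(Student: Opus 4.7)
The plan is to adapt the argument of Lemma \ref{oddcycleincactiresult} to the even case, with the non-antipodal hypothesis playing the role that parity played in the odd case. Let $\vv_{\ii},\vv_{\jj}$ be the two common vertices of $\C_{\m}$, and let $\C_{\p},\C_{\q}$ be outer cycles corresponding to $\vv_{\ii}$ and $\vv_{\jj}$ respectively. By Lemma \ref{outercycle2}, $\left|\SSS_{\C_{\p}}\right|=\left|\SSS_{\C_{\q}}\right|=2$, so any fault tolerant metric basis $\W$ of $\G$ already contains two vertices "viewing" $\C_{\m}$ through $\vv_{\ii}$ and two viewing it through $\vv_{\jj}$. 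Denote the two internally disjoint $\vv_{\ii}\vv_{\jj}$-paths in $\C_{\m}$ by $\PP_1,\PP_2$ with $|\PP_1|=d$ and $|\PP_2|=\m-d$; the non-antipodal hypothesis ensures $d\neq \m/2$, hence $|\PP_1|\neq|\PP_2|$.

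The heart of the proof is to establish that the pair $\{\vv_{\ii},\vv_{\jj}\}$ already resolves every pair of distinct vertices of $\C_{\m}$. Label $\C_{\m}$ cyclically as $\uu_0,\uu_1,\ldots,\uu_{\m-1}$ with $\uu_0=\vv_{\ii}$ and $\vv_{\jj}=\uu_d$, $1\leq d<\m/2$. Any two distinct vertices equidistant from $\vv_{\ii}$ are of the form $\uu_k,\uu_{\m-k}$ for some $1\leq k<\m/2$. Using the standard cycle metric $\dd(\uu_a,\uu_b)=\min(|a-b|,\m-|a-b|)$, I would compute $\dd(\vv_{\jj},\uu_k)=|d-k|$ and $\dd(\vv_{\jj},\uu_{\m-k})=\min(d+k,\m-d-k)$, and check by a short case split that these coincide only when $d=\m/2$ or $k=\m/2$, both excluded. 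Hence $\vv_{\jj}$ separates every pair equidistant from $\vv_{\ii}$, and by symmetry $\vv_{\ii}$ separates every pair equidistant from $\vv_{\jj}$.

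Next I would lift this to the cactus and establish fault tolerance. Because $\G$ is a cactus without leaves, every shortest path from a vertex $\vv_{\kk}\in\SSS_{\C_{\p}}$ to a vertex $\vv\in\C_{\m}$ factors through the cut vertex $\vv_{\ii}$, so for any pair $\vv_{\aaa},\vv_{\bb}\in\C_{\m}$, the vertex $\vv_{\kk}$ resolves them iff $\vv_{\ii}$ does; the analogous statement holds for $\SSS_{\C_{\q}}$ and $\vv_{\jj}$. After removing any single element of $\W$, at least one vertex survives in each of $\SSS_{\C_{\p}},\SSS_{\C_{\q}}$, so the reduced set still acts like $\{\vv_{\ii},\vv_{\jj}\}$ on $\C_{\m}$ and, by the previous paragraph, still resolves every pair of vertices in $\C_{\m}$. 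Pairs in which $\vv_{\ii}$ or $\vv_{\jj}$ itself appears are handled automatically, since $\vv_{\ii}$ is the strictly nearest $\C_{\m}$-vertex to every $\vv_{\kk}\in\SSS_{\C_{\p}}$ (and likewise for $\vv_{\jj}$). Therefore $\C_{\m}$ need not contribute any vertex to $\W$, giving $\left|\SSS_{\C_{\m}}\right|=0$.

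The main obstacle is the distance computation in the second paragraph; everything else is bookkeeping. The non-antipodality hypothesis is exactly what rules out the coincidence $\dd(\vv_{\jj},\uu_k)=\dd(\vv_{\jj},\uu_{\m-k})$, because $|d-k|=\min(d+k,\m-d-k)$ forces either $d=\m/2$ (the excluded antipodal case) or $k=\m/2$ (in which case $\uu_k=\uu_{\m-k}$ and no pair exists). If $\vv_{\ii},\vv_{\jj}$ were antipodal, the reflectional symmetry of an even cycle about the axis through them would produce pairs $\uu_k,\uu_{\m-k}$ that are simultaneously equidistant from both common vertices, so the argument would break down and the cycle would genuinely require its own contribution to $\W$; this is precisely the scenario handled in the subsequent lemma.
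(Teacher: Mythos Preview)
Your proposal is correct and follows essentially the same approach as the paper: the paper's proof simply says ``proceeding in the same way as in Lemma \ref{oddcycleincactiresult}'' and notes that the key fact $|\PP_1|\neq|\PP_2|$ now comes from non-antipodality rather than from $\m$ being odd. Your explicit cyclic-coordinate computation is a cleaner and more rigorous version of the path-based case analysis that the paper carries out in Lemma \ref{oddcycleincactiresult}, but the underlying idea---$\{\vv_{\ii},\vv_{\jj}\}$ resolves $\C_{\m}$, then lift via the cut-vertex property to $\SSS_{\C_{\p}}\cup\SSS_{\C_{\q}}$ for fault tolerance---is identical.
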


\begin{proof}
	Let $\G$ and $\C_{\m}$ be as given above with $\left|\A({\C_{\m}})\right|= 2$, where both paths between common vertices of $\C_{\m}$ are of different lengths (Common vertices are not antipodal). Let $\vv_{\ii}$ and $\vv_{\jj}$ be the common vertices, and $\PP_1$ and $\PP_2$ be the two distinct paths between them. This is given in Figure \ref{evencycleincacti}.

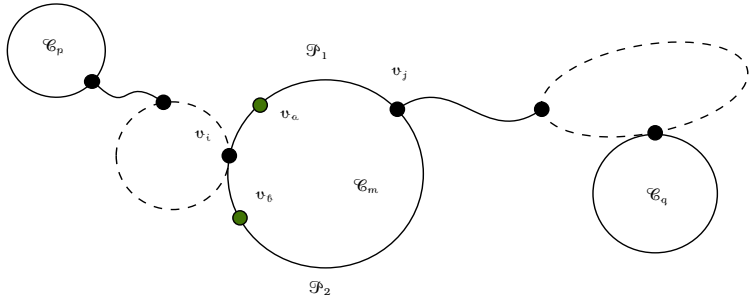
\begin{figure}[H]
	\begin{center}
		\tikzset{every picture/.style={line width=0.75pt}} 
		\resizebox{10cm}{3.9cm}{ 
			
			\begin{tikzpicture}[x=0.75pt,y=0.75pt,yscale=-1,xscale=1]
				
				\draw  [fill={rgb, 255:red, 0; green, 0; blue, 0 }  ,fill opacity=1 ] (356.98,93.29) .. controls (356.98,90.44) and (359.18,88.13) .. (361.9,88.13) .. controls (364.62,88.13) and (366.82,90.44) .. (366.82,93.29) .. controls (366.82,96.14) and (364.62,98.45) .. (361.9,98.45) .. controls (359.18,98.45) and (356.98,96.14) .. (356.98,93.29) -- cycle ;
				\draw   (244.46,140.29) .. controls (244.46,102.51) and (274.78,71.89) .. (312.18,71.89) .. controls (349.58,71.89) and (379.9,102.51) .. (379.9,140.29) .. controls (379.9,178.06) and (349.58,208.69) .. (312.18,208.69) .. controls (274.78,208.69) and (244.46,178.06) .. (244.46,140.29) -- cycle ;
				\draw  [fill={rgb, 255:red, 0; green, 0; blue, 0 }  ,fill opacity=1 ] (240.59,127.09) .. controls (240.59,124.24) and (242.79,121.93) .. (245.51,121.93) .. controls (248.23,121.93) and (250.43,124.24) .. (250.43,127.09) .. controls (250.43,129.94) and (248.23,132.25) .. (245.51,132.25) .. controls (242.79,132.25) and (240.59,129.94) .. (240.59,127.09) -- cycle ;
				\draw    (361.9,93.29) .. controls (401.9,63.29) and (421.9,123.29) .. (461.9,93.29) ;
				\draw  [dash pattern={on 4.5pt off 4.5pt}] (461.9,93.29) .. controls (458.4,75.96) and (487.45,55.48) .. (526.79,47.53) .. controls (566.13,39.59) and (600.86,47.19) .. (604.35,64.51) .. controls (607.85,81.83) and (578.8,102.32) .. (539.46,110.27) .. controls (500.12,118.21) and (465.4,110.61) .. (461.9,93.29) -- cycle ;
				\draw  [fill={rgb, 255:red, 0; green, 0; blue, 0 }  ,fill opacity=1 ] (456.98,93.29) .. controls (456.98,90.44) and (459.18,88.13) .. (461.9,88.13) .. controls (464.62,88.13) and (466.82,90.44) .. (466.82,93.29) .. controls (466.82,96.14) and (464.62,98.45) .. (461.9,98.45) .. controls (459.18,98.45) and (456.98,96.14) .. (456.98,93.29) -- cycle ;
				\draw  [fill={rgb, 255:red, 0; green, 0; blue, 0 }  ,fill opacity=1 ] (535.54,110.42) .. controls (535.54,107.58) and (537.74,105.27) .. (540.46,105.27) .. controls (543.18,105.27) and (545.38,107.58) .. (545.38,110.42) .. controls (545.38,113.27) and (543.18,115.58) .. (540.46,115.58) .. controls (537.74,115.58) and (535.54,113.27) .. (535.54,110.42) -- cycle ;
				\draw   (497.39,154.66) .. controls (497.39,130.87) and (516.67,111.58) .. (540.46,111.58) .. controls (564.25,111.58) and (583.54,130.87) .. (583.54,154.66) .. controls (583.54,178.45) and (564.25,197.73) .. (540.46,197.73) .. controls (516.67,197.73) and (497.39,178.45) .. (497.39,154.66) -- cycle ;
				\draw  [fill={rgb, 255:red, 0; green, 0; blue, 0 }  ,fill opacity=1 ] (194.97,88.23) .. controls (194.97,85.38) and (197.18,83.07) .. (199.89,83.07) .. controls (202.61,83.07) and (204.82,85.38) .. (204.82,88.23) .. controls (204.82,91.08) and (202.61,93.39) .. (199.89,93.39) .. controls (197.18,93.39) and (194.97,91.08) .. (194.97,88.23) -- cycle ;
				\draw   (124.4,16.75) .. controls (143.12,16.04) and (158.87,30.63) .. (159.58,49.35) .. controls (160.29,68.06) and (145.7,83.81) .. (126.98,84.53) .. controls (108.27,85.24) and (92.52,70.64) .. (91.8,51.93) .. controls (91.09,33.21) and (105.69,17.46) .. (124.4,16.75) -- cycle ;
				\draw    (151.59,73.49) .. controls (178.67,101.48) and (169.31,65.81) .. (199.89,88.23) ;
				\draw  [fill={rgb, 255:red, 0; green, 0; blue, 0 }  ,fill opacity=1 ] (150.25,68.18) .. controls (153.1,68.08) and (155.49,70.19) .. (155.6,72.91) .. controls (155.7,75.62) and (153.47,77.91) .. (150.63,78.02) .. controls (147.78,78.13) and (145.39,76.02) .. (145.29,73.3) .. controls (145.18,70.58) and (147.41,68.29) .. (150.25,68.18) -- cycle ;
				\draw  [dash pattern={on 4.5pt off 4.5pt}] (167.17,127.09) .. controls (167.17,105.46) and (184.71,87.93) .. (206.34,87.93) .. controls (227.97,87.93) and (245.51,105.46) .. (245.51,127.09) .. controls (245.51,148.72) and (227.97,166.26) .. (206.34,166.26) .. controls (184.71,166.26) and (167.17,148.72) .. (167.17,127.09) -- cycle ;
				\draw  [fill={rgb, 255:red, 65; green, 117; blue, 5 }  ,fill opacity=1 ] (261.98,90.29) .. controls (261.98,87.44) and (264.18,85.13) .. (266.9,85.13) .. controls (269.62,85.13) and (271.82,87.44) .. (271.82,90.29) .. controls (271.82,93.14) and (269.62,95.45) .. (266.9,95.45) .. controls (264.18,95.45) and (261.98,93.14) .. (261.98,90.29) -- cycle ;
				\draw  [fill={rgb, 255:red, 65; green, 117; blue, 5 }  ,fill opacity=1 ] (247.98,172.29) .. controls (247.98,169.44) and (250.18,167.13) .. (252.9,167.13) .. controls (255.62,167.13) and (257.82,169.44) .. (257.82,172.29) .. controls (257.82,175.14) and (255.62,177.45) .. (252.9,177.45) .. controls (250.18,177.45) and (247.98,175.14) .. (247.98,172.29) -- cycle ;
				
				\draw (220,107.4) node [anchor=north west][inner sep=0.75pt]    {$\vv_{\ii}$};
				\draw (356,61.4) node [anchor=north west][inner sep=0.75pt]    {$\vv_{\jj}$};
				\draw (330,142.4) node [anchor=north west][inner sep=0.75pt]    {$\C_{\m}$};
				\draw (114.64,40.57) node [anchor=north west][inner sep=0.75pt]    {$\C_{\p}$};
				\draw (533,149.4) node [anchor=north west][inner sep=0.75pt]    {$\C_{\q}$};
				\draw (297,44.4) node [anchor=north west][inner sep=0.75pt]    {$\PP_{1}$};
				\draw (299,216.4) node [anchor=north west][inner sep=0.75pt]    {$\PP_{2}$};
				\draw (279,92.4) node [anchor=north west][inner sep=0.75pt]    {$\vv_{\aaa}$};
				\draw (262,151.4) node [anchor=north west][inner sep=0.75pt]    {$\vv_{\bb}$};

			\end{tikzpicture}
			
}
\caption{Even Inner Cycle in Cacti Graph Without Leaves Having Unequal Paths Between Two Common Vertices.}
\label{evencycleincacti}
\end{center}
\end{figure}	

Proceeding in the same way as in Lemma \ref{oddcycleincactiresult}, we can easily conclude that the vertices of $\C_{\m}$ can be resolved using vertices of $\C_{\p}$ and $\C_{\q}$, and hence they can be fault tolerantly resolved by the same cycles. This gives us the result.
		
\end{proof}

\begin{Lemma}\label{eveninner2}
	Let $\G$ be a leafless cactus graph. Let $\C_{\m}$ be an even inner cycle of $\G$ with $\left|\A({\C_{\m}})\right|= 2$, then $\left|\SSS_{\C_{\m}}\right|=2$, if the common vertices are antipodal.
\end{Lemma}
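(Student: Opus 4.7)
The plan is to establish the two inequalities $|\SSS_{\C_{\m}}| \geq 2$ and $|\SSS_{\C_{\m}}| \leq 2$ separately, exploiting the reflection symmetry of the even cycle $\C_{\m}$ about its antipodal pair of common vertices $\{\vv_{\ii}, \vv_{\jj}\}$. For the lower bound I would use this symmetry to produce pairs of ``mirror'' vertices that cannot be resolved from outside $\C_{\m}$; for the upper bound I would build an explicit fault tolerant distance producer that uses only two interior vertices of $\C_{\m}$ together with the outer-cycle contributions guaranteed by Lemma \ref{outercycle2}.

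For the lower bound, write $\PP_1, \PP_2$ for the two $\vv_{\ii}\vv_{\jj}$-arcs inside $\C_{\m}$; each has length $\m/2$. The map sending a non-common $\vv_{\aaa} \in \PP_1$ at distance $a$ from $\vv_{\ii}$ to the unique $\vv_{\bb} \in \PP_2$ at the same distance $a$ from $\vv_{\ii}$ defines a reflection of $\C_{\m}$, and the pair satisfies $\dd(\vv_{\aaa}, \vv_{\ii}) = \dd(\vv_{\bb}, \vv_{\ii})$ and $\dd(\vv_{\aaa}, \vv_{\jj}) = \dd(\vv_{\bb}, \vv_{\jj})$. Because $\G$ is a cactus, every vertex outside $\C_{\m}$ reaches $\C_{\m}$ only through $\vv_{\ii}$ or $\vv_{\jj}$, so it is equidistant from every such mirror pair; the common vertices $\vv_{\ii}, \vv_{\jj}$ themselves are also equidistant from the pair. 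Consequently only non-common interior vertices of $\C_{\m}$ can resolve a mirror pair. If $|\SSS_{\C_{\m}}| \leq 1$ then removing the lone candidate, as fault tolerance requires, leaves no distinguisher at all, which is the desired contradiction.

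For the upper bound, pick outer cycles $\C_{\p}, \C_{\q}$ corresponding to $\vv_{\ii}, \vv_{\jj}$; by Lemma \ref{outercycle2}, $|\SSS_{\C_{\p}}| = |\SSS_{\C_{\q}}| = 2$. Let $u_1$ be the neighbor of $\vv_{\ii}$ on $\PP_1$ and $u_2$ the neighbor of $\vv_{\ii}$ on $\PP_2$, and set $\W' = \SSS_{\C_{\p}} \cup \SSS_{\C_{\q}} \cup \{u_1, u_2\}$. For a mirror pair $(\vv_{\aaa}, \vv_{\bb})$ with $1 \leq a \leq \m/2 - 1$, a direct shortest-path calculation in $\C_{\m}$ gives $\dd(u_1, \vv_{\aaa}) = a - 1$ and $\dd(u_1, \vv_{\bb}) = a + 1$, so $u_1$ resolves the pair; symmetrically $u_2$ does. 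Hence deleting any single element of $\W'$ still leaves at least one of $u_1, u_2$ to witness every mirror pair.

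To complete the upper bound I would dispatch the non-mirror pairs $x, y \in \C_{\m}$ using the external vertices, arguing by cases (same arc versus different arcs without the mirror relation) that at least one of $\dd(\vv_{\ii}, x) \neq \dd(\vv_{\ii}, y)$ and $\dd(\vv_{\jj}, x) \neq \dd(\vv_{\jj}, y)$ holds, so that every element of $\SSS_{\C_{\p}}$ or of $\SSS_{\C_{\q}}$ inherits this distinction through the unique-path property of the cactus; since both sets have size two, some external witness survives every single-vertex deletion of $\W'$. Combining the two inequalities yields $|\SSS_{\C_{\m}}| = 2$. The main obstacle I anticipate is this last case analysis: one must confirm that the antipodal hypothesis is precisely what forces the set of ``invisible-from-outside'' pairs to coincide with the mirror pairs, so that once $u_1$ and $u_2$ dispatch these, the outer-cycle witnesses really do handle everything else, even after one of them is removed.
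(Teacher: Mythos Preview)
Your argument is correct and follows the same two-step skeleton as the paper (mirror pairs force the lower bound, an explicit two-vertex choice gives the upper bound), but your construction for the upper bound is genuinely different. The paper places the two added vertices $\vv'_{\ii},\vv'_{\jj}$ on the \emph{same} arc, adjacent to the two common vertices $\vv_{\ii}$ and $\vv_{\jj}$ respectively, and then argues that $\{\vv_{\ii},\vv'_{\ii},\vv_{\jj},\vv'_{\jj}\}$ is a fault tolerant resolving set of $\C_{\m}$ before replacing $\vv_{\ii},\vv_{\jj}$ by $\SSS_{\C_{\p}},\SSS_{\C_{\q}}$. You instead take both added vertices $u_1,u_2$ adjacent to the \emph{same} common vertex $\vv_{\ii}$ but on opposite arcs; your distance computation $\dd(u_1,\vv_{\aaa})=a-1\neq a+1=\dd(u_1,\vv_{\bb})$ is clean and makes it immediate that each of $u_1,u_2$ single-handedly kills every mirror pair, so fault tolerance for mirror pairs is automatic. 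The paper's choice requires checking that $\vv'_{\jj}$ still resolves mirror pairs when $\vv'_{\ii}$ is deleted (it does, but the calculation is less transparent). Conversely, the paper's symmetric placement near both common vertices makes the ``non-mirror'' case analysis you flag at the end slightly shorter. Either construction works; your lower-bound paragraph is also more explicit than the paper's, which leaves the inequality $|\SSS_{\C_{\m}}|\geq 2$ implicit in the mirror-pair observation.
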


\begin{proof}
	Let $\G$ and $\C_{\m}$ be as given above and let the common vertices of $\C_{\m}$ be antipodal, then $\left|\PP_1\right|= \left|\PP_2\right|$ in figure \ref{evencycleincactiantipodal}. It is also easily observable that the vertices $\vv_{\aaa},\vv_{\bb}$ equidistant from $\vv_{\ii}$, will have the same representation with respect to vertices $\SSS_{\C_{\p}} \cup \SSS_{\C_{\q}}$, or from the union of all other outer cycles' fault tolerant metric generator sets, for that matter.
\begin{figure}[H]
\begin{center}
	\tikzset{every picture/.style={line width=0.75pt}} 
	\resizebox{10cm}{3.9cm}{ 
		
		\begin{tikzpicture}[x=0.75pt,y=0.75pt,yscale=-1,xscale=1]
		
		\draw  [fill={rgb, 255:red, 0; green, 0; blue, 0 }  ,fill opacity=1 ] (374.98,135.29) .. controls (374.98,132.44) and (377.18,130.13) .. (379.9,130.13) .. controls (382.62,130.13) and (384.82,132.44) .. (384.82,135.29) .. controls (384.82,138.14) and (382.62,140.45) .. (379.9,140.45) .. controls (377.18,140.45) and (374.98,138.14) .. (374.98,135.29) -- cycle ;
		\draw   (244.46,140.29) .. controls (244.46,102.51) and (274.78,71.89) .. (312.18,71.89) .. controls (349.58,71.89) and (379.9,102.51) .. (379.9,140.29) .. controls (379.9,178.06) and (349.58,208.69) .. (312.18,208.69) .. controls (274.78,208.69) and (244.46,178.06) .. (244.46,140.29) -- cycle ;
		\draw  [fill={rgb, 255:red, 0; green, 0; blue, 0 }  ,fill opacity=1 ] (240.59,136.09) .. controls (240.59,133.24) and (242.79,130.93) .. (245.51,130.93) .. controls (248.23,130.93) and (250.43,133.24) .. (250.43,136.09) .. controls (250.43,138.94) and (248.23,141.25) .. (245.51,141.25) .. controls (242.79,141.25) and (240.59,138.94) .. (240.59,136.09) -- cycle ;
		\draw    (379.9,135.29) .. controls (419.9,105.29) and (439.9,165.29) .. (479.9,135.29) ;
		\draw  [dash pattern={on 4.5pt off 4.5pt}] (479.9,135.29) .. controls (476.4,117.96) and (505.45,97.48) .. (544.79,89.53) .. controls (584.13,81.59) and (618.86,89.19) .. (622.35,106.51) .. controls (625.85,123.83) and (596.8,144.32) .. (557.46,152.27) .. controls (518.12,160.21) and (483.4,152.61) .. (479.9,135.29) -- cycle ;
		\draw  [fill={rgb, 255:red, 0; green, 0; blue, 0 }  ,fill opacity=1 ] (474.98,135.29) .. controls (474.98,132.44) and (477.18,130.13) .. (479.9,130.13) .. controls (482.62,130.13) and (484.82,132.44) .. (484.82,135.29) .. controls (484.82,138.14) and (482.62,140.45) .. (479.9,140.45) .. controls (477.18,140.45) and (474.98,138.14) .. (474.98,135.29) -- cycle ;
		\draw  [fill={rgb, 255:red, 0; green, 0; blue, 0 }  ,fill opacity=1 ] (553.54,152.42) .. controls (553.54,149.58) and (555.74,147.27) .. (558.46,147.27) .. controls (561.18,147.27) and (563.38,149.58) .. (563.38,152.42) .. controls (563.38,155.27) and (561.18,157.58) .. (558.46,157.58) .. controls (555.74,157.58) and (553.54,155.27) .. (553.54,152.42) -- cycle ;
		\draw   (515.39,196.66) .. controls (515.39,172.87) and (534.67,153.58) .. (558.46,153.58) .. controls (582.25,153.58) and (601.54,172.87) .. (601.54,196.66) .. controls (601.54,220.45) and (582.25,239.73) .. (558.46,239.73) .. controls (534.67,239.73) and (515.39,220.45) .. (515.39,196.66) -- cycle ;
		\draw  [fill={rgb, 255:red, 0; green, 0; blue, 0 }  ,fill opacity=1 ] (192.97,96.23) .. controls (192.97,93.38) and (195.18,91.07) .. (197.89,91.07) .. controls (200.61,91.07) and (202.82,93.38) .. (202.82,96.23) .. controls (202.82,99.08) and (200.61,101.39) .. (197.89,101.39) .. controls (195.18,101.39) and (192.97,99.08) .. (192.97,96.23) -- cycle ;
		\draw   (122.4,24.75) .. controls (141.12,24.04) and (156.87,38.63) .. (157.58,57.35) .. controls (158.29,76.06) and (143.7,91.81) .. (124.98,92.53) .. controls (106.27,93.24) and (90.52,78.64) .. (89.8,59.93) .. controls (89.09,41.21) and (103.69,25.46) .. (122.4,24.75) -- cycle ;
		\draw    (149.59,81.49) .. controls (176.67,109.48) and (167.31,73.81) .. (197.89,96.23) ;
		\draw  [fill={rgb, 255:red, 0; green, 0; blue, 0 }  ,fill opacity=1 ] (148.25,76.18) .. controls (151.1,76.08) and (153.49,78.19) .. (153.6,80.91) .. controls (153.7,83.62) and (151.47,85.91) .. (148.63,86.02) .. controls (145.78,86.13) and (143.39,84.02) .. (143.29,81.3) .. controls (143.18,78.58) and (145.41,76.29) .. (148.25,76.18) -- cycle ;
		\draw  [dash pattern={on 4.5pt off 4.5pt}] (165.17,135.09) .. controls (165.17,113.46) and (182.71,95.93) .. (204.34,95.93) .. controls (225.97,95.93) and (243.51,113.46) .. (243.51,135.09) .. controls (243.51,156.72) and (225.97,174.26) .. (204.34,174.26) .. controls (182.71,174.26) and (165.17,156.72) .. (165.17,135.09) -- cycle ;
		\draw  [fill={rgb, 255:red, 65; green, 117; blue, 5 }  ,fill opacity=1 ] (261.98,90.29) .. controls (261.98,87.44) and (264.18,85.13) .. (266.9,85.13) .. controls (269.62,85.13) and (271.82,87.44) .. (271.82,90.29) .. controls (271.82,93.14) and (269.62,95.45) .. (266.9,95.45) .. controls (264.18,95.45) and (261.98,93.14) .. (261.98,90.29) -- cycle ;
		\draw  [fill={rgb, 255:red, 65; green, 117; blue, 5 }  ,fill opacity=1 ] (255.98,184.29) .. controls (255.98,181.44) and (258.18,179.13) .. (260.9,179.13) .. controls (263.62,179.13) and (265.82,181.44) .. (265.82,184.29) .. controls (265.82,187.14) and (263.62,189.45) .. (260.9,189.45) .. controls (258.18,189.45) and (255.98,187.14) .. (255.98,184.29) -- cycle ;
		
		\draw (218,115.4) node [anchor=north west][inner sep=0.75pt]    {$\vv_{\ii}$};
		\draw (383,102.4) node [anchor=north west][inner sep=0.75pt]    {$\vv_{\jj}$};
		\draw (330,142.4) node [anchor=north west][inner sep=0.75pt]    {$\C_{\m}$};
		\draw (112.64,48.57) node [anchor=north west][inner sep=0.75pt]    {$\C_{\p}$};
		\draw (551,191.4) node [anchor=north west][inner sep=0.75pt]    {$\C_{\q}$};
		\draw (297,44.4) node [anchor=north west][inner sep=0.75pt]    {$\PP_{1}$};
		\draw (299,216.4) node [anchor=north west][inner sep=0.75pt]    {$\PP_{2}$};
		\draw (269,100.4) node [anchor=north west][inner sep=0.75pt]    {$\vv_{\aaa}$};
		\draw (269,169.4) node [anchor=north west][inner sep=0.75pt]    {$\vv_{\bb}$};

	\end{tikzpicture}
}
\caption{Even Inner Cycle in Cacti Graph Without Leaves Having Equal Paths Between Two Common Vertices.}
\label{evencycleincactiantipodal}
\end{center}
\end{figure}
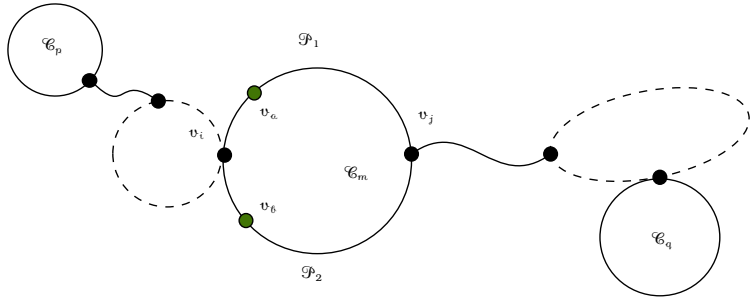

Let us now consider the vertices $\vv'_{\ii}$ and $\vv'_{\jj}$ such that both belong to the same $\vv_{\ii} \vv_{\jj}$ path, and $\vv'_{\ii}\sim \vv_{\ii}$ and $\vv'_{\jj} \sim \vv_{\jj}$, as shown in the following figure. Notice that the above construction is only possible if $\m>4$. The case where $\m=4$ is handled a bit differently.

\begin{figure}[H]
	\begin{center}
		\tikzset{every picture/.style={line width=0.75pt}} 
		\resizebox{10cm}{4.9cm}{
\begin{tikzpicture}[x=0.75pt,y=0.75pt,yscale=-1,xscale=1]
	
	\draw  [fill={rgb, 255:red, 0; green, 0; blue, 0 }  ,fill opacity=1 ] (356.98,155.29) .. controls (356.98,152.44) and (359.18,150.13) .. (361.9,150.13) .. controls (364.62,150.13) and (366.82,152.44) .. (366.82,155.29) .. controls (366.82,158.14) and (364.62,160.45) .. (361.9,160.45) .. controls (359.18,160.45) and (356.98,158.14) .. (356.98,155.29) -- cycle ;
	\draw  [fill={rgb, 255:red, 0; green, 0; blue, 0 }  ,fill opacity=1 ] (222.59,156.09) .. controls (222.59,153.24) and (224.79,150.93) .. (227.51,150.93) .. controls (230.23,150.93) and (232.43,153.24) .. (232.43,156.09) .. controls (232.43,158.94) and (230.23,161.25) .. (227.51,161.25) .. controls (224.79,161.25) and (222.59,158.94) .. (222.59,156.09) -- cycle ;
	\draw  [dash pattern={on 4.5pt off 4.5pt}]  (361.9,155.29) .. controls (441.33,135) and (359.33,107) .. (433.33,80) ;
	\draw  [fill={rgb, 255:red, 0; green, 0; blue, 0 }  ,fill opacity=1 ] (428.98,79.29) .. controls (428.98,76.44) and (431.18,74.13) .. (433.9,74.13) .. controls (436.62,74.13) and (438.82,76.44) .. (438.82,79.29) .. controls (438.82,82.14) and (436.62,84.45) .. (433.9,84.45) .. controls (431.18,84.45) and (428.98,82.14) .. (428.98,79.29) -- cycle ;
	\draw  [fill={rgb, 255:red, 0; green, 0; blue, 0 }  ,fill opacity=1 ] (461.54,122.42) .. controls (461.54,119.58) and (463.74,117.27) .. (466.46,117.27) .. controls (469.18,117.27) and (471.38,119.58) .. (471.38,122.42) .. controls (471.38,125.27) and (469.18,127.58) .. (466.46,127.58) .. controls (463.74,127.58) and (461.54,125.27) .. (461.54,122.42) -- cycle ;
	\draw   (414.39,165.66) .. controls (414.39,141.87) and (433.67,122.58) .. (457.46,122.58) .. controls (481.25,122.58) and (500.54,141.87) .. (500.54,165.66) .. controls (500.54,189.45) and (481.25,208.73) .. (457.46,208.73) .. controls (433.67,208.73) and (414.39,189.45) .. (414.39,165.66) -- cycle ;
	\draw  [dash pattern={on 4.5pt off 4.5pt}]  (127.51,156.09) .. controls (167.51,126.09) and (187.51,186.09) .. (227.51,156.09) ;
	\draw  [fill={rgb, 255:red, 0; green, 0; blue, 0 }  ,fill opacity=1 ] (122.98,155.29) .. controls (122.98,152.44) and (125.18,150.13) .. (127.9,150.13) .. controls (130.62,150.13) and (132.82,152.44) .. (132.82,155.29) .. controls (132.82,158.14) and (130.62,160.45) .. (127.9,160.45) .. controls (125.18,160.45) and (122.98,158.14) .. (122.98,155.29) -- cycle ;
	\draw  [dash pattern={on 4.5pt off 4.5pt}] (98.55,101.53) .. controls (114.84,101.58) and (127.98,125.71) .. (127.9,155.45) .. controls (127.82,185.18) and (114.54,209.24) .. (98.25,209.2) .. controls (81.96,209.15) and (68.82,185.01) .. (68.9,155.28) .. controls (68.98,125.55) and (82.26,101.48) .. (98.55,101.53) -- cycle ;
	\draw  [fill={rgb, 255:red, 0; green, 0; blue, 0 }  ,fill opacity=1 ] (97.98,101.29) .. controls (97.98,98.44) and (100.18,96.13) .. (102.9,96.13) .. controls (105.62,96.13) and (107.82,98.44) .. (107.82,101.29) .. controls (107.82,104.14) and (105.62,106.45) .. (102.9,106.45) .. controls (100.18,106.45) and (97.98,104.14) .. (97.98,101.29) -- cycle ;
	\draw   (51.68,70.93) .. controls (51.68,54.17) and (74.61,40.57) .. (102.9,40.57) .. controls (131.19,40.57) and (154.12,54.17) .. (154.12,70.93) .. controls (154.12,87.7) and (131.19,101.29) .. (102.9,101.29) .. controls (74.61,101.29) and (51.68,87.7) .. (51.68,70.93) -- cycle ;
	\draw   (294.7,56) .. controls (331.81,56) and (361.9,100.55) .. (361.9,155.5) .. controls (361.9,210.45) and (331.81,255) .. (294.7,255) .. controls (257.59,255) and (227.51,210.45) .. (227.51,155.5) .. controls (227.51,100.55) and (257.59,56) .. (294.7,56) -- cycle ;
	\draw  [fill={rgb, 255:red, 74; green, 144; blue, 226 }  ,fill opacity=1 ] (225.98,185.29) .. controls (225.98,182.44) and (228.18,180.13) .. (230.9,180.13) .. controls (233.62,180.13) and (235.82,182.44) .. (235.82,185.29) .. controls (235.82,188.14) and (233.62,190.45) .. (230.9,190.45) .. controls (228.18,190.45) and (225.98,188.14) .. (225.98,185.29) -- cycle ;
	\draw  [fill={rgb, 255:red, 74; green, 144; blue, 226 }  ,fill opacity=1 ] (353.98,185.29) .. controls (353.98,182.44) and (356.18,180.13) .. (358.9,180.13) .. controls (361.62,180.13) and (363.82,182.44) .. (363.82,185.29) .. controls (363.82,188.14) and (361.62,190.45) .. (358.9,190.45) .. controls (356.18,190.45) and (353.98,188.14) .. (353.98,185.29) -- cycle ;
	\draw  [dash pattern={on 4.5pt off 4.5pt}] (434.33,79) .. controls (434.33,53.75) and (454.8,33.28) .. (480.05,33.28) .. controls (505.3,33.28) and (525.77,53.75) .. (525.77,79) .. controls (525.77,104.25) and (505.3,124.72) .. (480.05,124.72) .. controls (454.8,124.72) and (434.33,104.25) .. (434.33,79) -- cycle ;
	\draw  [fill={rgb, 255:red, 0; green, 0; blue, 0 }  ,fill opacity=1 ] (520.54,76.42) .. controls (520.54,73.58) and (522.74,71.27) .. (525.46,71.27) .. controls (528.18,71.27) and (530.38,73.58) .. (530.38,76.42) .. controls (530.38,79.27) and (528.18,81.58) .. (525.46,81.58) .. controls (522.74,81.58) and (520.54,79.27) .. (520.54,76.42) -- cycle ;
	\draw   (551.54,21.26) .. controls (565.9,21.26) and (577.54,45.96) .. (577.54,76.42) .. controls (577.54,106.89) and (565.9,131.59) .. (551.54,131.59) .. controls (537.18,131.59) and (525.54,106.89) .. (525.54,76.42) .. controls (525.54,45.96) and (537.18,21.26) .. (551.54,21.26) -- cycle ;
	
	\draw (208,122.4) node [anchor=north west][inner sep=0.75pt]    {$\vv_{\ii}$};
	\draw (365,122.4) node [anchor=north west][inner sep=0.75pt]    {$\vv_{\jj}$};
	\draw (280,131.4) node [anchor=north west][inner sep=0.75pt]    {$\C_{\m}$};
	\draw (453,161.4) node [anchor=north west][inner sep=0.75pt]    {$\C_{\q}$};
	\draw (284,26.4) node [anchor=north west][inner sep=0.75pt]    {$\PP_{1}$};
	\draw (285,269.4) node [anchor=north west][inner sep=0.75pt]    {$\PP_{2}$};
	\draw (247,175.4) node [anchor=north west][inner sep=0.75pt]    {$\vv'_{i}$};
	\draw (328,175.4) node [anchor=north west][inner sep=0.75pt]    {$\vv'_{j}$};
	\draw (93,62.4) node [anchor=north west][inner sep=0.75pt]    {$\C_{\p}$};
	\draw (545,62.4) node [anchor=north west][inner sep=0.75pt]    {$\C_{r}$};

\end{tikzpicture}

}
\caption{Even Inner Cycle in Leafless Cactus Graph with $\left|\A({\C_{\m}})\right|= 2$ and Having Antipodal Vertices.}
\label{evencycleincactiantipodalbasis}
\end{center}
\end{figure}
Now, any two vertices which are not resolved by $\vv_{\ii}$, are resolved by $\vv_{\ii}'$ and vice versa. Similarly, the vertices which are not resolved by $\vv_{\jj}$, are resolved by $\vv_{\jj}'$ and vice versa. Moreover, $\{\vv_{\ii}, \vv'_{\ii},\vv_{\jj}, \vv'_{\jj}\}$ form a fault tolerant resolving set of $\C_{\m}$.

If $\m=4$, obviously $\C_{\m}$ contains only $4$ vertices. The common vertices are indexed $\vv_{\ii},\vv_{\jj}$ and the remaining two vertices are indexed $\vv_{\ii}',\vv_{\jj}'$. Again, $\{\vv_{\ii}, \vv'_{\ii},\vv_{\jj}, \vv'_{\jj}\}$ form a fault tolerant resolving set $\C_{\m}$.

Since all vertices of $\C_{\m}$ resolved by $\vv_{\ii}$ and $\vv_{\jj}$ are also resolved by their corresponding outer cycles $\C_{\p}$ and $\C_{\q}$, we can conclude that $\SSS_{\C_{\p}} \cup \SSS_{\C_{\q}} \cup \{\vv'_{\ii}, \vv'_{\jj}\}$ is a fault tolerant metric producer for the cycle $\C_{\m}$. Hence $\SSS_{\C_{\m}}=\{\vv'_{\ii}, \vv'_{\jj}\}$ and $\left|\SSS_{\C_{\m}}\right|=2$.

\end{proof}

\begin{Lemma} \label{evenmorethan2}
	Let $\G$ be a cactus graph without leaves. Let $\C_{\m}$ be an even inner cycle of $\G$ with $\left|\A({\C_{\m}})\right|\geq 2$, then $\left|\SSS_{\C_{\m}}\right|=0$.
\end{Lemma}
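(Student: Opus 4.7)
The plan is to show that every pair of distinct vertices $\vv_\aaa,\vv_\bb \in \C_\m$ is fault-tolerantly resolved purely by vertices drawn from the fault-tolerant metric generators of outer cycles attached (through chains of common vertices) to common vertices of $\C_\m$, so $\C_\m$ itself need not contribute anything to the fault-tolerant metric basis. Since Lemmas \ref{evenacm} and \ref{eveninner2} already settle $|\A(\C_\m)|=2$ exhaustively, the case of interest here is $|\A(\C_\m)|\geq 3$.

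First I would fix three distinct common vertices $\vv_\ii,\vv_\jj,\vv_\kk \in \A(\C_\m)$ and, for each, a corresponding outer cycle; by Lemma \ref{outercycle2}, each such outer cycle contributes exactly two vertices to the fault-tolerant metric basis $\W$. The first step is a cut-vertex distance reduction: each common vertex of $\C_\m$ is a cut vertex of $\G$ separating its corresponding outer cycle from $\C_\m \setminus \{\vv_\ii\}$, so for every $s$ in the fault-tolerant metric generator of the outer cycle associated with $\vv_\ii$, every shortest $s$--$\vv_\aaa$ path passes through $\vv_\ii$. Hence $\dd(s,\vv_\aaa)-\dd(s,\vv_\bb)=\dd(\vv_\ii,\vv_\aaa)-\dd(\vv_\ii,\vv_\bb)$, so $s$ resolves $\vv_\aaa,\vv_\bb$ if and only if $\vv_\ii$ does. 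Thus the two witnesses provided by $\vv_\ii$'s outer cycle act collectively as two copies of $\vv_\ii$ for resolving pairs inside $\C_\m$, and likewise for $\vv_\jj,\vv_\kk$.

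Next comes a cycle-reflection step. Because $\m$ is even, two distinct vertices $\vv_\aaa,\vv_\bb \in \C_\m$ are equidistant from some $\vv_\ii \in \C_\m$ precisely when they are symmetric about the unique diameter through $\vv_\ii$ and its antipode. If $\vv_\jj$ also fails to resolve $\vv_\aaa,\vv_\bb$, then that same diameter must pass through $\vv_\jj$, forcing $\vv_\jj$ to be the antipode of $\vv_\ii$. Since at most one vertex of $\C_\m$ is antipodal to $\vv_\ii$, the third common vertex $\vv_\kk$ cannot also lie on that diameter and therefore resolves $\vv_\aaa,\vv_\bb$. Combining with the previous paragraph, for every pair of distinct vertices of $\C_\m$ there are at least two elements of $\W$ (both from the outer cycle attached to a resolving common vertex) that separate them, so removing any single element of $\W$ still resolves the pair. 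This gives $|\SSS_{\C_\m}|=0$.

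The main obstacle I anticipate is verifying the cut-vertex reduction $\dd(s,\vv_\aaa)=\dd(s,\vv_\ii)+\dd(\vv_\ii,\vv_\aaa)$ carefully when $\vv_\ii$'s corresponding outer cycle is reached through intermediate inner cycles or paths, since \emph{corresponding outer cycle} is defined inductively in the paper. This reduces to the observation that each common vertex is a cut vertex of the cactus and all vertices of its associated outer cycle lie in a single branch of that cut, so every $s$--$\vv_\aaa$ path with $\vv_\aaa \in \C_\m$ is forced to traverse $\vv_\ii$; once this is spelled out, the rest of the argument is a direct application of the reflection observation.
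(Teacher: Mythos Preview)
Your proposal is correct and follows essentially the same approach as the paper. The paper's proof simply observes that when $|\A(\C_\m)|\geq 3$ one can select two common vertices that are not antipodal and then invokes the argument of Lemma~\ref{evenacm} verbatim; your version picks three common vertices and lets the reflection argument identify a resolving one on the fly, but the underlying mechanism --- cut-vertex reduction to outer cycles plus the fact that on an even cycle only a vertex and its antipode share the same equidistant pairs --- is identical.
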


\begin{proof}
	Let $\G$ and $\C_{\m}$ be as given above. Since $\left|\A({\C_{\m}})\right|\geq 2$, there are at least two common vertices which are not antipodal. Let these vertices be $\vv_{\ii},\vv_{\jj}$. Arguing along the same lines as in Lemma \ref{evenacm}, we see that the vertices of $\C_{\m}$ are fault tolerantly resolved by the union of fault tolerant metric generators of corresponding outer cycles of $\vv_{\ii}$ and $\vv_{\jj}$. This gives us the result.
\end{proof}

\begin{Lemma}\label{pathspm}
	Let $\G$ be a cactus graph without leaves, and let $\PP_{\n}$ be a path joining two cycles in $\G$, then $\left|\SSS_{\PP_{\m}}\right|=0$.
\end{Lemma}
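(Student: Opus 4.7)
The plan is to mimic the structure of the preceding lemmas for inner cycles: identify two ``anchor'' collections of vertices that are already guaranteed to lie in $\W$ (namely the fault tolerant metric generators of the two cycles at either end of $\PP_{\n}$, or, when those are themselves inner, the generators of their corresponding outer cycles), then show that these anchors FT-resolve every internal path vertex without ever requiring a vertex of $\PP_{\n}$ itself to be in $\W$. Concretely, let $\PP_{\n}$ join the cycles $\C_{\p}$ and $\C_{\q}$ at common vertices $\uu \in \C_{\p}$ and $\ww \in \C_{\q}$, and label the internal vertices $\p_{1},\p_{2},\ldots,\p_{\n-1}$ with $\dd(\uu,\p_{\ii})=\ii$ and $\dd(\ww,\p_{\ii})=\n-\ii$.

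The key distance identity, forced by the fact that $\G$ is a cactus and $\PP_{\n}$ is a bridge-path between $\C_{\p}$ and $\C_{\q}$, is that every shortest path from $\p_{\ii}$ to a vertex outside $\PP_{\n}$ passes through $\uu$ or through $\ww$. Therefore, for every $\yy$ lying on the $\C_{\p}$-side we have $\dd(\p_{\ii},\yy)=\ii+\dd(\uu,\yy)$, and for every $\zz$ on the $\C_{\q}$-side, $\dd(\p_{\ii},\zz)=(\n-\ii)+\dd(\ww,\zz)$. From this, any two distinct internal path vertices $\p_{\ii},\p_{\jj}$ are separated by \emph{every} single anchor vertex on either side, because $\dd(\p_{\ii},\yy)-\dd(\p_{\jj},\yy)=\ii-\jj\neq 0$. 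By Lemmas \ref{outercycle2}, \ref{oddcycleincactiresult}, \ref{evenacm}, \ref{eveninner2}, \ref{evenmorethan2}, each of the two outer cycles traced from $\uu$ and from $\ww$ already contributes two vertices to $\W$, so even after deleting any one element of $\W$ there remains an anchor on the $\C_{\p}$-side and an anchor on the $\C_{\q}$-side, which yields fault tolerance for pairs of path vertices.

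Next I would handle the cross case: a path vertex $\p_{\ii}$ versus a non-path vertex $\vv$. If $\vv$ sits on the $\C_{\p}$-side, then using any anchor $\zz$ on the $\C_{\q}$-side gives $\dd(\p_{\ii},\zz)=(\n-\ii)+\dd(\ww,\zz)$ while $\dd(\vv,\zz)=\dd(\vv,\uu)+\n+\dd(\ww,\zz)$, so the offsets differ by a positive constant and $\zz$ resolves the pair; the symmetric argument uses a $\C_{\p}$-side anchor when $\vv$ is on the $\C_{\q}$-side. Fault tolerance is again immediate because the two sides contribute disjoint anchor pools of size $\geq 2$ each.

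The main obstacle I anticipate is a bookkeeping one rather than a conceptual one: one must verify that the ``linear'' separation above is not nullified by collisions coming from the cycle structure at $\uu$ or $\ww$ (in particular, pairs $\vv_{\aaa},\vv_{\bb}$ inside $\C_{\p}$ that are equidistant from $\uu$ and happen to share distances to some path vertex). These collisions are precisely the ones already handled by the earlier lemmas, which guarantee that $\SSS_{\C_{\p}}$ and $\SSS_{\C_{\q}}$ together resolve every vertex of $\C_{\p}\cup\C_{\q}$ fault tolerantly; invoking those lemmas therefore closes the argument and yields $|\SSS_{\PP_{\n}}|=0$.
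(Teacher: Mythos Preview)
Your proposal is correct and follows the paper's approach exactly: locate outer cycles corresponding to the two endpoints of $\PP_{\n}$, invoke Lemma~\ref{outercycle2} to secure two anchors in $\W$ on each side, and observe that these fault-tolerantly resolve all internal path vertices (your explicit identity $\dd(\p_{\ii},\yy)=\ii+\dd(\uu,\yy)$ merely spells out what the paper leaves as ``we can now easily conclude''). One minor overstatement to tighten: your blanket claim that every shortest path from $\p_{\ii}$ to a vertex outside $\PP_{\n}$ passes through $\uu$ or $\ww$ fails when an interior vertex of $\PP_{\n}$ has degree $\geq 3$ in $\G$ (the paper's second path type in Figure~\ref{commonpathfigure}), but since you only apply it to anchors lying in the outer cycles traced from $\uu$ and $\ww$---where the bridge structure does force the path through $\uu$ or $\ww$---the argument is unaffected.
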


\begin{proof}
	Let $\G$ be a leafless cactus graph and $\PP_{\n}$ be a path joining two cycles in $\G$. The path $\PP_{\n}$ can be of the two types shown in the figure \ref{commonpathfigure}.
		\begin{figure}[h]
		\begin{center}
			\tikzset{every picture/.style={line width=0.75pt}} 
			\resizebox{13cm}{3.6cm}{%
				\begin{minipage}[t]{1.0\textwidth}
					\centering				
					\begin{tikzpicture}[x=0.75pt,y=0.75pt,yscale=-1,xscale=1]
						
						\draw  [fill={rgb, 255:red, 0; green, 0; blue, 0 }  ,fill opacity=1 ] (258.08,135.5) .. controls (258.08,132.65) and (260.28,130.34) .. (263,130.34) .. controls (265.72,130.34) and (267.92,132.65) .. (267.92,135.5) .. controls (267.92,138.35) and (265.72,140.66) .. (263,140.66) .. controls (260.28,140.66) and (258.08,138.35) .. (258.08,135.5) -- cycle ;
						\draw  [fill={rgb, 255:red, 0; green, 0; blue, 0 }  ,fill opacity=1 ] (374.98,134.61) .. controls (374.98,131.76) and (377.18,129.45) .. (379.9,129.45) .. controls (382.62,129.45) and (384.82,131.76) .. (384.82,134.61) .. controls (384.82,137.46) and (382.62,139.77) .. (379.9,139.77) .. controls (377.18,139.77) and (374.98,137.46) .. (374.98,134.61) -- cycle ;
						\draw  [dash pattern={on 4.5pt off 4.5pt}] (156,135.5) .. controls (156,105.95) and (179.95,82) .. (209.5,82) .. controls (239.05,82) and (263,105.95) .. (263,135.5) .. controls (263,165.05) and (239.05,189) .. (209.5,189) .. controls (179.95,189) and (156,165.05) .. (156,135.5) -- cycle ;
						\draw  [dash pattern={on 4.5pt off 4.5pt}] (428.87,201.6) .. controls (401.82,201.6) and (379.89,172.07) .. (379.89,135.65) .. controls (379.89,99.23) and (401.82,69.7) .. (428.87,69.7) .. controls (455.93,69.7) and (477.86,99.23) .. (477.86,135.65) .. controls (477.86,172.07) and (455.93,201.6) .. (428.87,201.6) -- cycle ;
						\draw    (263,135.5) -- (379,135.5) ;
						\draw   (136,100.5) .. controls (147.05,100.5) and (156,116.17) .. (156,135.5) .. controls (156,154.83) and (147.05,170.5) .. (136,170.5) .. controls (124.95,170.5) and (116,154.83) .. (116,135.5) .. controls (116,116.17) and (124.95,100.5) .. (136,100.5) -- cycle ;
						\draw  [fill={rgb, 255:red, 0; green, 0; blue, 0 }  ,fill opacity=1 ] (151.08,135.5) .. controls (151.08,132.65) and (153.28,130.34) .. (156,130.34) .. controls (158.72,130.34) and (160.92,132.65) .. (160.92,135.5) .. controls (160.92,138.35) and (158.72,140.66) .. (156,140.66) .. controls (153.28,140.66) and (151.08,138.35) .. (151.08,135.5) -- cycle ;
						\draw   (477.86,135.65) .. controls (477.86,121.84) and (489.05,110.65) .. (502.86,110.65) .. controls (516.67,110.65) and (527.86,121.84) .. (527.86,135.65) .. controls (527.86,149.46) and (516.67,160.65) .. (502.86,160.65) .. controls (489.05,160.65) and (477.86,149.46) .. (477.86,135.65) -- cycle ;
						\draw  [fill={rgb, 255:red, 0; green, 0; blue, 0 }  ,fill opacity=1 ] (472.94,135.65) .. controls (472.94,132.8) and (475.14,130.49) .. (477.86,130.49) .. controls (480.58,130.49) and (482.78,132.8) .. (482.78,135.65) .. controls (482.78,138.5) and (480.58,140.81) .. (477.86,140.81) .. controls (475.14,140.81) and (472.94,138.5) .. (472.94,135.65) -- cycle ;
						
						\draw (124,123.4) node [anchor=north west][inner sep=0.75pt]    {$\C_{\p}$};
						\draw (494,123.4) node [anchor=north west][inner sep=0.75pt]    {$\C_{\q}$};
						\draw (311,109.4) node [anchor=north west][inner sep=0.75pt]    {$\PP_{\n}$};
						
					\end{tikzpicture}

				\end{minipage}
				\begin{minipage}[t]{1.1\textwidth}
					\centering
					\begin{tikzpicture}[x=0.75pt,y=0.75pt,yscale=-1,xscale=1]
						
						\draw  [fill={rgb, 255:red, 0; green, 0; blue, 0 }  ,fill opacity=1 ] (270.08,195.5) .. controls (270.08,192.65) and (272.28,190.34) .. (275,190.34) .. controls (277.72,190.34) and (279.92,192.65) .. (279.92,195.5) .. controls (279.92,198.35) and (277.72,200.66) .. (275,200.66) .. controls (272.28,200.66) and (270.08,198.35) .. (270.08,195.5) -- cycle ;
						\draw  [fill={rgb, 255:red, 0; green, 0; blue, 0 }  ,fill opacity=1 ] (386.98,194.61) .. controls (386.98,191.76) and (389.18,189.45) .. (391.9,189.45) .. controls (394.62,189.45) and (396.82,191.76) .. (396.82,194.61) .. controls (396.82,197.46) and (394.62,199.77) .. (391.9,199.77) .. controls (389.18,199.77) and (386.98,197.46) .. (386.98,194.61) -- cycle ;
						\draw  [dash pattern={on 4.5pt off 4.5pt}] (168,195.5) .. controls (168,165.95) and (191.95,142) .. (221.5,142) .. controls (251.05,142) and (275,165.95) .. (275,195.5) .. controls (275,225.05) and (251.05,249) .. (221.5,249) .. controls (191.95,249) and (168,225.05) .. (168,195.5) -- cycle ;
						\draw  [dash pattern={on 4.5pt off 4.5pt}] (405.75,275.58) .. controls (381.62,263.34) and (375.42,227.09) .. (391.9,194.61) .. controls (408.38,162.13) and (441.29,145.72) .. (465.42,157.96) .. controls (489.55,170.2) and (495.75,206.45) .. (479.27,238.94) .. controls (462.79,271.42) and (429.88,287.82) .. (405.75,275.58) -- cycle ;
						\draw    (275,195.5) -- (333,195.5) ;
						\draw   (148,160.5) .. controls (159.05,160.5) and (168,176.17) .. (168,195.5) .. controls (168,214.83) and (159.05,230.5) .. (148,230.5) .. controls (136.95,230.5) and (128,214.83) .. (128,195.5) .. controls (128,176.17) and (136.95,160.5) .. (148,160.5) -- cycle ;
						\draw  [fill={rgb, 255:red, 0; green, 0; blue, 0 }  ,fill opacity=1 ] (328.08,195.5) .. controls (328.08,192.65) and (330.28,190.34) .. (333,190.34) .. controls (335.72,190.34) and (337.92,192.65) .. (337.92,195.5) .. controls (337.92,198.35) and (335.72,200.66) .. (333,200.66) .. controls (330.28,200.66) and (328.08,198.35) .. (328.08,195.5) -- cycle ;
						\draw   (307.33,29) .. controls (307.33,15.19) and (318.53,4) .. (332.33,4) .. controls (346.14,4) and (357.33,15.19) .. (357.33,29) .. controls (357.33,42.81) and (346.14,54) .. (332.33,54) .. controls (318.53,54) and (307.33,42.81) .. (307.33,29) -- cycle ;
						\draw  [fill={rgb, 255:red, 0; green, 0; blue, 0 }  ,fill opacity=1 ] (327.41,54) .. controls (327.41,51.15) and (329.61,48.84) .. (332.33,48.84) .. controls (335.05,48.84) and (337.26,51.15) .. (337.26,54) .. controls (337.26,56.85) and (335.05,59.16) .. (332.33,59.16) .. controls (329.61,59.16) and (327.41,56.85) .. (327.41,54) -- cycle ;
						\draw  [dash pattern={on 0.84pt off 2.51pt}]  (333,195.5) -- (391.9,194.61) ;
						\draw    (332.33,129) -- (333,195.5) ;
						\draw  [dash pattern={on 4.5pt off 4.5pt}] (252.67,91.5) .. controls (252.67,70.79) and (288.33,54) .. (332.33,54) .. controls (376.33,54) and (412,70.79) .. (412,91.5) .. controls (412,112.21) and (376.33,129) .. (332.33,129) .. controls (288.33,129) and (252.67,112.21) .. (252.67,91.5) -- cycle ;
						\draw  [fill={rgb, 255:red, 0; green, 0; blue, 0 }  ,fill opacity=1 ] (327.41,129) .. controls (327.41,126.15) and (329.61,123.84) .. (332.33,123.84) .. controls (335.05,123.84) and (337.26,126.15) .. (337.26,129) .. controls (337.26,131.85) and (335.05,134.16) .. (332.33,134.16) .. controls (329.61,134.16) and (327.41,131.85) .. (327.41,129) -- cycle ;
						
						\draw (136,183.4) node [anchor=north west][inner sep=0.75pt]    {$\C_{\p}$};
						\draw (324,19.4) node [anchor=north west][inner sep=0.75pt]    {$\C_{\q}$};
						\draw (301,152.4) node [anchor=north west][inner sep=0.75pt]    {$\PP_{\n}$};
						
					\end{tikzpicture}
					
				\end{minipage}
			}
			\caption{Types of Inner Paths in a Leafless Cactus Graph.}
			\label{commonpathfigure}
		\end{center}
	\end{figure}
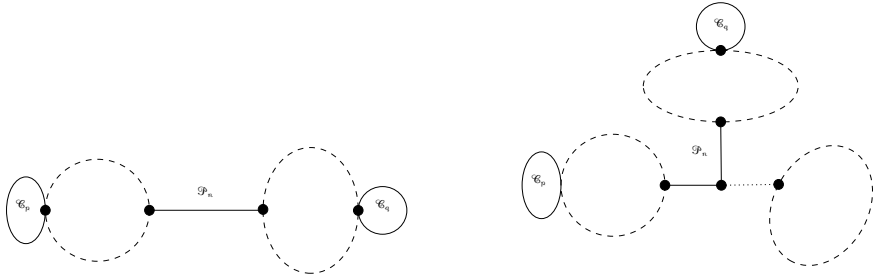
	We see from the figure that in both types, there are always at least two outer cycles, say $\C_{\p}$ and $\C_{\q}$, corresponding to the end points of the path $\PP_{\n}$. By Lemma \ref{outercycle2}, both these cycles contribute two vertices to the fault tolerant metric dimension of $\G$. We can now easily conclude that all the vertices of $\PP_{\n}$ are fault tolerantly resolved by the union of fault tolerant metric generators of $\C_{\p}$ and $\C_{\q}$, and hence $\left|\SSS_{\PP_{\m}}\right|=0$.
\end{proof}

Combining all these results together, we can now easily formulate the fault tolerant metric dimension of cacti graphs without leaves.

\begin{Theorem}\label{finalresult}
	Let $\G$ be a leafless cactus graph with $\n_1$ outer cycles. Let $\n_2$ number of even inner cycles have only two antipodal common vertices, then $\beta'(\G)=2(\n_1+\n_2)$.
\end{Theorem}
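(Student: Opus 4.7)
The plan is to decompose $\V(\G)$ as the union of all cycles and inner paths of $\G$, and then invoke the contribution lemmas established in Lemmas \ref{outercycle2}, \ref{oddcycleincactiresult}, \ref{evenacm}, \ref{eveninner2}, \ref{evenmorethan2}, and \ref{pathspm} to count exactly how many vertices each component contributes to a fault tolerant metric basis. Since each vertex of $\G$ belongs either to a cycle or to an inner path, and since $\SSS_{\G'/\G}$ for an induced subgraph $\G'$ was defined precisely as $\W \cap \V(\G')$, we have
\[
\beta'(\G) \;=\; |\W| \;=\; \sum_{\C_{\ii} \text{ cycle}} |\SSS_{\C_{\ii}}| \;+\; \sum_{\PP_{\jj} \text{ inner path}} |\SSS_{\PP_{\jj}}|,
\]
where the union defining $\W$ is disjoint because distinct cycles of a cactus share at most a single vertex and those shared vertices are handled through the common-vertex bookkeeping in the lemmas.

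Next, I would substitute the values from each lemma. Outer cycles contribute $2$ each by Lemma \ref{outercycle2}, giving $2\n_1$. Even inner cycles with exactly two antipodal common vertices contribute $2$ each by Lemma \ref{eveninner2}, giving $2\n_2$. All remaining components contribute $0$: odd inner cycles by Lemma \ref{oddcycleincactiresult}, even inner cycles with two non-antipodal common vertices by Lemma \ref{evenacm}, even inner cycles with three or more common vertices by Lemma \ref{evenmorethan2}, and inner paths by Lemma \ref{pathspm}. Summing yields $\beta'(\G) \leq 2(\n_1 + \n_2)$.

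For the matching lower bound, I would argue that the individual lower bounds $|\SSS_{\C_{\ii}}| \geq 2$ established for outer cycles and for antipodal-common-vertex even inner cycles already force at least $2(\n_1+\n_2)$ vertices into $\W$: the sets $\SSS_{\C_{\ii}}$ for these particular cycles are \emph{pairwise disjoint} because each outer cycle has only one common vertex (which, by the lemma proofs, need not lie in $\SSS_{\C_{\ii}}$) and because in the antipodal case the contributing vertices $\vv'_{\ii},\vv'_{\jj}$ are interior non-common vertices of the cycle. Thus no vertex of $\G$ is double-counted across the sum, and $\beta'(\G) \geq 2(\n_1+\n_2)$.

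The main obstacle I anticipate is justifying that the union $\W = \bigcup_{\ii}\SSS_{\C_{\ii}}$ is genuinely a \emph{fault tolerant} resolving set for the whole graph $\G$, rather than only locally for each cycle. The subtle point is that after deleting a vertex $\ww \in \W$, one must still resolve pairs $\vv_{\aaa},\vv_{\bb}$ that lie in different cycles. I would handle this by showing that any shortest path between vertices in different components of the cycle decomposition must traverse a common vertex, so that distances from a remaining vertex in any outer cycle split additively through that common vertex; combined with Lemma \ref{outercycle2}'s observation that $\W\setminus\{\ww\}$ always retains at least one vertex from every outer cycle, this globalizes the local fault tolerance to $\G$. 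The other delicate point is ruling out the possibility that some clever set of size smaller than $2(\n_1+\n_2)$ might work by ``borrowing'' resolution from a distant cycle; this is precisely ruled out by the disjointness argument above together with the pigeonhole reasoning used in Lemma \ref{lemm3}.
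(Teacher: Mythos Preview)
Your proposal is correct and follows essentially the same approach as the paper: invoke Lemmas \ref{outercycle2}, \ref{oddcycleincactiresult}, \ref{evenacm}, \ref{eveninner2}, \ref{evenmorethan2}, and \ref{pathspm} to tally the contribution $|\SSS_{\C_{\ii}}|$ of each cycle and path, then add. The paper's own proof is in fact much terser than yours---it simply sums the contributions and declares the result---so your explicit separation into an upper bound (via the constructed $\W$) and a lower bound (via disjointness of the mandatory contributions), together with your flagging of the local-versus-global fault tolerance issue, is more careful than what the paper writes down.
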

\begin{proof}
	Let $\G$ be a cactus graph without leaves with $\n_1$ outer cycles, using Lemma \ref{outercycle2}, every outer cycle contributes two vertices to the fault tolerant metric basis and hence a total of $2\n_1$ vertices are contributed. Similarly, by Lemma \ref{eveninner2}, all even cycles with exactly two antipodal common vertices, contribute two vertices each to the fault tolerant metric basis. Now, by Lemmas \ref{oddcycleincactiresult},\ref{evenacm},\ref{evenmorethan2} and \ref{pathspm}, inner odd cycles, remaining inner even cycles and the paths joining two vertices don't contribute any vertex to the fault tolerant metric basis, since they are resolved by their corresponding outer cycles. Hence $\beta'(\G)=2(\n_1+\n_2)$.
\end{proof}

\section{Cacti Graphs in Supply Chain Logistics}

Let us consider a real world example of a supply chain logistics problem. This supply chain consists of multiple main warehouses, many more smaller warehouses obtaining supplies from the main warehouses or other smaller counterparts if necessary, and the retail outlets which get their supplies from the main or smaller warehouses. This example can be easily seen in real life in the Amazon Delivery Service, FedEx and DHL etc.

The delivery process is optimized in such a way that there is minimal backtracking to save precious resources and time. This gives rise to circular (cyclic) routes and a route is fully optimized if it does not visit any place more than once.

To effectively elaborate this example, let us consider that there are $3$ main warehouses, denoted by $\{\m_1, \m_2, \m_3\}$, $4$ smaller warehouses, denoted by $\{\s_1,\s_2,\s_3,\s_4\}$ and $14$ retail outlets, denoted by $\{\rr_1,\rr_2, \cdots, \rr_{14}\}$. We suppose that the layout of all these entities is as given in the figure \ref{cactiassupplychain}.

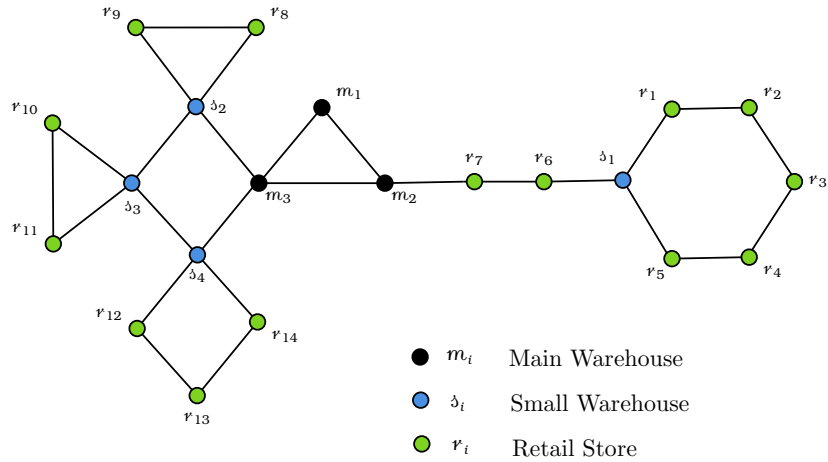
\begin{figure}[H]
	\begin{center}
		\tikzset{every picture/.style={line width=0.75pt}} 
		\resizebox{11cm}{6cm}{
			\begin{tikzpicture}[x=0.75pt,y=0.75pt,yscale=-1,xscale=1]

	\draw  [fill={rgb, 255:red, 0; green, 0; blue, 0 }  ,fill opacity=1 ] (238.83,133.63) .. controls (238.83,131.23) and (240.77,129.28) .. (243.17,129.28) .. controls (245.57,129.28) and (247.51,131.23) .. (247.51,133.63) .. controls (247.51,136.02) and (245.57,137.97) .. (243.17,137.97) .. controls (240.77,137.97) and (238.83,136.02) .. (238.83,133.63) -- cycle ;
	\draw  [fill={rgb, 255:red, 0; green, 0; blue, 0 }  ,fill opacity=1 ] (203.53,91.53) .. controls (203.53,89.13) and (205.47,87.19) .. (207.87,87.19) .. controls (210.27,87.19) and (212.21,89.13) .. (212.21,91.53) .. controls (212.21,93.93) and (210.27,95.87) .. (207.87,95.87) .. controls (205.47,95.87) and (203.53,93.93) .. (203.53,91.53) -- cycle ;
	\draw  [fill={rgb, 255:red, 0; green, 0; blue, 0 }  ,fill opacity=1 ] (168.23,133.63) .. controls (168.23,131.23) and (170.17,129.28) .. (172.57,129.28) .. controls (174.97,129.28) and (176.92,131.23) .. (176.92,133.63) .. controls (176.92,136.02) and (174.97,137.97) .. (172.57,137.97) .. controls (170.17,137.97) and (168.23,136.02) .. (168.23,133.63) -- cycle ;
	\draw    (207.87,91.53) -- (243.17,133.63) ;
	\draw    (446.13,91.53) -- (473.49,132.78) ;
	\draw    (376.42,131.94) -- (402.89,175.72) ;
	\draw    (473.49,132.78) -- (446.13,174.88) ;
	\draw    (446.13,174.88) -- (402.89,175.72) ;
	\draw    (402.89,92.37) -- (376.42,131.94) ;
	\draw    (446.13,91.53) -- (402.89,92.37) ;
	\draw  [fill={rgb, 255:red, 126; green, 211; blue, 33 }  ,fill opacity=1 ] (399.55,92.37) .. controls (399.55,89.97) and (401.49,88.03) .. (403.89,88.03) .. controls (406.29,88.03) and (408.24,89.97) .. (408.24,92.37) .. controls (408.24,94.77) and (406.29,96.72) .. (403.89,96.72) .. controls (401.49,96.72) and (399.55,94.77) .. (399.55,92.37) -- cycle ;
	\draw  [fill={rgb, 255:red, 126; green, 211; blue, 33 }  ,fill opacity=1 ] (442.79,91.53) .. controls (442.79,89.13) and (444.73,87.19) .. (447.13,87.19) .. controls (449.53,87.19) and (451.48,89.13) .. (451.48,91.53) .. controls (451.48,93.93) and (449.53,95.87) .. (447.13,95.87) .. controls (444.73,95.87) and (442.79,93.93) .. (442.79,91.53) -- cycle ;
	\draw  [fill={rgb, 255:red, 126; green, 211; blue, 33 }  ,fill opacity=1 ] (468.15,132.78) .. controls (468.15,130.39) and (470.09,128.44) .. (472.49,128.44) .. controls (474.89,128.44) and (476.83,130.39) .. (476.83,132.78) .. controls (476.83,135.18) and (474.89,137.13) .. (472.49,137.13) .. controls (470.09,137.13) and (468.15,135.18) .. (468.15,132.78) -- cycle ;
	\draw  [fill={rgb, 255:red, 126; green, 211; blue, 33 }  ,fill opacity=1 ] (399.55,175.72) .. controls (399.55,173.32) and (401.49,171.38) .. (403.89,171.38) .. controls (406.29,171.38) and (408.24,173.32) .. (408.24,175.72) .. controls (408.24,178.12) and (406.29,180.06) .. (403.89,180.06) .. controls (401.49,180.06) and (399.55,178.12) .. (399.55,175.72) -- cycle ;
	\draw  [fill={rgb, 255:red, 126; green, 211; blue, 33 }  ,fill opacity=1 ] (442.79,174.88) .. controls (442.79,172.48) and (444.73,170.54) .. (447.13,170.54) .. controls (449.53,170.54) and (451.48,172.48) .. (451.48,174.88) .. controls (451.48,177.28) and (449.53,179.22) .. (447.13,179.22) .. controls (444.73,179.22) and (442.79,177.28) .. (442.79,174.88) -- cycle ;
	\draw    (287.29,132.78) -- (333.18,132.78) ;
	\draw    (333.18,132.78) -- (376.42,131.94) ;
	\draw  [fill={rgb, 255:red, 126; green, 211; blue, 33 }  ,fill opacity=1 ] (327.84,132.78) .. controls (327.84,130.39) and (329.78,128.44) .. (332.18,128.44) .. controls (334.58,128.44) and (336.52,130.39) .. (336.52,132.78) .. controls (336.52,135.18) and (334.58,137.13) .. (332.18,137.13) .. controls (329.78,137.13) and (327.84,135.18) .. (327.84,132.78) -- cycle ;
	\draw  [fill={rgb, 255:red, 74; green, 144; blue, 226 }  ,fill opacity=1 ] (372.08,131.94) .. controls (372.08,129.54) and (374.02,127.6) .. (376.42,127.6) .. controls (378.82,127.6) and (380.76,129.54) .. (380.76,131.94) .. controls (380.76,134.34) and (378.82,136.29) .. (376.42,136.29) .. controls (374.02,136.29) and (372.08,134.34) .. (372.08,131.94) -- cycle ;
	\draw    (243.17,133.63) -- (287.29,132.78) ;
	\draw  [fill={rgb, 255:red, 126; green, 211; blue, 33 }  ,fill opacity=1 ] (288.95,132.78) .. controls (288.95,130.39) and (290.89,128.44) .. (293.29,128.44) .. controls (295.69,128.44) and (297.64,130.39) .. (297.64,132.78) .. controls (297.64,135.18) and (295.69,137.13) .. (293.29,137.13) .. controls (290.89,137.13) and (288.95,135.18) .. (288.95,132.78) -- cycle ;
	\draw    (172.57,133.63) -- (243.17,133.63) ;
	\draw    (172.57,133.63) -- (207.87,91.53) ;
	\draw    (138.2,173.59) -- (171.98,132.53) ;
	\draw    (101.39,133.56) -- (136.93,90.25) ;
	\draw    (138.2,173.59) -- (101.39,133.56) ;
	\draw    (171.98,132.53) -- (136.93,90.25) ;
	\draw    (101.39,133.56) -- (57.05,100.12) ;
	\draw    (57.05,100.12) -- (57.48,167.47) ;
	\draw  [fill={rgb, 255:red, 126; green, 211; blue, 33 }  ,fill opacity=1 ] (52.7,100.12) .. controls (52.7,97.72) and (54.65,95.77) .. (57.05,95.77) .. controls (59.44,95.77) and (61.39,97.72) .. (61.39,100.12) .. controls (61.39,102.52) and (59.44,104.46) .. (57.05,104.46) .. controls (54.65,104.46) and (52.7,102.52) .. (52.7,100.12) -- cycle ;
	\draw    (57.48,167.47) -- (101.39,133.56) ;
	\draw  [fill={rgb, 255:red, 126; green, 211; blue, 33 }  ,fill opacity=1 ] (53.14,167.47) .. controls (53.14,165.07) and (55.08,163.13) .. (57.48,163.13) .. controls (59.88,163.13) and (61.82,165.07) .. (61.82,167.47) .. controls (61.82,169.87) and (59.88,171.81) .. (57.48,171.81) .. controls (55.08,171.81) and (53.14,169.87) .. (53.14,167.47) -- cycle ;
	\draw  [fill={rgb, 255:red, 74; green, 144; blue, 226 }  ,fill opacity=1 ] (97.05,133.56) .. controls (97.05,131.16) and (98.99,129.22) .. (101.39,129.22) .. controls (103.79,129.22) and (105.74,131.16) .. (105.74,133.56) .. controls (105.74,135.96) and (103.79,137.91) .. (101.39,137.91) .. controls (98.99,137.91) and (97.05,135.96) .. (97.05,133.56) -- cycle ;
	\draw    (137.32,90.97) -- (171.02,46.81) ;
	\draw    (171.02,46.81) -- (103.67,46.86) ;
	\draw    (103.67,46.86) -- (137.32,90.97) ;
	\draw  [fill={rgb, 255:red, 74; green, 144; blue, 226 }  ,fill opacity=1 ] (132.98,90.97) .. controls (132.98,88.57) and (134.92,86.63) .. (137.32,86.63) .. controls (139.72,86.63) and (141.67,88.57) .. (141.67,90.97) .. controls (141.67,93.37) and (139.72,95.31) .. (137.32,95.31) .. controls (134.92,95.31) and (132.98,93.37) .. (132.98,90.97) -- cycle ;
	\draw  [fill={rgb, 255:red, 126; green, 211; blue, 33 }  ,fill opacity=1 ] (99.32,46.86) .. controls (99.32,44.46) and (101.27,42.52) .. (103.67,42.52) .. controls (106.07,42.52) and (108.01,44.46) .. (108.01,46.86) .. controls (108.01,49.26) and (106.07,51.21) .. (103.67,51.21) .. controls (101.27,51.21) and (99.32,49.26) .. (99.32,46.86) -- cycle ;
	\draw  [fill={rgb, 255:red, 126; green, 211; blue, 33 }  ,fill opacity=1 ] (166.68,46.81) .. controls (166.68,44.42) and (168.63,42.47) .. (171.02,42.47) .. controls (173.42,42.47) and (175.37,44.42) .. (175.37,46.81) .. controls (175.37,49.21) and (173.42,51.16) .. (171.02,51.16) .. controls (168.63,51.16) and (166.68,49.21) .. (166.68,46.81) -- cycle ;
	\draw    (138.2,173.59) -- (171.83,211) ;
	\draw    (104.41,214.65) -- (138.2,173.59) ;
	\draw    (104.41,214.65) -- (138.05,252.06) ;
	\draw    (138.05,252.06) -- (171.83,211) ;
	\draw  [fill={rgb, 255:red, 74; green, 144; blue, 226 }  ,fill opacity=1 ] (133.85,173.59) .. controls (133.85,171.19) and (135.8,169.25) .. (138.2,169.25) .. controls (140.6,169.25) and (142.54,171.19) .. (142.54,173.59) .. controls (142.54,175.99) and (140.6,177.93) .. (138.2,177.93) .. controls (135.8,177.93) and (133.85,175.99) .. (133.85,173.59) -- cycle ;
	\draw  [fill={rgb, 255:red, 126; green, 211; blue, 33 }  ,fill opacity=1 ] (100.07,214.65) .. controls (100.07,212.25) and (102.01,210.31) .. (104.41,210.31) .. controls (106.81,210.31) and (108.76,212.25) .. (108.76,214.65) .. controls (108.76,217.05) and (106.81,218.99) .. (104.41,218.99) .. controls (102.01,218.99) and (100.07,217.05) .. (100.07,214.65) -- cycle ;
	\draw  [fill={rgb, 255:red, 126; green, 211; blue, 33 }  ,fill opacity=1 ] (167.49,211) .. controls (167.49,208.6) and (169.43,206.66) .. (171.83,206.66) .. controls (174.23,206.66) and (176.18,208.6) .. (176.18,211) .. controls (176.18,213.4) and (174.23,215.34) .. (171.83,215.34) .. controls (169.43,215.34) and (167.49,213.4) .. (167.49,211) -- cycle ;
	\draw  [fill={rgb, 255:red, 126; green, 211; blue, 33 }  ,fill opacity=1 ] (133.71,252.06) .. controls (133.71,249.66) and (135.65,247.72) .. (138.05,247.72) .. controls (140.45,247.72) and (142.39,249.66) .. (142.39,252.06) .. controls (142.39,254.46) and (140.45,256.4) .. (138.05,256.4) .. controls (135.65,256.4) and (133.71,254.46) .. (133.71,252.06) -- cycle ;
	\draw  [fill={rgb, 255:red, 0; green, 0; blue, 0 }  ,fill opacity=1 ] (258.53,230.53) .. controls (258.53,228.13) and (260.47,226.19) .. (262.87,226.19) .. controls (265.27,226.19) and (267.21,228.13) .. (267.21,230.53) .. controls (267.21,232.93) and (265.27,234.87) .. (262.87,234.87) .. controls (260.47,234.87) and (258.53,232.93) .. (258.53,230.53) -- cycle ;
	\draw  [fill={rgb, 255:red, 74; green, 144; blue, 226 }  ,fill opacity=1 ] (258.85,254.59) .. controls (258.85,252.19) and (260.8,250.25) .. (263.2,250.25) .. controls (265.6,250.25) and (267.54,252.19) .. (267.54,254.59) .. controls (267.54,256.99) and (265.6,258.93) .. (263.2,258.93) .. controls (260.8,258.93) and (258.85,256.99) .. (258.85,254.59) -- cycle ;
	\draw  [fill={rgb, 255:red, 126; green, 211; blue, 33 }  ,fill opacity=1 ] (259.49,279) .. controls (259.49,276.6) and (261.43,274.66) .. (263.83,274.66) .. controls (266.23,274.66) and (268.18,276.6) .. (268.18,279) .. controls (268.18,281.4) and (266.23,283.34) .. (263.83,283.34) .. controls (261.43,283.34) and (259.49,281.4) .. (259.49,279) -- cycle ;
	
	\draw (214.21,88.13) node [anchor=south west] [inner sep=0.75pt]  [font=\footnotesize]  {$\m_{1}$};
	\draw (245.17,137.03) node [anchor=north west][inner sep=0.75pt]  [font=\footnotesize]  {$\m_{2}$};
	\draw (174.57,137.03) node [anchor=north west][inner sep=0.75pt]  [font=\footnotesize]  {$\m_{3}$};
	\draw (374.42,124.2) node [anchor=south east] [inner sep=0.75pt]  [font=\footnotesize]  {$\s_{1}$};
	\draw (143.67,90.97) node [anchor=west] [inner sep=0.75pt]  [font=\footnotesize]  {$\s_{2}$};
	\draw (95.67,146.97) node [anchor=west] [inner sep=0.75pt]  [font=\footnotesize]  {$\s_{3}$};
	\draw (131.67,184.97) node [anchor=west] [inner sep=0.75pt]  [font=\footnotesize]  {$\s_{4}$};
	\draw (397.55,88.97) node [anchor=south east] [inner sep=0.75pt]  [font=\footnotesize]  {$\rr_{1}$};
	\draw (453.48,88.13) node [anchor=south west] [inner sep=0.75pt]  [font=\footnotesize]  {$\rr_{2}$};
	\draw (478.83,132.78) node [anchor=west] [inner sep=0.75pt]  [font=\footnotesize]  {$\rr_{3}$};
	\draw (453.48,178.28) node [anchor=north west][inner sep=0.75pt]  [font=\footnotesize]  {$\rr_{4}$};
	\draw (401.89,179.12) node [anchor=north east] [inner sep=0.75pt]  [font=\footnotesize]  {$\rr_{5}$};
	\draw (332.18,125.04) node [anchor=south] [inner sep=0.75pt]  [font=\footnotesize]  {$\rr_{6}$};
	\draw (293.29,125.04) node [anchor=south] [inner sep=0.75pt]  [font=\footnotesize]  {$\rr_{7}$};
	\draw (177.37,43.41) node [anchor=south west] [inner sep=0.75pt]  [font=\footnotesize]  {$\rr_{8}$};
	\draw (97.32,43.46) node [anchor=south east] [inner sep=0.75pt]  [font=\footnotesize]  {$\rr_{9}$};
	\draw (50.7,96.72) node [anchor=south east] [inner sep=0.75pt]  [font=\footnotesize]  {$\rr_{10}$};
	\draw (51.14,164.07) node [anchor=south east] [inner sep=0.75pt]  [font=\footnotesize]  {$\rr_{11}$};
	\draw (98.07,211.25) node [anchor=south east] [inner sep=0.75pt]  [font=\footnotesize]  {$\rr_{12}$};
	\draw (138.05,259.8) node [anchor=north] [inner sep=0.75pt]  [font=\footnotesize]  {$\rr_{13}$};
	\draw (178.18,214.4) node [anchor=north west][inner sep=0.75pt]  [font=\footnotesize]  {$\rr_{14}$};
	\draw (276,225.4) node [anchor=north west][inner sep=0.75pt]    {$\m_{i}$};
	\draw (311,225) node [anchor=north west][inner sep=0.75pt]   [align=left] {Main Warehouse};
	\draw (278,250.4) node [anchor=north west][inner sep=0.75pt]    {$\s_{i}$};
	\draw (312,250) node [anchor=north west][inner sep=0.75pt]   [align=left] {Small Warehouse};
	\draw (279,275.4) node [anchor=north west][inner sep=0.75pt]    {$\rr_{i}$};
	\draw (313,275) node [anchor=north west][inner sep=0.75pt]   [align=left] {Retail Store};

\end{tikzpicture}
}
\caption{Supply Chain Network as Leafless Cactus Graph $\G$}
\label{cactiassupplychain}
\end{center}
\end{figure}

This figure illustrates that the three main warehouses are all interconnected, so that if a warehouse has a shortage, the other two can overcome that. The warehouse $\m_2$ not only serves goods to retail outlets $\rr_6,\rr_7$, but also delivers them to the smaller warehouse $\s_1$. On the other hand, the warehouse $\m_3$ serves three smaller warehouses $\s_2,\s_3,\s_4$, which in turn serve their corresponding retail outlets.

As stated earlier, the supply chain model is designed in such a way that there is minimum overlap, and most of the routes are cyclic.

\subsection{Application of Fault Tolerant Metric Basis in Supply Chain Logistics}

Let us now consider a scenario where a firm wants to replace the delivery workers with automated delivery vehicles and robots. There is a need to designate certain points in the graph as landmarks, which can then be referenced and used by the machines to know the precise location of all other points of interest, as well as their own position. These vehicles and robots must also communicate real time data about their location to one another to either mitigate or increase the chance of multiple coverage, depending on the actual supply and demand.

Metric basis is an effective way to achieve this goal for any graph \cite{khuller1996landmarks}. The problem arises when the landmark itself is out of commission, due to any reason, e.g., the road leading to the store is temporarily closed, or it is unavailable due to renovations etc. The corresponding vertex in this scenario can not be used for reference purposes.

This can be overcome with the introduction of fault tolerant metric basis. For the supply chain graph given in figure \ref{cactiassupplychain}, Theorem \ref{finalresult} implies that $\beta'(G)=2(4+0)=8$. We can actually calculate the fault tolerant metric basis using the results given in Section $4$. Let $\W=\{\rr_1,\rr_2,\rr_8,\rr_9,\rr_{10},\rr_{11},\rr_{12},\rr_{14}\}$ be the set of fault tolerant landmarks for this graph. This subset of landmarks describes the whole graph in terms of distances from $\W$, e.g., $\rr(\m_1|\W)=(5,6,3,3,4,4,3,3)$ and $\rr(\s_2|\W)=(6,7,1,1,2,2,3,3)$.

Let us now consider that the retail store $\rr_2$ is no longer available for referencing. If we again consider the representation of vertices $\m_1$ and $\s_2$, we see that $\rr(\m_1|\W)=(5,3,3,4,4,3,3)$ and $\rr(\s_2|\W)=(6,1,1,2,2,3,3)$. It is obviously clear that both representations are still distinct. In fact, this distinctness can be observed for all the vertices of the supply chain graph.

This example illustrates that the concept of fault tolerant metric basis can easily deal with disruptions in the supply chain network. The unavailability of a landmark does not hinder the effectiveness of automated vehicles and robots. They can still uniquely identify all the relevant points of interests (warehouses and stores), and communicate their unique positions to one another.

\section{Conclusion}
	In this article, we concluded that the fault tolerant metric dimension of Infinity and Kayak Paddle graphs is always constant. We showed that $\beta'(\G)=4$ for both these graphs. We then extended the results to leafless cacti class of connected graphs, and observed that fault tolerant metric dimension of these graphs depends on the number of outer cycles, as well as on the number of even inner cycles with exactly two antipodal vertices. In particular, we showed that $\beta'(\G)=2(\n_1+\n_2)$, where $\n_1$ is the total number of outer cycles of $\G$, while $\n_2$ is the number of even cycles with exactly two opposite common vertices. We then discussed a theoretical application of leafless cacti graph. We established that the concept of fault tolerant metric basis is an effective way to deal with failures in an automated supply chain logistics problem.

\bibliographystyle{ieeetr}
\bibliography{FTMDBicyclic}
\end{document}